\documentclass[10pt]{article}
\usepackage{amsfonts}
\usepackage{amsmath}
\usepackage{amssymb}
\usepackage{amsthm}
\usepackage{thmtools}
\usepackage[pdftex]{graphicx}
\usepackage[hmargin=1in,vmargin=1in]{geometry}
\usepackage{natbib}
\usepackage{setspace}
\usepackage{enumerate}
\usepackage{comment}
\usepackage[toc,title,titletoc,header]{appendix}
\usepackage[colorlinks,citecolor=blue]{hyperref}
\usepackage{ragged2e}
\usepackage{etoolbox} %
\bibliographystyle{ims}

\usepackage{wrapfig, array, booktabs, multirow, float, graphicx, tabularx, colortbl, hyperref, ulem, xcolor}
\usepackage[absolute,overlay]{textpos}
\usepackage{amsmath, amsthm, amssymb, setspace}
\usepackage{tikz}
\usetikzlibrary{shapes,arrows}

\def\qed{\rule{2mm}{2mm}}
\def\independent{\perp \!\!\! \perp}

\parskip = 1.5ex plus 0.5 ex minus0.2 ex

\usepackage[pagewise,mathlines]{lineno}
\synctex=1
\mathchardef\dash="2D

\newtheorem{theorem}{Theorem}[section]

\newtheorem{definition}{Definition}[section]

\theoremstyle{definition}
\newtheorem{example}{Example}[section]
\newtheorem{remark}{Remark}[section]
\newtheorem{assumption}{Assumption}[section]

\renewcommand\thmcontinues[1]{Continued}

\AtEndEnvironment{remark}{~\qed}
\AtEndEnvironment{example}{~\qed}

\newcommand{\pto}{\overset{P}{\rightarrow}}
\newcommand{\dto}{\overset{d}{\rightarrow}}

\newcommand{\Larrow}{\overset{L}{\rightarrow}}
\newcommand{\Parrow}{\overset{P}{\rightarrow}}

\newcommand{\I}{\mathrm{I}}
\newcommand{\E}{\mathrm{E}}

\newcommand{\Var}{\mathrm{Var}}

\begin{document}
\title{Randomization Inference: Theory and Applications\thanks{ We thank Juri Trifonov for helpful comments and suggestions. Ritzwoller gratefully acknowledges support from the National Science Foundation under the Graduate Research Fellowship. Computational support was provided by the Data, Analytics, and Research Computing (DARC) group at the Stanford Graduate School of Business (RRI:SCR\_022938).}}

\author{David M. Ritzwoller\\
Graduate School of Business\\
Stanford University\\
\url{ritzwoll@stanford.edu}
\and
Joseph P.\ Romano \\
Departments of Economics and Statistics\\
Stanford University\\
\url{romano@stanford.edu}
\and
Azeem M.\ Shaikh\\
Department of Economics\\
University of Chicago \\
\url{amshaikh@uchicago.edu}
}

\maketitle

\vspace{-0.3in}

\begin{spacing}{1.2}
\begin{abstract}
\normalsize{We review approaches to statistical inference based on randomization. Permutation tests  are treated as an important special case. Under a certain group invariance property, referred to as the ``randomization hypothesis,'' randomization tests achieve exact control of the Type I error rate in finite samples.  Although this unequivocal precision is very appealing, the range of problems that satisfy the randomization hypothesis is somewhat limited. We show that randomization tests are often asymptotically, or approximately, valid and efficient in settings that deviate from the conditions required for finite-sample error control. When randomization tests fail to offer even asymptotic Type 1 error control, their asymptotic validity may be restored by constructing an asymptotically pivotal test statistic. Randomization tests can then provide exact error control for tests of highly structured hypotheses with good performance in a wider class of problems. We give a detailed overview of several prominent applications of randomization tests, including two-sample permutation tests, regression, and conformal inference.}
\end{abstract}
\end{spacing}

\noindent \textbf{Keywords:}  Randomization Inference, Permutation Tests,   Two-Sample Testing, Conformal Inference\\
\noindent \textbf{JEL:} C01, C12, C14

\thispagestyle{empty} 
\newpage
\setcounter{page}{1}

\section{Introduction}

Randomization as a distinct approach to statistical inference dates back to at least \cite{fisher1935design}. See \cite{box1980ra} and \cite{david2008beginnings} for historical accounts. \cite{fisher1935design} considers a simple experiment. A colleague asserts that they can discern the difference between milk poured into tea and tea poured into milk. Skeptical, Fisher prepares eight cups of tea, with four cups poured in each way. These cups are presented to the colleague in a randomized order. Consider the null hypothesis that Fisher's colleague cannot discriminate between the two preparations, and so assigns them labels randomly. There are 70 ways of dividing the eight cups of tea into two groups of four. Under the null hypothesis, the probability that Fisher's colleague correctly groups the two types of tea is $1/70$. In modern terms, if Fisher's colleague correctly groups the two types of tea, then we can reject the null hypothesis with a $p$-value equal to $1/70$.  

Fisher's exposition of this simple experiment initiated a revolution in our view of the role of randomization in experimental design, statistical methodology, and science at large \citep{salsburg}. Further classical developments were given by \cite{pitman1937significance_1, pitman1937significance_2, pitman1938significance}. A more mathematical treatment of randomization inference and the construction of randomization tests was established by \cite{lehmann1949comments}, \cite{lehmann1949theory}, and \cite{hoeffding:1952}, among others. Today, randomization tests remain an active field for methodological research, and are widely used throughout basic and applied science.

This article gives an exposition and selective review of the modern perspective on randomization tests for both applied economists and econometric theorists. Part of the allure of randomization tests is that, for some problems, they can offer exact control of the Type 1 error rate, without resorting to parametric assumptions or relying on asymptotics. As most modern econometric methods rely on asymptotics, the exactness offered by randomization inference can be comforting, if not liberating. This sentiment, however, should be regulated by some important caveats. First, the specific settings where one can apply randomization tests to achieve exact, finite-sample error control are somewhat limited. In particular, exact error control is obtained only under a condition referred to as the``randomization hypothesis."   Second, even when the randomization hypothesis holds, and so exact error control is achieved, it is generally only through asymptotics that issues like efficiency, power, and sample size requirements can be well-understood.  Finally, the use of randomization tests when the randomization hypothesis fails may lead to a lack of Type 1 error control, even in large samples. However, it will be argued that if a randomization test is based on an appropriate choice of test statistic, meaning one that is asymptotically pivotal under the null hypothesis, then consistency of the randomization test can be restored while preserving the exactness of the test when the randomization hypothesis holds.

Thus,  in problems that do not meet the requirements needed for exact inference, suitably constructed randomization tests control the Type 1 error rate asymptotically. One goal of this paper is to present the intuition for such a claim.  In principle, then, randomization tests balance exact error control for tests of highly structured hypotheses with good performance in a wider class of problems. In practice, care must be taken to meet these dual goals. Achieving this balance is the main emphasis of this article.

The methodological literature on randomization tests is extensive. We provide a limited review throughout. Several book length treatments include \cite{edgington1995randomization}, \cite{pesarin2001multivariate}, \cite{good2005permutation}, \cite{solari2009permutation}, \cite{pesarin2010permutation}, \cite{salmaso2011permutation}, and \cite{berry2019primer}. \cite{kennedy1995randomization} gives a review targeted to an econometric audience. For the most part, our discussion will center on the validity of approaches based on randomization inference for testing various null hypotheses. Classical analyses of the efficiency of randomization tests include \cite{albers1976asymptotic}, \cite{bickel2011asymptotic}, and \cite{romano:1989}. More recent treatments include \cite{berrett2021optimal}, \cite{kim2022minimax}, and \cite{dobriban2022consistency}. Our paper is similar in notation and spirit to the introductory treatment in \cite{lehmann:romano:tsh:2022}, Chapter 17. Although the presentation here does not include many proofs, we aim to provide the basic intuition for statements of results.

The applied literature that makes use of randomization tests is similarly large. For example, roughly one-third of papers published in the journal {\it Experimental Economics} in 2009 made use of the two-sample Wilcoxon test \citep{chung2016asymptotically}.\footnote{Similarly, \cite{okeh2009statistical} samples a collection of medical studies and finds that roughly one-third  make use of the two-sample Wilcoxon test.} \cite{young2019channeling} re-analyses 53 experiments from leading economics journals and finds that the results of methods based on randomization inference are more robust than parametric alternatives. More generally, permutation and randomization tests have been applied to many diverse fields, such as brain imaging \citep{maris2007nonparametric}, biology \citep{blackford2009detecting}, and genomics \citep{stranger2007population}. 

We discuss several widely encountered applications of randomization inference in applied econometrics. Further applications include the methods considered by \cite{ganong2018permutation}, \cite{rambachan2020design}, \cite{chung2021permutation}, and \cite{borusyak2023nonrandom}, among many others. For reasons of space, we do not give a dedicated treatment of randomization tests of conditional independence. See, e.g., \cite{doran2014permutation}, \cite{shah2020hardness}, \cite{berrett2020conditional}, \cite{kim2022minimax}, and \cite{kim2023conditional} for several recent contributions to the literature on this problem.

We will soon define precisely what we mean by randomization inference or a randomization test.   Indeed, the terms have been used somewhat ambiguously in the literature; see \cite{zhang2023randomization} for discussion. At a high level, randomization tests can be conducted when, under the null hypothesis, there are transformations of the data that preserve the distribution of the data. If this holds, then a null  distribution may be obtained by computing the test statistic over all transformed data sets. This meaning generalizes the familiar two-sample permutation test; i.e., the setting in which two independent i.i.d.\ samples are observed and one is interested in testing the null hypothesis that both samples come from the same distribution. If the null holds, then the true null distribution of {\it any} test statistic is the same as that of the statistic computed over a randomly permuted data set. This argument will be made precise in Section \ref{section:basics}.

\section{The Basics of Randomization Inference}\label{section:basics} 

\subsection{The Randomization Hypothesis}

The observed data $X$ are valued in a sample space ${\cal X}$. The unknown probability mechanism generating $X$  is $P$ and belongs to some set $\Omega$.  As we will see, this setup applies to the usual super-population setting as well as finite population settings. The problem is to test null hypothesis $H_0:~P \in \Omega_0 \subset \Omega$. The following assumption is called the {\it randomization hypothesis}; if it holds, then tests can be constructed based on any statistic and the resulting test will obtain exact finite-sample Type 1 error control.

\begin{definition}[The Randomization Hypothesis]
Let ${\bf G}$ be a finite group of transformations $g$ of ${\cal X}$ onto itself. The null hypothesis $H_0$ implies that the distribution of $X$ is invariant under the transformations in ${\bf G}$; i.e., for every $g$ in ${\bf G}$, $gX$ and $X$ have the same distribution whenever $X$ has distribution $P$ in $\Omega_0$.
\end{definition}

\begin{example}[Sign Changes and the One-sample Problem]   \label{example:1}
Suppose that we collect data on an economic outcome before and after a policy change for a cross-section of households. For each household $i$, let $X_i$ denote the observed change in the outcome. Let $X=X^{(n)}$ collect the independent observations $X_1 , \ldots , X_n$. Each observation $X_i$ has distribution $P$  on {\bf R}. Let $\mu (P)$ denote the mean of $P$, assumed finite. Consider the problem of testing whether the mean of the outcome changed after the enactment of the policy. Formally this problem is a test of the null hypothesis  $H_0 : \mu (P) = 0$ against the alternative $H_1 : \mu (P) \ne  0$.

It may be reasonable to assume that any changes to the outcome are symmetric around zero under the null hypothesis. That is, for the time being, we assume that distribution $P$ is symmetric about its median, e.g., the distribution is Gaussian or Uniform on $[-1,1]$.  In this case, under $H_0$, $P$ has mean 0 and is symmetric about 0. What transformations ${\bf G}$ apply?  Let  $( \epsilon_1 ,  \ldots , \epsilon_n )$ be any vector with each entry either $1$ or $-1$. Under $H_0$, the data $(X_1 , \ldots , X_n ) $ and $( \epsilon_1 X_1 , \ldots , \epsilon_n X_n )$ have the same distribution. The randomization hypothesis thus holds with ${\bf G}$ identified with the collection of $2^n$ vectors of length $n$ with entries $1$ or $-1$. A randomization test can be constructed by recomputing  any test statistic, such as the absolute sample mean $| \bar X_n | \equiv | \sum_{i=1}^n X_i  /n |$,  over the $2^n$ data sets of the form $( \epsilon_1 X_1 , \ldots , \epsilon_n X_n )$, as each $\epsilon_i$ varies in $\{ -1 , 1 \}$.

This example generalizes to  the multivariate or high-dimensional situation where   $X_i =  (X_{i,1} , \ldots , X_{i,d} )$ is vector in ${\bf R}^d$.  Testing a high-dimensional mean vector has received a lot of attention in recent years; see, e.g., \cite{wang2015high}, \cite{wang2019feasible} and \cite{huang2022overview}. The randomization hypothesis holds with the same group ${\bf G}$ if it is assumed $X_i$ and $- X_i$ have the same distribution under the null hypothesis. Thus, the construction will apply if one is willing to assume a semi-parametric model of symmetry under the null hypothesis.  Of course, one may wish to test a mean vector in a nonparametric setting that does not assume such symmetry. One of the themes of this paper is that randomization tests are often asymptotically valid, even when a suitable randomization hypothesis does not hold exactly.
\end{example}

\begin{example}[Two-Sample Permutation Tests]\label{example:2} Suppose that we observe the wages for two groups of workers, e.g., who face different labor regulations or belong to different demographic groups. Denote these measurements by $( X_1 , \ldots , X_m )$ and $(Y_1 , \ldots , Y_n )$, respectively. We might be interested in testing whether the two samples are drawn from the same distribution. 
Here, the null hypothesis $H_0$ is that the samples are generated from the same probability law. Under the null hypothesis $H_0$, the observations can be {\it permuted}, or assigned at random to either of the two groups, without changing the distribution of the data.  A little more formally, combine the data as  $ Z^{(N)} = (Z_1 , \ldots , Z_N ) = ( X_1 , \ldots , X_m, Y_1 , \ldots , Y_n )$.
Let $\pi = ( \pi (1) , \ldots , \pi (N) )$ be a permutation of $[N] = \{1, \ldots , N\}$, treated as an arbitrary bijection from $[N]$ to $[N]$. The permuted data $Z^{(N)}_{\pi}$ are given by
$
Z^{(N)}_{\pi} = ( Z_{\pi (1) } , \ldots , Z_{\pi (N)}  )~.
$
Under the null hypothesis $H_0$, the random variables  $Z^{(N)}$ and $Z^{(N)}_{\pi}$ have the same distribution. The randomization hypothesis thus holds with ${\bf G}$ identified with the collection of $N!$ permutations of the set $[N]$. Again, a randomization test can be constructed by recomputing any test statistic over the $N!$ data sets of the form $Z^{(N)}_{\pi}$. If the data are real-valued and the chosen test statistic is a function of the ranks of the observations, then the resulting test is a two-sample rank test. If the data have no ties, then the null distribution can be tabled. The classic example is the Wilcoxon (or, equivalently, the Mann-Whitney) test. Notice that nothing has required that the data be real-valued.  In fact, the data can be valued in any sample space. In particular, they may be vector-valued and high-dimensional, as is often the case in genomics. For some recent entries into this literature, see \cite{biswas2014nonparametric} and \cite{cousido2019two}.
\end{example}

\subsection{General Construction of a Randomization Test} \label{sec:generalconstruct}
Let $T(X)$ be {\it any}  real-valued test statistic chosen to test the null hypothesis $H_0$. Suppose the group ${\bf G}$ has $M$ elements. (In fact, the ideas generalize to infinite groups as well.)
Given $X = x$, let
$$T^{(1)} (x) \le T^{(2)} (x) \le \cdots \le T^{(M)} (x) $$
be the  values of $T(gx)$ as $g$ varies in ${\bf G}$, ordered from
smallest to largest.
Fix a nominal level $\alpha$, $0 < \alpha < 1$, and let $k$ be defined by
\begin{equation}\label{equation:13.rand.1}
k = M - \lfloor M \alpha \rfloor~ ,
\end{equation}
where $\lfloor M \alpha \rfloor$ denotes the largest integer less than or equal to
$M \alpha$.
Let $M^+ (x)$ and $M^0 (x)$ be the number of values $T^{(j)} (x)$, as $j$ varies over $1 , \ldots , M$,  which are greater than $T^{(k)} (x)$
and equal to $T^{(k)} (x)$, respectively.
Define the test function
$$a(x) = {{M \alpha - M^+ (x) } \over { M^0 (x) } }~ .$$
The assumption that ${\bf G} $ is a group ensures that the functions $T^{(k)} ( \cdot )$, $M^+ ( \cdot )$,
$M^0 ( \cdot )$, and $a ( \cdot )$ are invariant under ${\bf G}$; that is, $a (gx) = a (x)$ for all $g \in {\bf G}$ and $x \in {\cal X}$. 

Define  the {\it randomization test} function $\phi (X)$  to be equal to
1, $a(X)$, or 0 according to whether $T(X) > T^{(k)} (X)$, $T(X) = T^{(k)} (X)$, or $T(X) < T^{(k)} (X)$, respectively. Randomization tests have exact finite-sample Type 1 error control.
\begin{theorem}\label{theorem:1} \citep{hoeffding:1952} If $P \in \Omega_0$ satisfies the randomization hypothesis, then
\begin{equation}\label{equation:13.rand.3}
E_P [ \phi (X) ] = \alpha \text{ for all } P \in \Omega_0~.
\end{equation}
\end{theorem}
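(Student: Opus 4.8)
The plan is to exploit the group structure of $\mathbf{G}$ to reduce the claim to a pointwise identity, so that essentially no probabilistic computation is needed beyond invoking the randomization hypothesis. The key preliminary step I would establish is an \emph{averaging identity}: for any $P \in \Omega_0$ satisfying the randomization hypothesis and any bounded $h$,
\[
E_P[h(X)] = E_P\Bigl[ \tfrac{1}{M} \sum_{g \in \mathbf{G}} h(gX) \Bigr].
\]
This follows because, for each fixed $g \in \mathbf{G}$, the invariance $gX \dequal X$ gives $E_P[h(gX)] = E_P[h(X)]$; averaging these $M$ equalities over $g$ and using linearity yields the display. Taking $h = \phi$ reduces Theorem~\ref{theorem:1} to showing that the group-average $\tfrac{1}{M} \sum_{g \in \mathbf{G}} \phi(gx)$ equals $\alpha$ for every $x$.

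The second step is the pointwise evaluation of this average, and this is where I expect the main work to lie. I would fix $x$ and use the stated invariance of the auxiliary functions, namely $T^{(k)}(gx) = T^{(k)}(x)$, $M^+(gx) = M^+(x)$, $M^0(gx) = M^0(x)$, and $a(gx) = a(x)$ for all $g$, so that $\phi(gx)$ equals $1$, $a(x)$, or $0$ according to whether $T(gx)$ is greater than, equal to, or less than the common threshold $T^{(k)}(x)$. The crucial combinatorial fact is that the multiset $\{ T(gx) : g \in \mathbf{G} \}$ is, by definition, exactly the ordered list $T^{(1)}(x) \le \cdots \le T^{(M)}(x)$; hence, as $g$ ranges over $\mathbf{G}$, there are precisely $M^+(x)$ transformations with $T(gx) > T^{(k)}(x)$ and precisely $M^0(x)$ with $T(gx) = T^{(k)}(x)$. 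Summing the contributions gives
\[
\sum_{g \in \mathbf{G}} \phi(gx) = M^+(x)\cdot 1 + M^0(x)\cdot a(x) + 0 = M^+(x) + \bigl( M\alpha - M^+(x) \bigr) = M\alpha,
\]
where the middle equality uses the definition $a(x) = (M\alpha - M^+(x))/M^0(x)$. Dividing by $M$ shows the group-average is identically $\alpha$, and combining with the averaging identity delivers $E_P[\phi(X)] = \alpha$.

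The genuine obstacle is thus conceptual rather than computational: one must recognize that the invariance of $T^{(k)}, M^+, M^0, a$ together with the combinatorial identification of the orbit values forces the group-average to be pointwise constant. The remaining routine-but-necessary verification is that $a(x)$ is a legitimate randomized-test value, i.e. $0 \le a(x) \le 1$. Since $T^{(k)}(x)$ is the $k$-th smallest value with $k = M - \lfloor M\alpha\rfloor$, at most $\lfloor M\alpha\rfloor \le M\alpha$ of the values strictly exceed it, so $M^+(x) \le M\alpha$ and $a(x) \ge 0$; and at least $\lfloor M\alpha\rfloor + 1 > M\alpha$ of the values are $\ge T^{(k)}(x)$, so $M^+(x) + M^0(x) > M\alpha$ and $a(x) < 1$. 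I would also note that the event $T(X) = T^{(k)}(X)$ forces $M^0(x) \ge 1$, so $a(x)$ is well-defined exactly where it is used. The conceptual heart of the argument is that conditioning on the orbit of $X$ under $\mathbf{G}$ makes the randomization test reject with exact conditional probability $\alpha$, from which unconditional size control is immediate.
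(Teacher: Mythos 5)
Your proposal is correct and follows essentially the same route as the paper's proof: the pointwise identity $\sum_{g \in \mathbf{G}} \phi(gx) = M^+(x) + a(x)M^0(x) = M\alpha$, combined with $E_P[\phi(gX)] = E_P[\phi(X)]$ from the randomization hypothesis. You simply spell out in more detail the combinatorial justification that the paper compresses into ``by construction,'' and add the (worthwhile but ancillary) verification that $0 \le a(x) \le 1$.
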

\begin{proof}
By construction, for every $x$ in ${\cal X}$,
\begin{equation}\label{equation:13.rand.2}
\sum_{g \in {\bf G}} \phi (gx) = M^+ (x) + a(x) M^0 (x) 
= M \alpha ~.
\end{equation}
Replacing $x$ by $X$ in  (\ref{equation:13.rand.2}) and taking expectations yields $$M \alpha = E_P \left[ \sum_{g \in {\bf G}} \phi ( gX ) \right] = \sum_{g \in {\bf G}} E_P [ \phi ( gX) ]~.$$ By the randomization  hypothesis, $E_P [ \phi ( gX ) ] = E_P [ \phi (X) ]$, so that $$M \alpha = \sum_{g \in {\bf G}} E_P [ \phi ( X) ] = M E_P [ \phi (X) ]~,$$ and the result follows.\hfill
\end{proof}

When $0 < a(x) < 1$, the above construction utilizes randomization in order to get exact level $\alpha$ Type 1 error.\footnote{Moreover, in some problems,  randomization tests form a complete class of tests, essentially meaning that one can restrict attention to the class of randomization tests without missing out on more powerful tests; see \cite{lehmann1949comments} or \cite{lehmann:romano:tsh:2022}, Corollary 5.11.1.  Put another way, any level $\alpha$ test may be replaced by a randomization test that is exact level $\alpha$ and with at least as much power against all alternatives.}  Indeed, the construction accounts for possible ties in the recomputed test statistics, as well as the possibility that $M \alpha$ is not an integer.  If one prefers to use a non-randomized test, then a conservative approach would be to reject the null hypothesis if and only if
$T(X)$ exceeds $T^{(k)} (X)$.  Put another way, the acceptance region
of this non-randomized test is
\begin{equation}\label{equation:accept}
\{ X : T( X) \le T^{(k)} (X) \}~.
\end{equation}
A corresponding conservative $p$-value (that avoids randomization in the vase $M \alpha$ is not an integer or the possibility of ties) is then
\begin{equation}\label{eq: p value conserv}
    \hat p = \frac{1}{M} \sum_{g \in {\bf G}} I \{ T(gX) \ge T(X) \}~~.
\end{equation}
That is, the $p$-value (\ref{eq: p value conserv}) satisfies $P\{\hat p \leq \alpha \}\leq \alpha$ for each $P\in\Omega_0$.

\begin{remark} \label{rem:samplingsubset}
When $|\mathbf G|$ is large, it is sometimes necessary to resort to a stochastic approximation to the randomization test described above.  With some care, this approximation can be done in a way that preserves the finite-sample validity of the randomization test.  In order to describe one such construction, let $g_1$ be the identity transformation, so $T(g_1X) = T(X)$, and let $g_2, \ldots, g_B$ be, e.g., i.i.d.\ $\sim \text{Unif}(\mathbf G)$.  It is possible to show that $\{ T(g_jX), 1 \leq j \leq B \}$ is exchangeable, from which it follows further that $$T(X) | T^{(1)}(X) \ldots, T^{(B)}(X) \sim \text{Unif}(\{T^{(1)}(X) \ldots, T^{(B)}(X)\})~,$$ where $T^{(1)}(X) \leq \cdots \leq T^{(B)}(X)$ are the ordered values of $T(g_jX), 1 \leq j \leq B$.  To see that $\{ T(g_jX), 1 \leq j \leq B \}$ is exchangeable,  the randomization hypothesis  implies that 
\begin{equation} \label{eq:exchG}
(T(g_1X), \ldots, T(g_BX)) \stackrel{d}{=} (T(g_1gX), \ldots, T(g_BgX))~,
\end{equation}
where $g \sim \text{Unif}(\mathbf G)$, independently of $g_2, \ldots, g_B$ and $X$.  But,  the $g_i g$  are i.i.d.\ $\sim \text{Unif}(\mathbf G)$, and hence $( g_1 g , \ldots , g_b g )$ is  exchangeable.    Therefore, the righthand-side of \eqref{eq:exchG} is exchangeable, and so the lefthand-side is as well.  It now follows that the construction above may be applied verbatim with $\mathbf G$ simply replaced by $\{g_1, \ldots, g_B\}$.
Moreover,  a valid $p$-value may be constructed as in (\ref{eq: p value conserv}) with $M$ replaced by $B$ and the sum is just over $g_1 , \ldots , g_B$. 
Finally, a similar argument holds if $g_2 , \ldots , g_B$ are sampled with replacement from ${\mathbf G}$ excluding the identity.  The only change is that $(gg_1 , \ldots , gg_B )$ can be shown to be exchangeable and distributed as $B$ elements taken randomly without replacement from ${\mathbf G}$. See \cite{hemerik2018exact} and \cite{ramdas2023permutation} for further discussion and alternatives.%
\end{remark}

\section{Approximation and Asymptotic Validity}

The randomization hypothesis may hold for the null hypothesis  $P \in {\Omega_0}$, while one is really interested in testing a different null hypothesis, say $P \in \bar \Omega_0$, where $\Omega_0 \subset \bar \Omega_0$. That is, in Example \ref{example:1}, one may be interested in testing whether each component of the mean of a random vector is equal to zero. Likewise, in Example \ref{example:2}, one may be interested in testing the whether the means of two distributions are equal. In many problems with this characteristic, randomization tests can still be quite useful. In particular, randomization tests are often asymptotically valid.

\subsection{Large-Sample Behavior of the Randomization Distribution}
Consider a sequence of situations with $X = X^{(n)}$,  $P = P_n$, ${\cal X} = {\cal X}_n$, ${\bf G} = {\bf G}_n$, $T = T_n$, etc.\  defined for each integer $n$. Let $\hat R_n$ denote the {\it randomization distribution} of $T_n$, defined by
\begin{equation}\label{equation:13.rand.5}
\hat R_n (t) = M_n^{-1} \sum_{g \in {\bf G}_n} I \{ T_n ( g X^{(n)} ) \le t \} ~.
\end{equation}
So, $\hat R_n$ is the distribution of $T_n ( G X^{(n)} )$ given $X^{(n)}$, where $G$ has the uniform distribution
on ${\bf G}_n$, independently of $^{(n)}$.\footnote{If the randomization hypothesis holds, then $\hat R_n$ is also the conditional distribution of $ T_n ( X^n )$ given the set $\{T^{(j)}(X^n) : 1 \leq j \leq M_n\}$.} When the group under consideration is based on permutations, the randomization distribution may also be called the permutation distribution. The threshold $T^{(k)}(X^{(n)})$ used in the construction of the randomization test $\phi $ in (\ref{equation:13.rand.2}) can be written
\begin{equation}\label{equation:juri} 
    \hat r_n (1- \alpha ) =  \inf \{  x: \hat R_n ( x ) \ge x \}~.
\end{equation}
Essentially, the randomization test rejects if $T_n (X^{(n)} ) > \hat r_n ( 1- \alpha )$, though the earlier construction accounts for the discreteness and possible ties in the randomization distribution. In most cases, the randomization distribution can be approximated by a continuous distribution and the need to randomize is asymptotically negligible. Thus, we often just use the slightly conservative test that rejects when $T_n(X^{(n)} ) > \hat r_n ( 1- \alpha )$.

In order to fully understand the operating characteristics of a randomization test,  particularly Type 1 error control and power, it is necessary to understand how the (random) critical value $\hat r_n (1- \alpha )$ and the randomization distribution $\hat R_n$ behave, both under the null hypothesis $H_0$ and under appropriate sequences of alternatives. Observe that 
\begin{equation}
E [ \hat R_n (t) ] = P \{ T_n ( G_n X^{(n)}  ) \le t \}~,
\end{equation}
where $G_n$ is, as before, a random variable distributed uniformly on ${\bf G}_n$, independently of $X^{(n)}$. If the randomization hypothesis holds, then 
\begin{equation}
E [ \hat R_n (t) ] = P \{ T_n ( X^{(n)} ) \le t \} = R_n (t)~,
\end{equation}
where $R_n$ is the true unconditional distribution of the test statistic $T_n$. So, if $T_n$ converges in distribution to a c.d.f.\ $R ( \cdot )$ which is continuous at $t$, it follows that 
\begin{equation}
E [ \hat R_n ( t) ] \to R(t)~.
\end{equation}
If, in addition, the randomization distribution $\hat R_n$ settles down to some nonrandom distribution, then it must be the same distribution as that of the unconditional limiting distribution of the test statistic $T_n$. 

\begin{remark}\label{remark:intuition} Intuitively, the randomization distribution should converge to the distribution $R$ in some asymptotic sense, at least under the randomization hypothesis. To see this, recall that, under the randomization hypothesis,  the  randomization test has exact level $\alpha$.  Apart from the issue of ties and discreteness in the randomization distribution, the (unconditional)  probability that $T_n$ exceeds $\hat r_n ( 1- \alpha )$ is close to $\alpha$, and so  $\hat r_n (  1- \alpha )$  should be close to the true quantile $r_n ( 1- \alpha )$.  But then, $r_n ( 1- \alpha )$  is close to the quantile $r ( 1- \alpha )$, at least if the limiting distribution $R ( \cdot )$ has a unique well-defined $1- \alpha$ quantile. This intuition is correct in most cases, but does not always hold, if, for example, the statistic $T_n (gX^{(n)})$ is constant as $g$ varies over $\mathbf{G}$.
\end{remark}

Hoeffding's condition reduces the study of the randomization distribution to that of a sequence of nonrandom distributions at the expense of the introduction of a little further randomness.  To this end, let $G_n'$ have the same distribution as $G_n$ above, with $X^{(n)}$, $G_n$, and $G_n'$ mutually independent.
\begin{definition}[Hoeffding's Condition]
We say that Hoeffding's condition holds if
\begin{equation}\label{equation:13.rand.6}
( T_n ( G_n X^{(n)} ) , T_n ( G_n' X^{(n)} )) \Larrow ( T , T' )
\end{equation}
under $P_n$, where $T$ and $T'$ are independent, each with common c.d.f.\ $R ( \cdot )$.
\end{definition}

\begin{theorem} Assume that Hoeffding's Condition holds and recall $\hat R_n ( \cdot )$ defined in (\ref{equation:13.rand.5}).  Then, under $P_n$,
\begin{equation}\label{equation:13.hatR}
 \hat R_n (t) \Parrow R(t)
\end{equation}
for every $t$ which is a continuity point of $R$. Let  $r ( 1- \alpha ) = \inf \{ t:~ R(t) \ge 1- \alpha \}.$ Suppose $R$ is continuous and strictly increasing at $r(1- \alpha )$. Then, under $P_n$, $$\hat r_n ( 1- \alpha ) \Parrow r( 1- \alpha )~.$$ Conversely, if (\ref{equation:13.hatR}) holds for some limiting  c.d.f.\ $R$ whenever $t$ is a continuity point, then (\ref{equation:13.rand.6}) holds.
\end{theorem}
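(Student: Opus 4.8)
The plan is to prove the three assertions in turn: first reduce the in-probability convergence of the random distribution function $\hat R_n$ to the convergence of its first two moments, then obtain the quantile statement as a corollary, and finally run the moment computation backward for the converse.

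\textbf{Step 1 (pointwise convergence of $\hat R_n$).} Fix a continuity point $t$ of $R$. Since $\hat R_n(t) \in [0,1]$, it suffices to show $E[\hat R_n(t)] \to R(t)$ and $\mathrm{Var}[\hat R_n(t)] \to 0$, whence $\hat R_n(t) \Parrow R(t)$ by Chebyshev's inequality. First I would note that, because $G_n$ is uniform on $\mathbf{G}_n$ and independent of $X^{(n)}$, and $G_n'$ is an independent copy,
\[
E[\hat R_n(t)] = P\{T_n(G_n X^{(n)}) \le t\}, \qquad E[\hat R_n(t)^2] = P\{T_n(G_n X^{(n)}) \le t,\ T_n(G_n' X^{(n)}) \le t\}.
\]
The marginal part of Hoeffding's condition gives $T_n(G_n X^{(n)}) \Larrow T \sim R$, so the first expression tends to $R(t)$; the joint part gives $(T_n(G_n X^{(n)}), T_n(G_n' X^{(n)})) \Larrow (T,T')$ with $T,T'$ independent, so the second tends to $R(t)^2$, using that $(t,t)$ is a continuity point of the product limit law whenever $R$ is continuous at $t$. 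Hence $\mathrm{Var}[\hat R_n(t)] = E[\hat R_n(t)^2] - E[\hat R_n(t)]^2 \to 0$.

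\textbf{Step 2 (convergence of the critical value).} Given Step 1, I would derive $\hat r_n(1-\alpha) \Parrow r(1-\alpha)$ by the standard two-sided argument. Write $r = r(1-\alpha)$, fix $\epsilon > 0$, and (using that the continuity points of $R$ are co-countable, hence dense) choose continuity points $t_1 \in (r-\epsilon, r)$ and $t_2 \in (r, r+\epsilon)$; continuity and strict monotonicity of $R$ at $r$ give $R(t_1) < 1-\alpha < R(t_2)$. Because $\hat R_n$ is a (random) right-continuous nondecreasing distribution function, $\{\hat r_n(1-\alpha) \le t_1\} \subseteq \{\hat R_n(t_1) \ge 1-\alpha\}$ and $\{\hat r_n(1-\alpha) > t_2\} \subseteq \{\hat R_n(t_2) < 1-\alpha\}$. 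By Step 1, $\hat R_n(t_1) \Parrow R(t_1) < 1-\alpha$ and $\hat R_n(t_2) \Parrow R(t_2) > 1-\alpha$, so both events have probability tending to $0$; since $t_1 > r-\epsilon$ and $t_2 < r+\epsilon$, this gives $P\{|\hat r_n(1-\alpha) - r| > \epsilon\} \to 0$.

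\textbf{Step 3 (converse).} Here I would reverse the moment identity of Step 1. Conditioning on $X^{(n)}$ and using that $T_n(G_n X^{(n)})$ and $T_n(G_n' X^{(n)})$ are conditionally independent given $X^{(n)}$ (as $G_n, G_n'$ are), the joint distribution function of $(T_n(G_n X^{(n)}), T_n(G_n' X^{(n)}))$ at $(s,t)$ equals $E[\hat R_n(s)\hat R_n(t)]$. If \eqref{equation:13.hatR} holds at every continuity point, then at joint continuity points $\hat R_n(s)\hat R_n(t) \Parrow R(s)R(t)$, and bounded convergence yields $E[\hat R_n(s)\hat R_n(t)] \to R(s)R(t)$. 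Thus the joint distribution function converges to $R(s)R(t)$ on the dense set of joint continuity points, and the standard monotonicity/squeeze characterization of weak convergence in ${\bf R}^2$ upgrades this to $(T_n(G_n X^{(n)}), T_n(G_n' X^{(n)})) \Larrow (T,T')$ with $T,T'$ independent, each distributed as $R$, which is exactly \eqref{equation:13.rand.6}.

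The main obstacle is not the moment identities, which are immediate, but two subtler points. The first is that controlling $\mathrm{Var}[\hat R_n(t)]$ genuinely requires the \emph{joint} half of Hoeffding's condition: marginal convergence alone does not force the variance to vanish, which is precisely why the condition is posed as a bivariate limit. The second is the passage from in-probability convergence of $\hat R_n$ to convergence of the induced quantile $\hat r_n(1-\alpha)$ in \eqref{equation:juri}, where the strict-monotonicity hypothesis at $r(1-\alpha)$ is essential and the infimum must be handled through the one-sided inclusions above; Remark \ref{remark:intuition} already flags the degenerate failure mode, a statistic constant over $\mathbf{G}_n$, that strict monotonicity rules out.
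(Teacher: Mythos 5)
Your proposal is correct and follows essentially the same route as the proof the paper points to (Theorem 17.2.3 of Lehmann--Romano): the identities $E[\hat R_n(t)] = P\{T_n(G_nX^{(n)})\le t\}$ and $E[\hat R_n(t)^2] = P\{T_n(G_nX^{(n)})\le t,\ T_n(G_n'X^{(n)})\le t\}$, Chebyshev via the vanishing variance, the standard two-sided quantile argument under strict monotonicity, and the reversed moment computation for the converse. You also correctly identify why the bivariate form of Hoeffding's condition is indispensable, and you implicitly fix the typo in the paper's display for $\hat r_n(1-\alpha)$, which should read $\inf\{x : \hat R_n(x) \ge 1-\alpha\}$.
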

\noindent
For a proof, see \cite{lehmann:romano:tsh:2022}, Theorem 17.2.3.

\subsection{The One-sample Problem\label{sec: one-sample}}

To fix concepts, let us return to Example \ref{example:1}. Let $P_n = P^n$ denote the joint distribution of the sample $X_1  , \ldots , X_n $. Assume that each $X_i$ has finite variance under $P$.  Here, characterizing the asymptotic behavior of the randomization distribution is particularly easy for the statistic $T_n = n^{1/2} \bar X_n$, whether or not the randomization hypothesis holds. There are several ways to see this.  First, note that the randomization distribution $\hat R_n$ is simply the distribution of $ n^{-1/2} \sum_i  \epsilon_i X_i$, conditioned on the data, where $\epsilon_1 , \ldots , \epsilon_n $ are i.i.d., each $1$ or $-1$ with probability $1/2$.  Clearly, this (conditional) distribution has mean zero and variance $n^{-1} \sum_{i=1}^n X_i^2$.  By an appropriate central limit theorem, this distribution converges to a normal distribution with mean zero and variance $E_P[X_i^2]$ (not $\text{Var}_P(X_i)$). Indeed, Hoeffding's Condition can be verified  with $R$ given by $N(0,E_P[X_i^2])$; see \cite{lehmann:romano:tsh:2022}, Example 17.2.4.
 
Although this calculation was straightforward, it is perhaps useful to provide some additional intuition, which will generalize to other choices of $T_n$.  The behavior of the randomization distribution when $X_i$ has distribution $P$ is the same as the behavior of the randomization distribution when $X_i$ has distribution $P^s$, where $P^s$ is the symmetrized version of $P$, i.e., $P^s$ is the unconditional distribution of $\epsilon_i  |X_i |$.   To see this, observe that the randomization distribution depends only on the values $|X_1| , \ldots , |X_n |$.  But the distribution of $|X_i|$ under $P$ is the same as that under $P^s$. So, the problem of studying the behavior of $\hat R_n$ under $P$ is reduced to the equivalent problem under $P^s$. But the randomization hypothesis holds under $P^s$ and so the behavior of  $\hat R_n$  under $P^s$  should be that of the limiting unconditional distribution of the test statistic. By the ordinary central limit theorem, this limiting distribution is $N(0, \text{Var}_{P^s}[X_i] ) = N(0, E_P[X_i^2])$. Therefore, we can conclude that, under any $P$ with finite variance $\sigma^2 (P)$, 
$$\hat R_n ( t ) \Parrow \Phi ( t / \sigma (P)) \quad\text{and}\quad \hat r_n ( 1- \alpha ) \Parrow \sigma (P) z_{1- \alpha }~.$$
Let $\phi_n$ be the level $\alpha$ randomization test  based on the test statistic $T_n$.  If the distribution $P$ has mean 0, then it easily follows that $E_P[\phi_n] \to \alpha$.

To summarize,  for the problem of testing whether the mean of $P$ is zero against the alternative that the mean exceeds zero (or, with similar results, against two-sided alternatives), the randomization test is asymptotically level $\alpha$. Of course,  $\phi_n$ is {\it exact} level $\alpha$ if the underlying distribution is symmetric about zero; otherwise, it is at least asymptotically pointwise level $\alpha$, as long as $\sigma^2(P) < \infty$.

One can also easily derive the (local) limiting power of the randomization test. Assume that the underlying distribution $P_n$ is $N(hn^{-1/2}, \sigma^2)$.  First, by the above reasoning (or, by verifying Hoeffding's Condition), if $h = 0$, then $\hat r_n ( 1- \alpha ) \to \sigma z_{1- \alpha}$ in probability. By contiguity, it follows that, under $N( hn^{-1/2} , \sigma^2 )$, $\hat r_n ( 1- \alpha ) \to \sigma z_{1- \alpha}$ in probability as well. Under $N( hn^{-1/2} , \sigma^2  )$, the test statistic $T_n$  converges in distribution to  $N( h , \sigma^2 )$. Therefore, by Slutsky's Theorem, the limiting power of the test $\phi_n$ against the alternative $N( hn^{-1/2} , \sigma^2 )$ is given by
\begin{equation}
    E_{P_n} [\phi_n] \to P \{ \sigma Z+h > \sigma z_{1- \alpha} \} = 1 - \Phi  \left ( z_{1- \alpha} -{h \over {\sigma}} \right  )~.
\end{equation}
In fact, this is also the limiting local power of the uniformly most powerful (UMP) test when the variance is known; it is also the local limiting power of the uniformly most powerful unbiased (UMPU) test when the variance is unknown. Hence, to first order,  there is asymptotically no loss in power when using the randomization test, as opposed to the  UMP test, but the randomization test  has the advantage that its size is $\alpha$ over all symmetric distributions.  In statistical terminology, we say that the asymptotic relative efficiency of the randomization test with respect to the UMP or UMPU test is one. In fact, the relative efficiency is one whenever the underlying family is a quadratic-mean-differentiable location family with finite variance.

One  benefit of randomization tests is that one does not have to assume a parametric model, like normality.  In practice, critical values can be obtained from the exact randomization distribution, or its Monte Carlo approximation obtained by randomly sampling elements of ${\bf G}$.  In summary, two additional benefits are revealed by asymptotics. First, the randomization test may be used in large samples even when the randomization hypothesis fails; in the one-sample case, this means the assumption of symmetry is not required.  Second, asymptotics allow us to perform local power calculations and show that, even under normality, very little power is lost when using a randomization test as compared to the UMP or UMPU test; in fact, the randomization test and the UMP or UMPU test have the same limiting local power function against normal contiguous alternatives.

The above considerations generalize in two important ways: the arguments apply to other test statistics and to vector-valued observations. For example, consider a test statistic $T_n$ with some distribution $P$ that is symmetric about 0. Assume that $T_n$ is asymptotically linear in the sense that, for some  (influence) function $\psi_P ( \cdot )$, assumed to be an odd function, we can write
\begin{equation}\label{equation:sign1}
T_n = n^{-1/2} \sum_{i=1}^n \psi_P ( X_i ) + o_P (1)~,
\end{equation}
where  $E_P [ \psi_P (X_i ) ] = 0$ and $\tau_P^2 = \text{Var}_P [ \psi_P ( X_i ) ] < \infty$. Let $\hat R_n$ denote the randomization distribution based on $T_n$ and the group of sign changes (applied to the $X_i$).  Then, Hoeffding's condition holds with $P_n = P^n$ and $R(t) = \Phi ( t / \tau (P))$. As a consequence
\begin{equation}
    \hat R_n ( t) \Parrow \Phi ( t / \tau (P))~.
\end{equation}
But, by the argument presented earlier, the  behavior of the randomization distribution under an asymmetric distribution $P$ is the same as that under the symmetrized distribution $P^s$. Thus, we can also conclude $$\hat R_n (t) \Parrow \Phi ( t / \tau (  P^s ))  \text{ and }
\hat r_n ( 1- \alpha ) \Parrow \tau (  P^s) z_{1- \alpha}~.$$ Such results can be applied to general location models; see \cite{lehmann:romano:tsh:2022}, Theorem 17.2.4.   Under general conditions, the limiting local power of the randomization test based on an optimal test statistic (such as Rao's score statistic) is the same as that of the Rao\footnote{See Section 14.4.3 of \cite{lehmann:romano:tsh:2022} for a discussion of Rao's score test.} test and hence is locally asymptotically uniformly most powerful.   The randomization test, however, is robust to model misspecification. As an example, the sample median is an optimal estimator in a double exponential location model. If a randomization test is based on the sample median, it exactly controls Type 1 error for symmetric distributions, asymptotically controls Type 1 error under asymmetry, and has optimal local asymptotic power under the assumed model.

Note that \cite{hartigan1969using} made effective use of the group of sign changes in order to construct what he called typical values for a location parameter, though he did not seem to realize the connection with \cite{hoeffding:1952}.
As a side note,  the use of sign changes is related to the wild bootstrap 
of \cite{wu1986jackknife}; also see \cite{liu1988bootstrap}.

\subsection{Testing Randomness and the Hot Hand Fallacy\label{sec:hothand}}

The features of the simple example discussed in Section \ref{sec: one-sample} play out in real data. In a landmark paper in behavioral economics, \cite{tversky1971belief} posit that people tend to find small samples ``overly representative'' of the populations from which they are drawn. They refer to this tendency as belief in the ``law of small numbers.'' One of the main pieces of empirical support for the law of small numbers comes from the literature on the ``Hot Hand Fallacy,'' initiated by \cite{gilovich1985hot}, henceforth GVT. GVT document that there is widespread belief in the ``Hot Hand'' in basketball. That is, people believe that basketball players are more likely to make a shot after making several shots than after missing several shots. GVT hypothesize that this belief is erroneous, i.e., that sequences of basketball shots are i.i.d., asserting that people infer positive dependence from randomly occurring streaks of consecutive makes or misses. 

To test this hypothesis, GVT conduct a controlled experiment. They arrange for the members of the Cornell University men's and women's varsity basketball teams to each shoot $n=100$ consecutive shots, and record whether each shot was made or missed. For each shooter $i$, let the binary variable $X_{i,j}$ indicate whether the $j$th shot was made or missed. Formally, GVT are interested in testing the null hypotheses
\begin{equation}\label{eq: individual null}
    H_{0}^{i}:\text{The sequence }X_{i,1}\ldots,X_{i,n}\text{ is i.i.d.\ Bernoulli with unknown success rate $q_i$}~,
\end{equation}
for each shooter $i$, against alternatives in which the probabilities of makes and misses immediately following streaks of consecutive makes or consecutive misses are greater than their unconditional probabilities. To do this, they consider test statistics of the form
\begin{equation}\label{eq: D k def}
    D_{i,k} = \frac{1}{\vert \mathsf{Make}_k \vert} \sum_{j \in \mathsf{Make}_k} X_{i,j} - \frac{1}{\vert \mathsf{Miss}_k \vert} \sum_{j \in \mathsf{Miss}_k} X_{i,j}~,
\end{equation}
where $\mathsf{Make}_k$ and $\mathsf{Miss}_k$ are the sets of indices $j$ in $1,...,n$ such that $X_{i,j-1}=1,\ldots,X_{i,j-k}=1$ and $X_{i,j-1}=0,\ldots,X_{i,j-k}=0$, respectively. That is, the statistic $D_{i,k}$ measures the difference between the proportion of makes following $k$ consecutive makes and $k$ consecutive misses. GVT compare measurements of $D_{i,k}$ to critical values obtained by viewing $D_{i,k}$ as the test statistic in a two-sample t-test. They find that they cannot reject the null hypotheses \eqref{eq: individual null}. This finding became widely cited and used as support for economic and behavioral models that incorporate belief in the law of small numbers \citep{thaler2009nudge,barberis2003survey}.

\cite{miller2018surprised} make two interesting observations, raising doubt in the GVT result. First, they observe that, under the null hypothesis (\ref{eq: individual null}), the statistic $D_{i,k}$ is negatively biased.\footnote{Roughly, conditioning on directly following an observed streak of makes or misses introduces a mechanical bias in the probability of observing a make, that can be large in small samples. This bias is related to the bias in dynamic fixed-effect models documented in \cite{nickell1981biases}. See Appendix D.1 of \cite{miller2018surprised} for further discussion.} That is, the expectation of the statistic $D_{i,k}$ is less than zero under the null hypothesis \eqref{eq: individual null}. Second, they observe that, under the null hypothesis \eqref{eq: individual null}, the distribution of the sequence $X_{i,1}\ldots,X_{i,n}$ is invariant to permutations. Consequently, tests with exact finite-sample error control can be constructed by computing the randomization distribution associated with the statistic $D_{i,k}$. They argue that the randomization tests reverse the GVT results, i.e, that the negative bias was masking evidence of positive sequential dependence.

To get a quantitative sense of the factors at play here, Figure \ref{fig: hot-hand} displays histograms of the randomization distribution for the statistic $D_{i,k}$ for two shooters from the GVT experiment, with $k = 3$. The observed values of each statistic are displayed with vertical dashed teal lines. The means of the randomization distribution are displayed with vertical dotted black lines. These means are meaningfully different from zero, and can be treated as estimates of the bias of $D_{i,k}$ under the the null hypothesis (\ref{eq: individual null}). The randomization $p$-value \eqref{eq: p value conserv} is given by the total mass of the randomization distribution that exceeds the observed statistics. The randomization $p$-values for the two shooters displayed in Figure \ref{fig: hot-hand} are $0.0008$ and $0.375$, respectively. 

\begin{figure}[t]
\begin{centering}
\caption{Normal Approximation to the Randomization Distribution}
\label{fig: hot-hand}
\medskip{}
\begin{tabular}{c}
\includegraphics[scale=0.4]{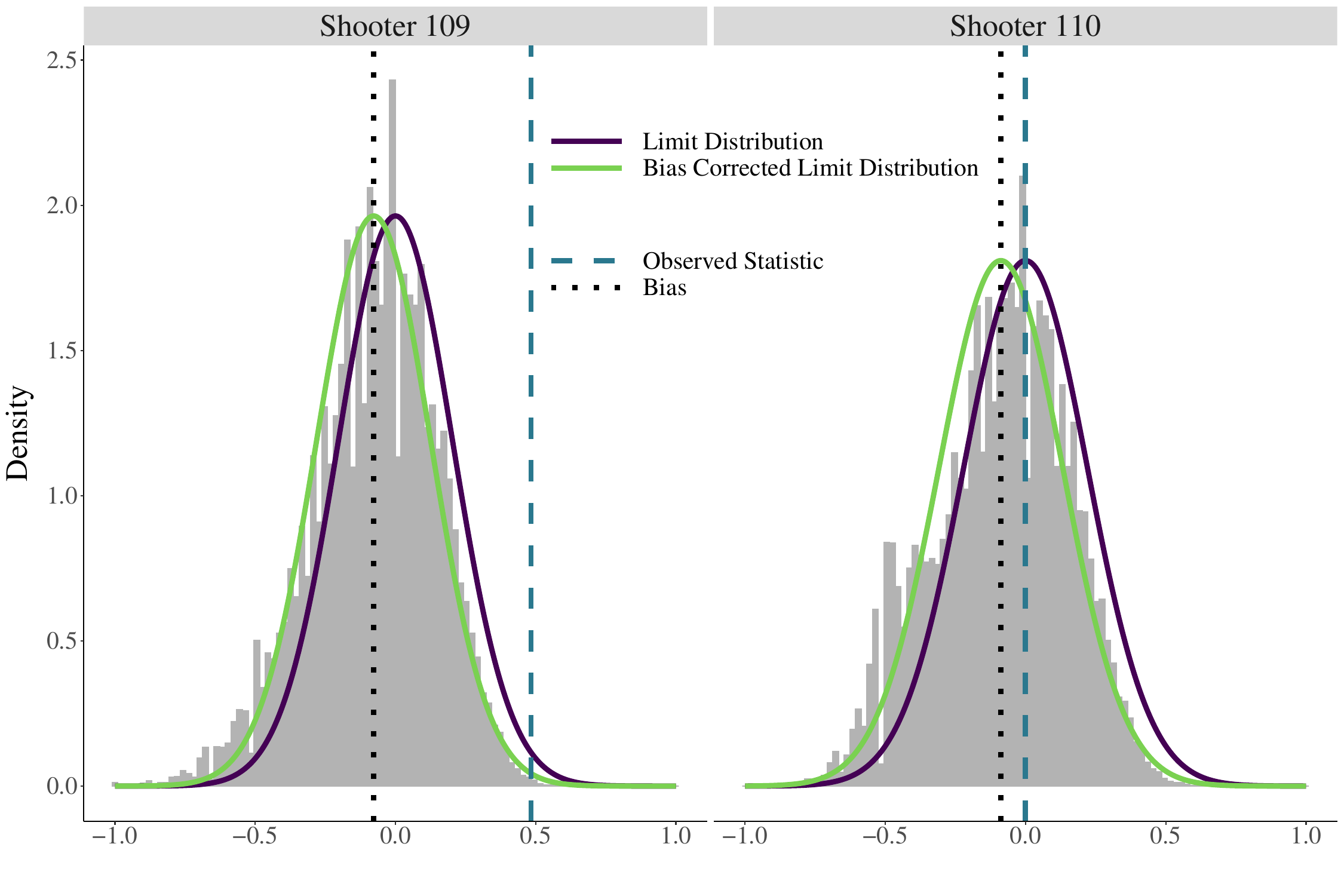}\tabularnewline
\end{tabular}
\par\end{centering}
\medskip{}
\justifying
{\footnotesize{}Notes: Figure \ref{fig: hot-hand} displays histograms of the randomization distributions associated with the statistic $D_{i,k}$ for two shooters from the GVT experiment. In both cases, we take $k=3$. The observed values of the statistic $D_{i,k}$ are displayed with teal vertical dashed lines. The mean values of the randomization distributions are displayed with a black vertical dotted line. The purple curves give the densities of the limiting normal distribution $N\left(0,\sigma_k^{2}(\hat{q}_i)\right)$, scaled by $1/\sqrt{n}$, where we recall that the observed success probability is given by $\hat{q}_i = n^{-1}\sum_{j=1}^n X_{i,j}$. The green curves are leftward shifts of the purple curve by the means of the randomization distributions.}{\footnotesize\par}
\end{figure}

\cite{ritzwoller2022uncertainty} revisit these data and give the following result.
\begin{theorem}\label{thm: hot hand theorem}
Let $\hat{R}_{n,k}(t)$ denote the randomization distribution associated with the statistic $D_{i,k}$. If the sequence $X_{1,i},\ldots,X_{n,i}$ is i.i.d.\ $\mathsf{Bernoulli}(q)$ random variables, then
\begin{equation}
\sqrt{\frac{n}{\sigma^2_k(q)}}\hat{D}_{i,k}\overset{d}{\rightarrow}\mathcal{N}\left(0,1\right)
\quad\text{and}\quad
\hat{R}_{n,k}(t) \Parrow \Phi ( t / \sigma_k(q) )~,\label{eq: hot hand convergence}
\end{equation}
where $\sigma_k^{2}(q)=\left(q\left(1-q\right)\right)^{1-k}(\left(1-q\right)^{k}+q^{k})$.
\end{theorem}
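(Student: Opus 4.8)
The plan is to treat the two assertions in \eqref{eq: hot hand convergence} separately: first a central limit theorem for the statistic under the i.i.d.\ model, and then the convergence of the randomization distribution, which I will reduce to a conditional (finite-population) version of the same central limit theorem.

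For the first claim, I would write $D_{i,k} = \bar M_{k} - \bar m_{k}$ as a difference of two conditional sample means, each expressed as a ratio of block sums. Introducing the streak indicators $A_j = \prod_{l=1}^k X_{i,j-l}$ and $B_j = \prod_{l=1}^k (1 - X_{i,j-l})$, one has $\vert \mathsf{Make}_k \vert = \sum_j A_j$, $\sum_{j \in \mathsf{Make}_k} X_{i,j} = \sum_j A_j X_{i,j}$, and analogously for the miss terms. The vector of normalized sums $n^{-1}(\sum_j A_j, \sum_j A_j X_{i,j}, \sum_j B_j, \sum_j B_j X_{i,j})$ is an additive functional of a $(k+1)$-dependent sequence, so I would invoke a central limit theorem for $m$-dependent sequences to obtain joint asymptotic normality and then apply the delta method to the ratio map, deducing asymptotic normality of $\sqrt n \, D_{i,k}$ centered at its probability limit $q - q = 0$. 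This yields the influence representation $\sqrt n\, D_{i,k} = n^{-1/2}\sum_j \psi_j + o_P(1)$ with $\psi_j = q^{-k} A_j(X_{i,j}-q) - (1-q)^{-k} B_j(X_{i,j}-q)$. The variance computation is where the structure pays off: because the current shot $X_{i,j}$ is independent of the preceding block and has mean $q$, every cross term at positive lag up to $k$ contains an isolated mean-zero factor and therefore vanishes, while the make/miss cross terms vanish because $A_jB_j = 0$. One is left with $\mathrm{Var}(\sqrt n\, D_{i,k}) \to q(1-q)(q^{-k} + (1-q)^{-k}) = \sigma_k^2(q)$, giving the first display. The finite-sample bias of $D_{i,k}$ noted by \cite{miller2018surprised} is of order $(nq^k)^{-1} = o(n^{-1/2})$ for fixed $k$, so it does not shift the centering after scaling.

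For the randomization distribution, the key reduction is that $\hat R_{n,k}$ depends on the data only through $S = \sum_j X_{i,j}$: permuting the binary sequence yields, with equal multiplicity, every arrangement of $S$ ones and $n-S$ zeros, so $\hat R_{n,k}$ is exactly the law of $\sqrt n\, D_{i,k}$ evaluated on a uniformly random arrangement of $S$ ones, i.e.\ the law of $\sqrt n\, D_{i,k}$ under the i.i.d.\ model conditioned on $\{\sum_j X_{i,j} = S\}$. Since $S/n \to q$ almost surely, it suffices to show that for any deterministic sequence $S_n$ with $S_n/n \to q$ this conditional law converges weakly to $\Phi(\cdot/\sigma_k(q))$, and then to transfer the conclusion to the random sequence $S = S(X)$ along almost-sure subsequences. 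The clean way to see that the conditioning changes neither the centering nor the scale is to compute, from the influence functions above, the asymptotic covariance between $\sqrt n\, D_{i,k}$ and $W_n = \sqrt n(\bar X_n - q)$; its lag-zero contribution is $q^{-k}\,\mathrm{E}[A_j]\, q(1-q) - (1-q)^{-k}\,\mathrm{E}[B_j]\, q(1-q) = q(1-q) - q(1-q) = 0$, and the remaining lags vanish by the same independence argument. Hence $\sqrt n\, D_{i,k}$ is asymptotically independent of the empirical mean, so conditioning on the realized value of $S$ leaves the limiting law of $\sqrt n\, D_{i,k}$ equal to its unconditional limit $N(0,\sigma_k^2(q))$; this is precisely why the randomization distribution and the sampling distribution share the same variance. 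This mechanism underlies the intuition of Remark \ref{remark:intuition} and parallels the symmetrization argument used for the one-sample problem in Section \ref{sec: one-sample}. Equivalently, one may verify Hoeffding's Condition for $T_n = \sqrt n\, D_{i,k}$ with $R = \Phi(\cdot/\sigma_k(q))$ and invoke the theorem stated immediately after that definition.

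I expect the main obstacle to be this last step: making rigorous the passage from joint asymptotic normality with zero covariance to convergence of the conditional law given $\{\sum_j X_{i,j} = S_n\}$. Weak convergence of the pair $(\sqrt n\, D_{i,k}, W_n)$ by itself does not control a conditional distribution given an event of vanishing probability, so one needs a quantitative strengthening — for instance a joint local central limit theorem for $(\sqrt n\, D_{i,k}, S)$, or a coupling of the uniform arrangement of $S_n$ ones with an i.i.d.\ $\mathsf{Bernoulli}(q)$ sequence together with a bound on the resulting change in $D_{i,k}$. Establishing such a local limit or coupling for a streak-based, nonlinear functional is the genuinely technical part; once it is in hand, the delta method, the variance bookkeeping, and the reduction exploiting $S/n \to q$ are routine.
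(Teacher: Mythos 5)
A point of reference first: the paper does not prove Theorem \ref{thm: hot hand theorem}; it states the result and attributes it to \cite{ritzwoller2022uncertainty}, so there is no in-text argument to compare yours against. Judged on its own terms, your reductions and bookkeeping are correct: the influence representation $\psi_j = q^{-k}A_j(X_{i,j}-q) - (1-q)^{-k}B_j(X_{i,j}-q)$ obtained from the delta method applied to the two ratio statistics, the vanishing of all positive-lag cross terms because the most recent factor $X_{i,j+l}-q$ is independent of everything else in the product, the identity $q(1-q)\bigl(q^{-k}+(1-q)^{-k}\bigr) = \sigma_k^2(q)$, the disposal of the $O(1/n)$ bias of \cite{miller2018surprised} after $\sqrt{n}$ scaling, and the observation that $\hat R_{n,k}$ is exactly the conditional law of $\sqrt{n}D_{i,k}$ given $\sum_j X_{i,j}=S$ are all right. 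This gives the first convergence in \eqref{eq: hot hand convergence} (modulo handling the vanishing-probability event that a denominator $\vert\mathsf{Make}_k\vert$ or $\vert\mathsf{Miss}_k\vert$ is zero).

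The gap is the one you flag yourself, and it is genuine, not routine. Joint asymptotic normality of $\bigl(\sqrt{n}D_{i,k},\,\sqrt{n}(\bar X_n - q)\bigr)$ with zero limiting covariance says nothing about the conditional law of the first coordinate given that the second lies in a window of width $O(n^{-1/2})$. And the natural coupling fails quantitatively: matching a uniform arrangement of $S_n$ ones to an i.i.d.\ sequence requires flipping $O_P(\sqrt{n})$ coordinates, each flip moves $\sum_j A_jX_{i,j}$ by up to $k+1$, so the induced perturbation of $\sqrt{n}D_{i,k}$ is $O_P(1)$, not $o_P(1)$. So neither of your two proposed repairs is actually in hand. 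The route the paper's own framework is built for --- and which you mention only as an ``equivalently'' --- is to verify Hoeffding's condition directly: since the randomization hypothesis holds exactly under the i.i.d.\ Bernoulli null, $E[\hat R_{n,k}(t)]$ already converges to $\Phi(t/\sigma_k(q))$ by your unconditional CLT, and all that remains is to show $\mathrm{Var}(\hat R_{n,k}(t)) \to 0$, i.e.\ that $\bigl(\sqrt{n}D_{i,k}(G_nX^{(n)}),\,\sqrt{n}D_{i,k}(G_n'X^{(n)})\bigr)$ converges to a pair of independent $N(0,\sigma_k^2(q))$ variables. That avoids conditioning on a vanishing-probability event altogether, but the required joint CLT for the streak functional under two independent random arrangements of the same binary sequence is precisely the technical core of the theorem, and your proposal leaves it open.
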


\noindent The convergence \eqref{eq: hot hand convergence} verifies that the randomization distribution of the test statistic $\hat{D}_{i,k}$ settles down to its limiting normal distribution. Even for $n=100$, the limiting normal distribution is a good approximation to the randomization distribution. The purple curves in Figure \ref{fig: hot-hand} display densities of the normal distributions $N\left(0,n^{-1}\sigma_k^{2}(\hat{q}_i)\right)$, where we proxy the unknown true success probabilities $q_i$ with their plug-in estimates $\hat{q}_i = n^{-1}\sum_{j=1}^n X_{i,j}$. The green curves result from shifting the purple curves to the left by the means of the randomization distributions, i.e., they display a bias-correction to the limiting normal approximation.

Moreover, \cite{ritzwoller2022uncertainty} demonstrate that results analogous to Theorem \ref{thm: hot hand theorem} hold even if the null hypothesis \eqref{eq: individual null} is violated. In particular, they give simple regularity conditions under which statements analogous to \eqref{eq: hot hand convergence} hold, that permit the distribution of the sequence $X_{1,i},\ldots,X_{n,i}$ to exhibit positive sequential dependence. These results allow \cite{ritzwoller2022uncertainty} to measure the power of the randomization test exhibited in Figure \ref{fig: hot-hand} against realistic alternatives. They find that GVT did not collect enough data to detect reasonable departures from randomness. Moreover, they conclude that the \cite{miller2018surprised} finding that the data deviate from randomness was entirely driven by Shooter 109, displayed on the left in Figure \ref{fig: hot-hand}. If this shooter is omitted from the sample, the data are insufficient to reject the null hypotheses (\ref{eq: individual null}). That is, the empirical literature of the hot hand fallacy is founded on conclusions drawn from insufficient samples. In our view, this is itself compelling, if circumstantial, evidence of belief in the law of small numbers. 

\section{Two-sample Permutation Tests}\label{section:two}
	
Consider again the setting of Example \ref{example:2}. That is, assume   $X_1, \ldots, X_m$ are i.i.d.\ $P$ and, independently, $Y_1, \ldots, Y_n$ are i.i.d.\ $Q$.  For example, one group may be considered a treatment group, and the other a control group. In many cases, researchers may be interested in testing whether $P=Q$, or perhaps just $\theta (P) = \theta (Q)$, where $\theta (P)$ could be the mean of $P$, a quantile of $P$, or any other parameter or functional. See \cite{bertanha2023permutation} for a treatment of two-sample permutation tests of differences between functionals estimable at nonparametric rates.

\subsection{Large-Sample Behavior}

For testing equality of distributions $P=Q$, permutation tests achieve exact Type 1 error control, as the randomization hypothesis holds. For example, one could base a test on the two-sample Kolmogorov-Smirnov statistic, or some generalization, such as comparing empirical probabilities over a Vapnik-Cervonenkis class, as in \cite{romano:1990}. Such an omnibus test would have exact Type 1 error control, and would be consistent in power against any distributions $P$ and $Q$ with $P \ne Q$. For testing equality of means, the difference in sample means would be a more appropriate test statistic. In this case, as we will see, permutation tests may still fail to control the Type 1 error rate, even asymptotically. 

Assume that estimators $\hat \theta_m$ and $\hat \theta_n$ of $\theta ( P)$ and $\theta (Q)$ are asymptotically linear in the sense that
\[m^{1/2} ( \hat \theta_m  - \theta (P) ) = \frac{1}{\sqrt{m}} \sum_{i=1}^m \psi_P ( X_i ) + o_P(1)~, \]
for some function $\psi_P(\cdot)$, where $\E_P \psi_P(X_i) = 0$ and $0< \sigma^2 (P) = \text{Var}_P (\psi_P (X_i )) < \infty$. Consider the test statistic\footnote{Note that the factor $m^{1/2}$ plays no role, as the permutation test with or without such a constant factor results in the same $p$-value; it is only used to ensure that the permutation distribution has a non-degenerate limit distribution.}
\begin{equation}\label{eq: diff in theta}
    T_{m,n} = m^{1/2} ( \hat \theta_m ( X_1 , \ldots , X_m ) - \hat  \theta_n ( Y_1 , \ldots , Y_n ) )~.
\end{equation}
In general, the permutation distribution fails to recover the true null sampling distribution, as seen in the following theorem, stated as Theorem 2.1 in \cite{chung2013exact}. The proof follows by verifying Hoeffding's Condition.
		
\begin{theorem}\label{theorem:two}
Assume $X_1 , \ldots , X_m$ are i.i.d.\ $P$ and, independently, $Y_1 , \ldots , Y_n$ are i.i.d.\ $Q$.   Let $m \to \infty$, $n \to \infty$, with
	$N = m+n$, $p_m = m/N$  and $p_m \to p \in (0,1)$ with $ p_m - p = O( m^{-1/2} )~.$
	Let $\bar P$ be the mixture distribution: $\hat P = pP + (1-p) Q$.
Consider testing the null hypothesis $H_0: \theta(P) = \theta(Q)$ based on a test statistic \eqref{eq: diff in theta} where the estimator $\hat \theta_n$ is assumed to be asymptotically linear under $P$, $Q$ and $\bar P$. Then, the permutation distribution $\hat R_{m,n} ( \cdot )$ based on $T_{m,n}$ satisfies
\begin{equation}
    \sup_t | \hat R_{m,n}  (t) - \Phi ( t / \tau (\bar P))| \pto 0~,
\end{equation}
where $\tau^2 ( \bar P ) = \frac{1}{1-p} \sigma^2 ( \bar P )$.
\end{theorem}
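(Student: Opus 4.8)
The plan is to verify Hoeffding's Condition for the sequence $T_{m,n}$ under the group $\mathbf{G}_N$ of all $N!$ permutations of the pooled sample $Z^{(N)} = (X_1, \ldots, X_m, Y_1, \ldots, Y_n)$, and then to invoke the theorem of Section 3.1 that translates Hoeffding's Condition into convergence of the randomization distribution. Fix a permutation $\pi$; then $T_{m,n}(Z^{(N)}_\pi)$ is $m^{1/2}$ times the difference of $\hat\theta_m$ applied to the first $m$ permuted coordinates and $\hat\theta_n$ applied to the last $n$. The first move is to linearize. Since a uniformly random size-$m$ subset drawn from the pool behaves to first order like an i.i.d.\ sample from the pooled empirical distribution, which converges to the mixture $\bar P$, I would use the assumed asymptotic linearity of $\hat\theta_m$ and $\hat\theta_n$ under $\bar P$ to write, with $W_j = \psi_{\bar P}(Z_j)$,
$$T_{m,n}(Z^{(N)}_\pi) = m^{1/2}\big(\bar W_m^\ast - \bar W_n^\ast\big) + o_P(1)~,$$
where $\bar W_m^\ast$ and $\bar W_n^\ast$ are the averages of the $W_{\pi(j)}$ over the two permuted groups.

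Next I would treat the leading term as a statistic under sampling without replacement. Using $m\bar W_m^\ast + n\bar W_n^\ast = \sum_j W_j$, the leading term reduces to $m^{1/2}(N/n)(\bar W_m^\ast - \bar W_N)$, where $\bar W_N$ is the pooled average. Conditionally on the data, $\E[\bar W_m^\ast] = \bar W_N$, so the conditional mean is exactly zero, and the finite-population variance formula gives a conditional variance equal to $\tfrac{N^2}{n(N-1)}S_N^2$, which tends to $(1-p)^{-1}\sigma^2(\bar P)$ in probability, where $S_N^2 = N^{-1}\sum_j (W_j - \bar W_N)^2 \pto \sigma^2(\bar P)$ by the law of large numbers for the pooled sample. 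A combinatorial central limit theorem for sampling without replacement (Hoeffding's, whose Hájek--Noether condition holds because $\E_{\bar P}\psi_{\bar P}^2 < \infty$ forces the maximal squared deviation to be negligible relative to $S_N^2$) then yields, conditionally on the data, convergence in distribution to $N(0,\tau^2(\bar P))$ in probability. The identification $(1-p)^{-1}\sigma^2(\bar P) = p(1-p)^{-1}\sigma^2(P) + \sigma^2(Q)$ amounts to the algebraic identity $\sigma^2(\bar P) = p\,\sigma^2(P) + (1-p)\sigma^2(Q)$, which I would verify directly; it is transparent in the leading case of a difference in means, where under the null $\psi_{\bar P}$, $\psi_P$, and $\psi_Q$ coincide up to a common centering.

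To complete the verification I need the joint statement for two independent draws $G_N, G_N'$. Conditionally on the data, the permuted statistics from $G_N$ and $G_N'$ are functions of independent permutations and hence exactly independent, and each converges conditionally to $N(0,\tau^2(\bar P))$ by the previous step; consequently the pair converges to a product of two independent $N(0,\tau^2(\bar P))$ laws. Unconditionally this gives $(T_{m,n}(G_N Z^{(N)}), T_{m,n}(G_N' Z^{(N)})) \Larrow (T, T')$ with $T, T'$ independent, each with c.d.f.\ $R = \Phi(\cdot/\tau(\bar P))$, which is precisely Hoeffding's Condition. The Section 3.1 theorem then gives $\hat R_{m,n}(t) \pto \Phi(t/\tau(\bar P))$ at every continuity point of $R$; since $R$ is everywhere continuous this holds for all $t$, and a standard monotonicity argument (a finite grid together with the monotonicity of c.d.f.'s, i.e.\ Pólya's theorem applied in probability) upgrades this to the uniform conclusion $\sup_t |\hat R_{m,n}(t) - \Phi(t/\tau(\bar P))| \pto 0$.

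The main obstacle is the linearization step. Asymptotic linearity is hypothesized for i.i.d.\ sampling from a fixed law, whereas the permuted estimators act on subsets drawn without replacement from the pool. Making precise that a random size-$m$ subset behaves like an i.i.d.\ $\bar P$ sample and, crucially, that the remainder is $o_P(1)$ with enough uniformity to survive conditioning on the data is the delicate part of the argument, and it is exactly where the assumed asymptotic linearity under all three of $P$, $Q$, and $\bar P$ is used.
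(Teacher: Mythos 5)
Your route is the paper's route: the paper's entire proof consists of the remark that the result ``follows by verifying Hoeffding's Condition'' (deferring to Chung and Romano, 2013), and your verification --- linearize the permuted statistic with the influence function $\psi_{\bar P}$, reduce the leading term to a sample mean under sampling without replacement from the pool, apply a combinatorial CLT conditionally on the data, and use the independence of two independent random permutations to obtain the joint limit in \eqref{equation:13.rand.6} --- is exactly the argument behind that citation. You also correctly identify the crux: asymptotic linearity is assumed under i.i.d.\ sampling, while the permuted estimators act on without-replacement draws from the pool; the standard device for closing this gap is a coupling of the pooled sample with a genuine i.i.d.\ $\bar P$ sample of size $N$ differing in only $O_P(\sqrt N)$ coordinates, which is the lemma your sketch would need to supply.

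One step as written would fail. Your computation correctly yields the limiting permutation variance $(1-p)^{-1}\Var_{\bar P}(\psi_{\bar P}(Z))$, but the ``algebraic identity'' $\Var_{\bar P}(\psi_{\bar P}) = p\,\sigma^2(P) + (1-p)\,\sigma^2(Q)$ that you propose to ``verify directly'' is false in general: it holds for the difference of means, where $\psi_P$, $\psi_Q$, and $\psi_{\bar P}$ coincide up to a common centering, but not, for instance, for the difference of medians --- compare the paper's own example immediately following the theorem, where the permutation variance is $(1-p)^{-1}\bigl[4(pf_P(\theta)+(1-p)f_Q(\theta))^2\bigr]^{-1}$, which does not equal $(1-p)^{-1}\bigl(p\sigma^2(P)+(1-p)\sigma^2(Q)\bigr)$. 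The general form of the conclusion is $\tau^2(\bar P) = (1-p)^{-1}\Var_{\bar P}(\psi_{\bar P}(Z))$, which is precisely what your derivation produces; you should state the limit in that form rather than attempt the identification with the displayed $\tau^2(\bar P)$, which is the means-case specialization. Your hedge that the identity is ``transparent in the leading case of a difference in means'' was the right instinct --- that is the only case in which it is available.
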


Note that, for the case of the mean (or linear functionals),  $\tau^2 (\bar P) = p (1-p)^{-1} \sigma^2(P) + \sigma^2(Q)$.
By the central limit theorem, the true unconditional sampling distribution of $T_{m,n}$ satisfies 
\begin{equation}
T_{m,n} \approx \mathcal{N}\left(0, \sigma^2(P) + p(1-p)^{-1}\sigma^2(Q)\right)~.\label{eq: dem sample}
\end{equation}
By contrast, Theorem \ref{theorem:two} demonstrates that the permutation distribution satisfies
\begin{equation}
\hat{R}_{m,n} \approx \mathcal{N} \left(0, p (1-p)^{-1} \sigma^2(P) + \sigma^2(Q) \right )~. \label{eq: dem perm}
\end{equation}
Thus, the permutation distribution is asymptotically different than the true sampling distribution, unless asymptotic variances are equal. 
In the case of the mean, equality occurs if and only if $p = 1/2$ or the variances of $P$ and $Q$ agree. Consequently, the critical value from the permutation distribution may be inconsistent and permutation tests based on $T_{m,n}$ may yield a rejection probability arbitrarily close to 1 for two-sided testing (and close to 1/2 for one-sided testing).
For example, if the true limiting variance is very large but the variance of the permutation distribution is very nearly zero (so that the critical value is nearly 0), then the chance that permutation test rejects can  be quite large.
More specifically, take $P$ to be $N(0, B )$ and $Q$ to be $N( 0,  \epsilon )$.
Then, the true limiting distribution has variance $B + \frac{p \epsilon }{1-p}$,
which can be made arbitrarily large by taking $B$  large.  On the other hand, the limiting variance of the permutation distribution is $\epsilon + \frac{Bp}{1-p}$,
which can be made arbitrarily small by choosing $\epsilon$ and $p$ small enough.
			
Next, consider testing equality of medians based on the difference of sample medians.  If both groups have median $\theta$, the true asymptotic variance  of the normalized difference is 
$$\frac{1}{4 f_P^2 ( \theta )} + \frac{p}{1-p} \frac{1}{4 f_Q^2 ( \theta ) } ~,$$
where $f_P ( \theta )$ and $f_Q ( \theta )$ denote the densities of the distributions $P$ and $Q$ evaluated at $\theta$. By contrast, the asymptotic variance of the permutation distribution is $$ \frac{1}{1-p} \sigma^2 ( \bar P ) = \frac{1}{1-p} \frac{1}{4 ( pf_P ( \theta )  + (1- p) f_Q ( \theta))^2}~.$$ These match  if and only  if $f_P ( \theta ) = f_Q ( \theta )$, which clearly need not hold.
		
\subsection{Inconsistent Error Control}

The incongruity between the unconditional sampling distribution \eqref{eq: dem sample} and the permutation distribution \eqref{eq: dem perm} has a severe adverse effect on error control. To see this, consider testing the null hypothesis $H_0$ of  equality of means against two-sided alternatives, based on the test statistic $ | \bar X_m - \bar Y_n |$. Suppose that the distribution $Q$ has a much smaller variance than the distribution $P$ and the proportion of the total observations sampled from $Q$, i.e., $p_m$ is small. In this case, the variance of the unconditional sampling distribution \eqref{eq: dem sample} is larger than the variance of the permutation distribution \eqref{eq: dem perm}, and so the permutation test will over reject, under the null. As the ratio of the variances of $Q$ and $P$ converges to zero, the rejection rate converges to one. 

Similarly, suppose that the mean of $Q$, $\theta(Q)$, is slightly bigger than the mean of $P$, $\theta(P)$. In this case, the probability of the event that the null hypothesis $H_0$ is rejected and $\bar X_m > \bar Y_n$ is nearly one half, where we recall that $X_i$ and $Y_i$ are samples from $P$ and $Q$, respectively. That is, the null hypothesis $H_0$ is rejected on the basis of evidence $\bar X_m > \bar Y_n$ contradictory to the truth $\theta(Q) > \theta(P)$. This is referred to as a directional, or Type III, error \citep{mosteller1948k}.\footnote{Errors of this form are dangerous, because, inevitably, a rejection of $H_0$  would be accompanied by a statement that the difference in means is positive. Indeed, one can view 
directional errors as even more serious than Type 1 errors. \cite{tukey1991philosophy} writes
``Statisticians classically asked the wrong question--and were willing to answer with a lie, one that
was often a downright lie.  They asked `Are the effects of $A$ and $B$ different?' and they were willing
to answer `no'.
All we know about the world teaches us that the effects of $A$ and $B$ are always different --
in some decimal place -- for any $A$ and $B$.  Thus asking `Are the effects different?' is foolish.
What we should be answering first is `Can we tell the direction in which the effects of $A$ differ
from the effects of $B$?." Technically, lack of Type 1 error control implies lack of Type 3 error control, and it is important to control both.  Similarly, low power, or high Type 2 error, can result without proper choice of test statistic.}

As will be seen shortly, the problem of mismatched asymptotic variances can be fixed.  But first, we provide some intuition. Recall that when $P=Q$, the permutation distribution should reflect the true sampling distribution, as mentioned in Remark \ref{remark:intuition}. So, asymptotically, the permutation distribution should  settle down to  the true unconditional distribution of the statistic $T_{m,n}$, at least when $P = Q$. Now, the permutation distribution is invariant with respect to ordering of the data. Thus, even when $P \ne Q$, the data  should behave similarly to the situation when all $N = m+n$ observations are  i.i.d. from the mixture distribution $\bar P = pP + (1-p)Q$, where $p =\frac{m}{N}$.   Therefore, the behavior of the permutation distribution under $(P,Q)$ should be approximately the same as under $( \bar P, \bar P )$.\footnote{This is not a formal argument. The result does hold under the assumptions in Theorem \ref{theorem:two}, but must be verified by technical arguments.}

Let $J_{m,n} ( P, Q )$ be the distribution of the test statistic sequence $T_{m,n}$ when $m$ i.i.d.\ observations are sampled from $P$ and $n$ i.i.d.\ observations are samples from $Q$. By the argument above, the permutation distribution $\hat R_{n,m}$ under $P = Q$  should satisfy
\begin{equation}\label{equation:pivot}
\hat R_{m,n} \approx J_{m,n} ( \bar P , \bar P ) \dto J ( \bar P , \bar P ) = J (P,Q)~,
\end{equation}
where $J (P, Q)$ is the limiting distribution of $T_{m,n}$ under $(P, Q)$ and may depend on $p = \lim m/N$. When $J( P, Q)$ does not equal $J ( \bar P , \bar P )$, the permutation distribution will be inconsistent. 

This suggests a solution. As the problem occurs when $J ( P, Q) \ne J ( \bar P , \bar P )$, consistency should result if an asymptotically pivotal test statistic is chosen. That is, the problem is resolved if $J (  P , Q)$ is the same for all $(P , Q)$. To summarize,
if, when the null hypothesis $H_0$ is true, the true sampling distribution $J_{m,n}(P,Q)$ of the test statistic $T_{m,n}$ under $(P,Q)$ is 
asymptotically pivotal, then
\begin{equation}
    J_{m,n}  (P,Q)  \approx J_{m,n} ( \bar P , \bar P ) \approx\hat R_{m,n, }
\end{equation}
when sampling from either $( \bar P , \bar P )$ or $(P , \bar Q )$.

\subsection{Studentization}

A simple way to achieve asymptotic pivotality is to appropriately studentize the test statistic. Suppose that $\hat \sigma^2_m ( X_1 , \ldots , X_m )$ is estimator of the variance $\sigma^2(P) = \Var_P (\psi_P(X_i))$. Let
\begin{equation} \label{eq: two sample student stat}
    S_{m,n}(Z^{(N)}) = \frac{T_{m,n}}{\sqrt{ \frac{N}{m}\sigma^2_m(X_1,\ldots,X_m)+ \frac{N}{n} \sigma^2_n(Y_1,\ldots,Y_n)}}
\end{equation}
denote the studentized test statistic. The permutation distribution of the statistic $S_{m,n}$ is 
\begin{equation}\label{eq: two sample student dist}
\hat{R}^S_{m,n}(t) = \frac{1}{N!}\sum_{\pi \in G_N} \I\{S_{m,n}(Z_{\pi(1)}, \ldots, Z_{\pi(N)}) \leq t \} ~,
\end{equation}
where $\mathbf{G}_{N}$ denotes the $N!$ permutations of $\{1, \ldots, N\}.$ 

The studentized permutation test is just as easy to implement as an unstudentized test. Its asymptotic validity is established by the following theorem, stated as Theorem 2.2 in \cite{chung2013exact}.
\begin{theorem}
Assume the setup and conditions of Theorem \ref{theorem:two}.   Suppose that $\hat \sigma_m ( X_1 , \ldots , X_m )$ is a consistent estimator of $\sigma (P)$ when $X_1 , \ldots , X_m$
are i.i.d. $P$.  Assume consistency also under $Q$ and $\bar P$. The permutation distribution \eqref{eq: two sample student dist} based on the statistic \eqref{eq: two sample student stat} satisfies
\[ \sup_t | \hat R^S_{m,n} (t) - \Phi (t) | \pto 0~.\]
\end{theorem}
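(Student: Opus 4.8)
The plan is to deduce the conclusion by verifying Hoeffding's Condition \eqref{equation:13.rand.6} for the studentized statistic $S_{m,n}$ with limiting c.d.f.\ $R = \Phi$, and then invoking the theorem following the definition of Hoeffding's Condition. That theorem delivers $\hat R^S_{m,n}(t) \Parrow \Phi(t)$ at every continuity point $t$ of $\Phi$, which is every $t$; since $\Phi$ is continuous and each $\hat R^S_{m,n}$ is a (random) c.d.f., pointwise convergence in probability to a continuous limit upgrades to the uniform statement $\sup_t |\hat R^S_{m,n}(t) - \Phi(t)| \Parrow 0$ by the usual Pólya argument (discretize $t$ on a fine grid and use monotonicity of $\hat R^S_{m,n}$ and $\Phi$). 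Thus the entire burden is to establish, under $P_n$, the joint convergence
\[
(S_{m,n}(G_N Z^{(N)}),\, S_{m,n}(G_N' Z^{(N)})) \Larrow (S, S')~,
\]
where $S, S'$ are independent $N(0,1)$ and $G_N, G_N'$ are independent uniform permutations, independent of the data.

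First I would separate numerator and denominator by writing $S_{m,n} = T_{m,n}/\hat D_{m,n}$, where $\hat D_{m,n}$ is the square root of the studentizing denominator in \eqref{eq: two sample student stat}. For the numerator, the proof of Theorem \ref{theorem:two} already verifies Hoeffding's Condition for $T_{m,n}$: evaluated at two independent permutations, $(T_{m,n}(G_N Z^{(N)}), T_{m,n}(G_N' Z^{(N)}))$ converges jointly to a pair of independent centered normals, each with the permutation variance $\tau^2(\bar P)$. I would import this conclusion directly.

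The crux is the denominator. I would show that, under $P_n$, $\hat D_{m,n}(G_N Z^{(N)})$ converges in probability to a finite positive deterministic constant, and likewise for $G_N'$. The first $m$ coordinates of $G_N Z^{(N)}$ form a simple random sample drawn without replacement from the pooled observations $Z_1, \ldots, Z_N$, whose empirical distribution converges to the mixture $\bar P = pP + (1-p)Q$. The assumed consistency of the variance estimator under $\bar P$ is exactly what lets me conclude that $\sigma^2_m$ applied to this subsample converges in probability to $\sigma^2(\bar P)$, with the analogous statement for $\sigma^2_n$ on the complementary block; together with the convergence of the ratios $N/m$ and $N/n$, the studentizing denominator concentrates at a nonrandom limit which, by the design of the studentization, coincides with the standard deviation of the numerator's permutation limit. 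A Slutsky argument applied inside the joint convergence of the previous paragraph then cancels this common scale and produces the display above with $S, S'$ independent $N(0,1)$, completing the verification of Hoeffding's Condition.

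The hard part will be this denominator step, precisely because of the double source of randomness: $\sigma^2_m$ and $\sigma^2_n$ are random both through the data and through the permutation, and one must show they concentrate at the deterministic constant $\sigma^2(\bar P)$ jointly over both. The delicate point is that a without-replacement subsample of the pool is not i.i.d.\ from $\bar P$, so an i.i.d.\ law of large numbers cannot be applied verbatim; instead I would argue that the subsample empirical measure still converges to $\bar P$ for the purpose of the relevant limit, controlling the finite-population correction and supplying the moment and uniform-integrability conditions that are implicit in the consistency hypotheses under $P$, $Q$, and $\bar P$. This is also the reason the theorem explicitly assumes consistency of the variance estimator under the mixture $\bar P$, and not merely under $P$ and $Q$: it is the distribution $\bar P$, rather than $P$ or $Q$ alone, that governs the behavior of the studentizer once the observations are permuted.
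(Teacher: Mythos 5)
Your proposal is correct and follows essentially the same route as the paper's source for this result (Theorem 2.2 of Chung and Romano, 2013, which the paper cites in lieu of a proof): verify Hoeffding's condition for the studentized statistic by combining the joint permutation limit of the numerator already established for Theorem \ref{theorem:two} with convergence in probability of the studentizing denominator to the matching constant built from $\sigma^2(\bar P)$, then conclude via Slutsky and a P\'olya-type uniformity argument. You also correctly isolate the one genuinely delicate step---transferring the assumed consistency of $\hat \sigma^2_m$ under i.i.d.\ sampling from $\bar P$ to a without-replacement subsample of the pooled data---which is exactly the point the source handles with a coupling construction.
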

\noindent To summarize, asymptotically, the permutation distribution of the studentized statistic now behaves like the true unconditional distribution  of the studentized statistic. The permutation test based on a studentized statistic is asymptotically valid for testing $\theta (P) = \theta (Q)$ and retains exact Type 1 error control when $P = Q$. The technical conditions are very weak and do not require strong differentiability assumptions.  Under such conditions, the bootstrap may not even be first order asymptotically correct. See \cite{yadlowsky2021evaluating} and \cite{ritzwoller2024uniform} for further applications that leverage this generality.

\begin{figure}[t]
\begin{centering}
\caption{Stable Error Control with Studentization}
\label{fig: cps}
\medskip{}
\begin{tabular}{c}
\includegraphics[scale=0.4]{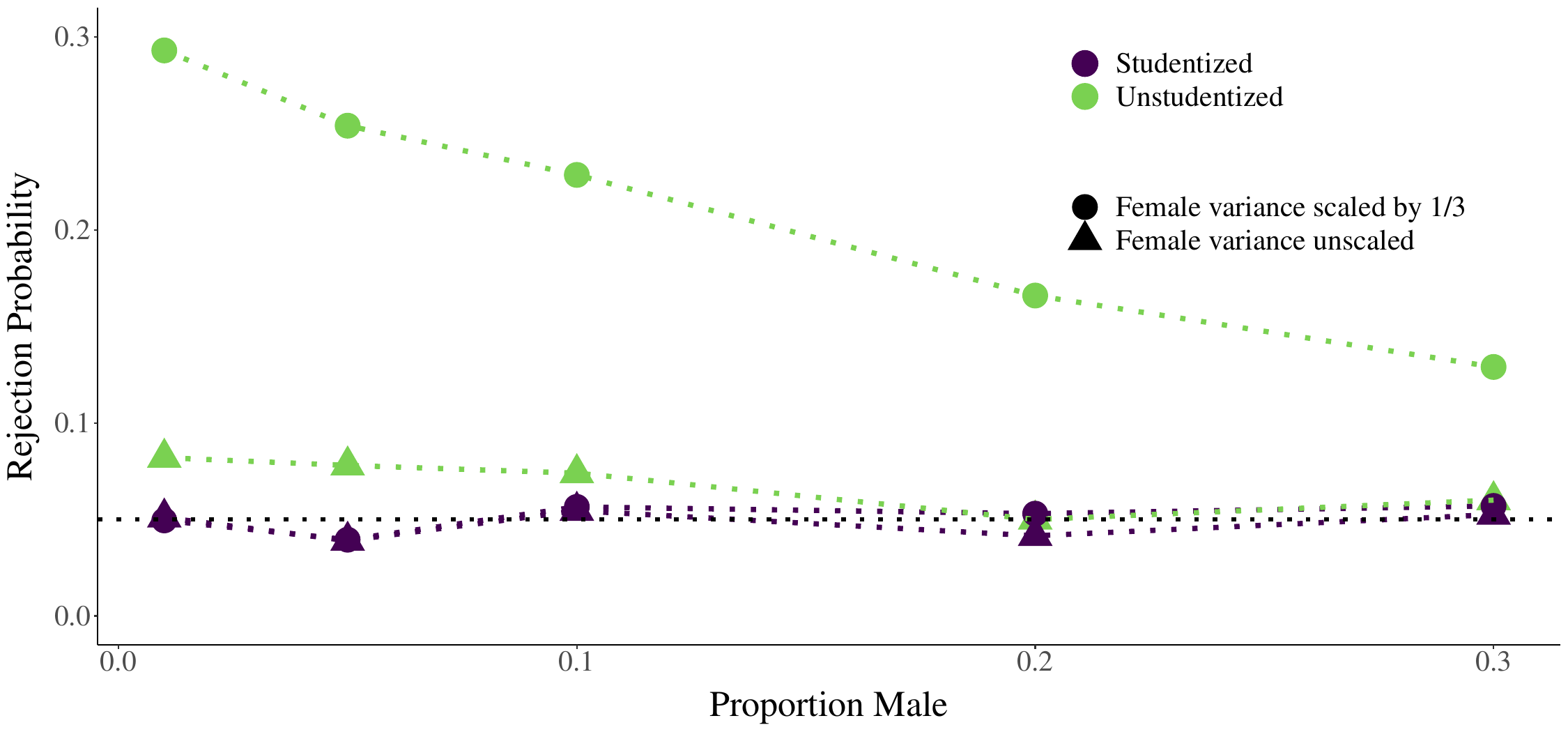}\tabularnewline
\end{tabular}
\par\end{centering}
\medskip{}
\justifying
{\footnotesize{}Notes: Figure \ref{fig: cps} displays the results of a Monte Carlo experiment. The experiment is implemented using measurements of the log earnings for populations of men and women from the March 2009 Current Population Survey (CPS). See the main text for further information on the design of the experiment. Each measurement is made using 2,000 samples of $N=100$ observations from the CPS, where we vary the proportion $p$ of men in the sample. Randomization tests are implemented using 10,000 permutations.}{\footnotesize\par}
\end{figure}

To get a quantitative view of the difference between randomization tests implemented with and without studentized test statistics, we conduct a  Monte Carlo experiment. We obtain measurements of the log earnings of 50,742 fully employed individuals from the March 2009 Current Population Survey (CPS).\footnote{These data were downloaded from Bruce Hansen's website: \url{https://users.ssc.wisc.edu/~bhansen/econometrics/}.} We consider two settings. In both settings, for the purposes of the experiment, we normalize the measured log earnings from 29,140 men and 21,602 women so that both have means equal to zero.\footnote{The average measured log earnings for men and women in the sample are 10.8 and 10.5, respectively.} In the first setting, we use these normalized distributions as the population distributions $P$ and $Q$, respectively. In this case, the variance of log earnings for men (0.531) is larger than for women (0.392). In the second setting, we reduce the variance of the log earnings for the female population by a factor of 1/3, i.e., we multiply these data by $1/\sqrt{3}$

In the experiment, we consider a random sample of $N=100$ measurements of log earnings from the CPS data. Here, $m = p\cdot N$ of the observations are from men. We are interested in testing the null hypothesis that the average log earnings of men and women are equal. As we have normalized the two populations $P$ and $Q$ to both have means equal to zero, the null hypothesis is true. Figure \ref{fig: cps} displays measurements of the rejection rates of the randomization tests using the studentized and unstudentized difference in means, where we vary the proportion $p$ of men in the sample. In both settings, the randomization test that uses the unstudentized test statistic over-rejects. This over-rejection is more severe in the setting where the variance of the measurements of female log earnings has been scaled to be smaller than the variance of male log earnings. On the other hand, in both settings, the rejection rate for the randomization test using the studentized statistic is approximately correct. 

\subsection{Generalizations}

The intuition for the results for the two-sample problem apply quite generally.  We will touch upon four possibilities: (i) higher-order kernel $U$-statistics, (ii) multiple samples, (iii) multivariate observations, and (iv) multiple testing.

\subsubsection{$U$-statistics}  Rather than basing a two-sample test on a statistic that is a difference of estimators, one can consider
the class of two-sample $U$-statistics of the form:
\begin{equation}
    U_{m,n}(Z)  = \frac{1}{{m \choose r}{n \choose r}} \sum_{\alpha} \sum_{\beta} \varphi(X_{\alpha_1}, \ldots, X_{\alpha_r}, Y_{\beta_1}, \ldots, Y_{\beta_r})~.
\end{equation} 	
Notably, this includes the popular two-sample Wilcoxon statistic (or equivalently, the Mann-Whitney statistic), given by
\begin{equation}		
    W_{m,n} = \frac{1}{mn} \sum_{i=1}^m \sum_{j=1}^n {I(X_i \leq Y_j)}~.
\end{equation} 	
one could tabulate its null distribution for given sample sizes $m$ and $n$.  
Since $W_{m,n}$ is a rank statistic, if there are no ties among the observations, then the permutation distribution can be tabled as it no longer depends on the data at all; that is, 

Similar to a test based on differences in sample means, the Wilcoxon test is exact for testing $P = Q$.
Note, however $W_{m,n}$ is unbiased for 
$$ \theta (P,Q) \equiv E_{P,Q} I \{ X_i \le Y_j \} =  P \{ X_i \le Y_j \}~,$$
and any inference about $\theta (P,Q)$ is invalid based on the permutation test using the test statistic $W_{m,n}$.
Following rejection of such a permutation, in order to claim $\theta (P,Q) > 1/2 $ (or $< 1/2$), the intuition provided earlier applies.  Therefore, if one studentizes appropriately, then asymptotic validity follows.  In particular, define the statistic
\begin{equation}
\tilde{W}_{m,n} = \frac{W_{m,n} - \frac{1}{2}}{\sqrt{\frac{N}{m}\hat{\xi}_{x} + \frac{N}{n}\hat{\xi}_{y}}},
\end{equation}
where
\begin{align}
\hat{\xi}_{x} 
& = \frac{1}{m-1} \sum_{i=1}^m \left(\frac{1}{n}(S^x_i - i) - \frac{1}{mn}\sum_{i=1}^m 
(S_i - i) \right)^2~,\quad
\hat{\xi}_{y} 
= \frac{1}{n-1} \sum_{j=1}^n \left(\frac{1}{m}(S^y_j - j) - \frac{1}{mn}\sum_{j=1}^m 
(S^y_j - j) \right)^2 ~,\nonumber
\end{align}
and $S^x_i$ and $S^y_j$ are the ranks of $X_i$ and $Y_j$ in the combined sample, respectively. That is, $\tilde W_{m,n}$ is a rank statistic and so it retains all the benefits of the usual two-sample Wilcoxon test as a rank test. The permutation test based on $\tilde W_{m,n}$ is exact under $P = Q$ and is asymptotically valid for tests of the parameter $\theta (P,Q) = 1/2.$ The test has the same high (Pitman) asymptotic relative efficiencies compared to the $t$-test, even under normality, but it is robust against non-normality. See \cite{chung2016asymptotically}, \cite{berrett2020conditional}, and \cite{kim2022minimax} for further consideration of two-sample permutation tests based on $U$-statistics.

\subsubsection{Multiple Samples}
		
Two-sample testing generalizes to the analysis of variance. Consider observations from $k$ samples, with underlying distributions $P_1 , \ldots , P_k$.  For testing 
\[H_0: \theta (P_1 )  = \cdots = \theta (P_k )\]
against the alternative
\[H_1: \theta ( P_i ) \ne \theta ( P_j )~~~{\rm for~some~}i,~j~. \]
For testing many means, the problem can be viewed as a nonparametric generalized Behrens-Fisher problem \citep{janssen1997studentized}. As before, without an appropriate choice of test statistic, permutation tests can be invalid for making inferences (such as directional claims) regarding parameters. But the same intuition suggests that, at least asymptotically, one should base a test on an asymptotically distribution-free or pivotal test statistic. To this end, let
\[T_{n} = \sum_{i=1}^k \frac{n_i}{\hat \sigma_{n,i}^2} \left( 
\hat \theta_{n,i} - \frac{ \sum_{i=1}^k n_i \hat \theta_{n,i} / \hat \sigma_{n,i}^2 }{\sum_{i=1}^k n_i/ \hat \sigma_{n,i}^2 } \right )^2~,\]
where $\hat \sigma_{n,i} \equiv \hat \sigma_{n,i}(X_{i,1}, \ldots, X_{i, n_i})$ is a consistent estimator of $\sigma_i = \sigma_i(f_{P_i})$. 
Under finite second moments, under $H_0$, $T_{n}$ converges in distribution to the Chi-squared distribution with $k-1$ degrees of freedom, and hence is asymptotically pivotal.
It follows that the permutation test based on $T_n$ asymptotically controls the Type 1 rejection probability when $H_0$ is true, while still retaining its exactness when $P_1 = \cdots = P_k$.
For details, see \cite{chung2013exact}.
 	
\subsubsection{Multivariate Permutation Test}\label{sec:mpt}
The same ideas extend to vector-valued observations as well. Suppose $X_1, \ldots, X_m$ are $d$-dimensional i.i.d. $P$ with mean vector $(\mu_1(P), \ldots, \mu_d(P) )$ and covariance matrix $\Sigma_P$, and independently, $Y_1, \ldots, Y_n$ are $d$-dimensional i.i.d. $Q$  with mean vector$(\mu_1(Q), \ldots, \mu_d(Q))$ and covariance  $\Sigma_Q $.  Consider testing
\begin{equation}
    H_0: \mu_k(P) = \mu_k(Q)~~\mbox{for all} ~~k \in [d]
    \quad\text{against}\quad
    H_1: \mu_k(P) \neq \mu_k(Q)~~\mbox{for some} ~~k \in [d]~,
\end{equation}
based on the test statistic 
\begin{equation}	
	T_{m,n} = \left(T_{m,n,1}, \ldots, T_{m,n,d} \right) = N^{1/2}\left(\frac{1}{m}\sum_{i=1}^mX_i - \frac{1}{n}\sum_{j=1}^n Y_j\right)~
\end{equation}
Since $T_{m,n}$ is not asymptotically pivotal, permutation tests may be invalid. However, we may consider a modified Hotelling's $T^2$ statistic, defined by
\begin{equation}\label{equation:smn}	
S_{m,n} = ||\hat \Sigma^{-1/2} T_{m,n}||^2 = T_{m,n}' \hat \Sigma^{-1} T_{m,n} ~,
\end{equation}
where
$\hat \Sigma = \frac{1}{p} \hat \Sigma_P + \frac{1}{1-p} \hat \Sigma_Q~$, for some estimators satisfying $\hat \Sigma_P \pto \Sigma_P$, $\hat \Sigma_Q \pto \Sigma_Q$, and $||\cdot||$ denotes the usual Euclidean norm. In this case, one can show that
\begin{equation}		 
\sup_t  | \hat R^S_{m,n} (t) -  \chi_d^2 (t) | \pto 0~,
\end{equation}
where $R^S_{m,n} (t)$ is the permutation distribution based on $S_{m,n}$. That is, again, as $S_{m,n}$ is asymptotically pivotal, permutation tests are asymptotically valid for testing $H_0$ and remain exact when $P =Q$. Alternatively, a permutation test can be based on the maximum difference between the two vectors of means. In that case, an asymptotically pivotal test statistic can be achieved using a ``bootstrap after permuting" algorithm; see \cite{chung2016permutation}.

\subsubsection{Multiple Testing}

Rather than testing a single null hypothesis, consider the multiple testing problem of simultaneously testing $s$ null hypotheses $H_1 , \ldots , H_d$.  Suppose $P$
is the unknown distribution generating the data $X$, with $P$ assumed to belong to a model $\Omega$. 
In general, a null hypothesis $H_i$ may be described as $P \in \omega_i$, where $\omega_i \subset \Omega$ is some subset of the full model.
When testing multiple hypotheses, the goal is to detect which null hypotheses are false.  In order to control for false rejections,  the classical approach 
is to control the familywise error rate (FWER), defined as the probability of at least one false rejection, at some level $\alpha$.  
Alternative measures of error control, such as the false discovery rate,
are reviewed in Chapter 9 of \cite{lehmann:romano:tsh:2022}. 

Since $p$-values of individual hypotheses may be constructed using randomization tests, as in (\ref{eq: p value conserv}),  any method that combines $p$-values may be considered to test multiple hypotheses.  Examples are the methods of \cite{holm:1979} and 
Benjamini Yekutieli method \cite{benjamini:yekutieli:2001}.  However, methods
based on individual (or marginal) $p$-values are generally conservative.

One can sometimes derive tests that are not conservative.  Here, we focus on FWER control. 
One approach is to use  the previous tests of multivariate parameters in conjunction with the {\it closure method} to derive tests of multiple hypotheses \citep{romano2011consonance}. Such tests control the familywise error rate (FWER), and implicitly account for the dependence among the test statistics (and hence offer greater power than Bonferroni/Holm type methods). Moreover, they control FWER exactly in finite samples (when an appropriate  randomization hypothesis holds) and have asymptotic validity otherwise.

To elaborate further, we first describe the closure method.  For any $K \subset \{ 1, \ldots , d \}$, let $H_K$ denote the joint (but single)  null hypothesis that all $H_i$ with $i \in K$ are true; that is, $P \in \bigcap_{i \in K} {\omega_i}$. Suppose $\phi_K$ is a test of $H_K$ that controls the usual probability of a Type 1 error at level $\alpha$.  Now, define the multiple testing method that rejects $H_i$ if and only if $H_K$ is rejected whenever $K$ contains $i$.    Then, the FWER is controlled at level $\alpha$.  

The proof is quick.  For any $P \in \Omega$, suppose $I = I(P)$ is the set of indices $i$ for which $P \in \omega_i$; that is, $I$ is the set of indices of true null hypotheses.   Then, the probability of a false rejection
is bounded above by the chance that $H_I$ is rejected because in order to reject $H_i$ for $i \in I$, $H_I$ must be rejected  (as well as many other $H_K$s). But, by assumption the joint test $\phi_I $ rejects $H_I$ with probability $\le \alpha$.

To make these ideas concrete, consider the setting of Section \ref{sec:mpt} of testing equality of many means.
Instead of a joint test of many means, whose rejection would claim at least one difference is significant without specifying which ones,  we are now interested in testing 
individual differences; that is, let
$$H_i : \mu_i (P) = \mu_i (Q)~,
$$
so that $H_i$ specifies the $i$th components of the mean vectors of the two populations are equal.
Then, for any $K \subset \{ 1 , \ldots , d 
\}$, $H_K$ specifies equality of means for $i \in K$.
Construct the joint test of $H_K$ using the test statistic $S_{m,n} $ given in
(\ref{equation:smn}), except use only components $i$ of the data with $i \in K$.
This test has finite-sample validity under the randomization hypothesis that 
the joint distributions of $X_j $ and $Y_j$ are equal, but more generally when the joint distributions of the components of $X_j$ and $Y_j$ with $i \in K$ are equal.
Furthermore, the randomization test is asymptotically valid otherwise, as described
in Section \ref{sec:mpt}.  Now use these joint tests when applying the closure method.  That is $H_i$ is rejected if and only if all permutation tests of $H_K$ such that $K$ contains $i$ are rejected.  The result is a multiple testing procedure that controls the FWER exactly when the joint distributions of  the components $i \in I$ of  $X_j$ and $Y_j$ are equal, but asymptotically in general (meaning when the means are equal but the distributions may differ). 
For an application of this methodology, \cite{chung2016permutation} revisit the data in
\cite{charness}, who examine the effects of exercise on multiple biometric measures.

The use of permutation and other resampling-based tests for multiple testing originated in \cite{westfall:young:1993}, under a condition called subset pivotality.
This condition was shown to be unnecessary in \cite{romano:wolf:2005jasa}.
Some recent work that uses permutation methodology in multiple testing is
\cite{hemerik2019permutation} and \cite{vesely2023permutation}.

\section{Experiments}\label{section:experiments}

The analysis of experimental data (or data from randomized controlled trials) is particularly well-suited to randomization inference.  To see why, consider an experiment in which units are first sampled i.i.d.\ from some distribution $P$.  Such a sampling scheme is sometimes referred to in the literature on experiments as sampling from a ``superpopulation'' in an effort to distinguish it from alternative ``finite population'' sampling schemes; see below for some discussion on the relationship between these different sampling schemes.  For each unit $i$, denote by $Y_i(1)$ the potential outcome under treatment, by $Y_i(0)$ the potential outcome under control, and by $Z_i$ observed, baseline covariates.  Further denote by $D_i$ the treatment status of the $i$th unit.  For each unit $i$, the observed data is denoted by $X_i = (Y_i, D_i,Z_i)$, where the observed outcome $Y_i$ satisfies 
\begin{equation} \label{eq:obstopot}
Y_i = Y_i(1) D_i + Y_i(0)(1 - D_i)~.
\end{equation}
In what follows, for a generic random variable indexed by $i$, $W_i$, it will be convenient to let $W^{(n)} = (W_1, \ldots, W_n)$.  Using this notation, we will make the following assumption on the treatment assignment rule 
\begin{equation} \label{eq:unconfoundedness}
(Y^{(n)}(1), Y^{(n)}(0)) \independent D^{(n)} | Z^{(n)}~.
\end{equation}
In words, such an assumption requires that the (joint) distribution of treatment status only depends on the observed, baseline covariates.  In practice, this distribution is known to the researcher in the context of an experiment.

Before proceeding, we describe some common treatment assignment rules.  Perhaps the simplest example of such a rule is what is referred to sometimes as simple random sampling, in which $D^{(n)} \independent Z^{(n)}$ and $D_i, i = 1, \ldots, n$ are i.i.d.\ $\sim$ Bernoulli$(q)$ for some known $0 < q < 1$.  Another common treatment assignment rule is what is known as complete randomization in which again $D^{(n)} \independent Z^{(n)}$ and $D^{(n)}$ is uniformly distributed over vectors $d^{(n)} = (d_1, \ldots, d_n)$ in which each $d_i \in \{0,1\}$ and $\sum_{1 \leq i \leq n} d_i = m$ for some known $0 < m < n$.  A somewhat more complicated treatment assignment scheme is stratified block randomization.  As the name suggests, in this treatment assignment scheme units are first divided into $s$ strata according to $S_i = S(Z_i)$, where $S : \text{supp}(Z_i) \rightarrow \{1, \ldots, s\}$, and then, within each stratum (and independently across strata), units are assigned to treatment according to complete randomization with $m \approx q \times n(s)$, where $n(s) = \sum_{1 \leq i \leq n} I\{S_i = s\}$.  Finally, a fourth treatment assignment scheme that is often employed is what is known as matched pair designs.  In such designs, units are first paired according to observed, baseline covariates, and then, within each pair, one unit is assigned to treatment and the other to control with equal probability.

\subsection{Testing ``Strong'' Null Hypotheses} \label{sec:strong}

For any of the randomization schemes described above, it is possible to devise a randomization test that is exact for the following null hypothesis: 
\begin{equation} \label{eq:sharp}
H_0 : Y_i(1) | Z_i \stackrel{d}{=} Y_i(0) | Z_i~.
\end{equation}
This null hypothesis may be viewed as positing a ``strong'' sense in which the treatment has no effect on t\textbf{}he outcome of interest.  

To describe this test, note that for any of the four randomization schemes described above, there exists a group $\mathbf G_{Z^{(n)}}$ that preserves the distribution of treatment status in the sense that 
\begin{equation} \label{eq:treatmentinvar}
g D^{(n)} | Z^{(n)} \stackrel{d}{=} D^{(n)} | Z^{(n)} \text { for any } g \in \mathbf G_{Z^{(n)}} ~.
\end{equation}
For concreteness, we briefly describe these transformations for each of the treatment assignment schemes above.  In the context of simple random sampling and complete randomization, one possible choice of $\mathbf G_{Z^{(n)}}$ is simply $\mathbf G_n$, the set of all permutations of $n$ elements, and $g D^{(n)}$ is defined for any $g \in \mathbf G_n$ to be $(D_{g(1)}, \ldots, D_{g(n)})$.  In the context of stratified block randomization, a natural choice of $\mathbf G_{Z^{(n)}}$ is the subgroup of $G_n$ that only permutes units within a common stratum, i.e., $$\mathbf G_{S^{(n)}} = \{g \in \mathbf G_n : S_{g(i)} = S_i \text{ for all } 1 \leq i \leq n\}~,$$ and $gD^{(n)}$ is defined in the same way as before.  Finally, in the context of matched pair designs, essentially the same idea applies with the strata being understood as the pairs.  In order to describe the group more formally, we require some further notation.  For convenience, suppose $n$ is even, i.e., $n = 2k$ for some integer $k$.  For such $n$, we may denote the $k$ pairs by $(\pi(2j-1),\pi(2j)), j = 1, \ldots, k$, where $\pi = \pi_{Z^{(n))}}$ is a permutation of $n$ elements (that, importantly, may depend on $Z^{(n)}$).  Using this notation, the natural choice of $\mathbf G_{Z^{(n)}}$ is the subgroup of $G_n$ that only permutes units within a pair, i.e., $$\mathbf G_{Z^{(n)}} = \{g \in \mathbf G_n : \{g(\pi(2j-1)), g(\pi(2j))\} = \{\pi(2j-1), \pi(2j)\} \text{ for all } 1 \leq j \leq p\}~,$$ and $gD^{(n)}$ is again defined in the same way as before. 

We now argue that under \eqref{eq:sharp}, the distribution of the observed data $X^{(n)}$ is invariant with respect to the transformations in $\mathbf G_{Z^{(n)}}$ satisfying \eqref{eq:treatmentinvar} in the sense that 
\begin{equation} \label{eq:keyinvar}
g X^{(n)} \stackrel{d}{=} X^{(n)}~,
\end{equation}
where $g X^{(n)} = (Y^{(n)}, gD^{(n)}, Z^{(n)})$.  In order to establish this equality in distribution, we argue that $Y^{(n)} \independent D^{(n)} | Z^{(n)}$.  To see this, let $A^{(n)} = \prod_{1 \leq i \leq n } A_i$ for arbitrary intervals $A_i$ and $Y^{(n)}(d^{(n)}) = (Y_1(d_1), \ldots, Y_n(d_n))$.  With this notation in mind, note that 
\begin{eqnarray*}
P\{Y^{(n)} \in A^{(n)} | D^{(n)} = d^{(n)}, Z^{(n)}\} &=& P\{Y^{(n)}(d^{(n)}) \in A^{(n)} | D^{(n)} = d^{(n)}, Z^{(n)}\} \\
&=& P\{Y_1(d_1) \in A_1, \ldots,  \in Y_n(d_n) \in A_n | D^{(n)} = d^{(n)}, Z^{(n)}\} \\
&=& P\{Y_1(d_1) \in A_1, \ldots,  \in Y_n(d_n) \in A_n | Z^{(n)} \} \\
&=& \prod_{1 \leq i \leq n } P\{Y_i(d_i) \in A_i | Z_i\}~,
\end{eqnarray*}
where the first equality exploits \eqref{eq:obstopot}, the second exploits the definitions of $A^{(n)}$ and $Y^{(n)}(d^{(n)})$ given above, the third equality exploits \eqref{eq:unconfoundedness}, and the final equality exploits i.i.d.\ sampling.  Under \eqref{eq:sharp}, we see that this final quantity does not depend on $d^{(n)}$, from which the desired conditional independence property follows.  We now have immediately that \eqref{eq:treatmentinvar} implies $g X^{(n)} | Z^{(n)} \stackrel{d}{=} X^{(n)} | Z^{(n)}$ under \eqref{eq:sharp}, from which the desired unconditional equality in distribution in \eqref{eq:keyinvar} follows as well.

The test is now constructed in the usual way.  For any test statistic $T_n = T_n(X^{(n)})$ such that large values provide evidence against $T_n$, we can construct a suitable critical value with which to compare it as $\hat r_n^{-1}(1 - \alpha)$, defined in (\ref{equation:juri}), where $$\hat R_n(t) = \frac{1}{|\mathbf G_{Z^{(n)}}|} \sum_{g \in \mathbf G_{Z^{(n)}}} I\{ T_n(gX^{(n)}) \leq t \}~.$$ Having established \eqref{eq:keyinvar}, it is now straightforward to modify the proof of Theorem \ref{theorem:1} to show that the test that rejects $H_0$ whenever $T_n$ exceeds $T^{(k)}_n(X^{(n)})$ is level $\alpha$ in finite samples.

\begin{remark}
The development above presumes that there are no spillovers in the sense that potential outcomes for unit $i$ are only indexed by the treatment status of unit $i$.  When this is not the case, potential outcomes are now indexed by a vector-valued argument that specifies the treatment of each unit.  Randomization inference may be used essentially verbatim to test a suitably modified version of the ``strong'' null hypothesis that asserts that the (conditional) distribution of each potential outcome is invariant with respect to this argument.  More interestingly, however, by restricting attention to suitable subsets of the data, it is possible to test somewhat less restrictive null hypotheses.  For a development of such an idea in a finite population context, see \cite{basse2019randomization} and \cite{puelz2022graph}. See also \cite{athey2018exact}, \cite{li2019randomization}, \cite{xu2021randomization}, and \cite{basse2024randomization} for further consideration randomization inference for problems incorporating spillovers and interference.
\end{remark}

\begin{remark}
Even if it is not possible to exhibit a group satisfying \eqref{eq:treatmentinvar}, it is possible to construct tests of the null hypothesis in \eqref{eq:sharp} by exploiting knowledge of the distribution of $D^{(n)} | Z^{(n)}$ and the logic of Remark \ref{rem:samplingsubset}.  To see this, for $b = 2, \ldots, B$, let $D^{(n),b}, b = 2, \ldots, B$ be i.i.d.\ $\sim D^{(n)} | Z^{(n)}$ and define $X^{(n),b} = (Y^{(n)}, D^{(n),b}, Z^{(n)})$ let $X^{(n),1} = X^{(n)}$.  It is straightforward to argue that $\{T_n(X^{(n),b}) : 1 \leq b \leq B\}$ is exchangeable under null hypothesis in \eqref{eq:sharp}.  The rest of the argument in Remark \ref{rem:samplingsubset} now applies verbatim and we can employ the same construction to yield a test that is level $\alpha$ in finite samples for testing the null hypothesis in \eqref{eq:sharp}. Conditional randomization tests were introduced by \cite{rosenbaum1984conditional}, and were further developed by \cite{candes2018panning} and \cite{berrett2020conditional}
\end{remark}

\subsection{Testing ``Weak'' Null Hypotheses} \label{sec:weak}

We are, of course, often not interested in testing the null hypothesis in \eqref{eq:sharp}, but rather the null hypothesis 
\begin{equation} \label{eq:weaknull}
H_0: E[Y_i(1) - Y_i(0)] = 0~.
\end{equation}
Such null hypotheses are sometimes referred to as ``weak'' null hypotheses to distinguish them from the ``strong'' null hypotheses considered in the preceding section; see e.g., \cite{chung2017randomization} for further discussion. For such a null hypothesis, it is natural to consider tests that reject for large values of $|t_n|$, where $T_n(X^{(n)}) = \sqrt n (\hat \mu_n(1) - \hat \mu_n(0))$, $\hat \mu_n(d) = \frac{1}{n_d} \sum_{i=1}^n \mathbb{I}\{D_i = d\}Y_i$, and $n_d=\sum_{i=1}^n \mathbb{I}\{D_i = d\}$.  In order to describe the large-sample behavior of $T_n(X^{(n)})$, it is useful to specialize to a specific treatment assignment rule.  Below we focus on matched pair designs, following the treatment in \cite{bai2019inference}.  For some results related to other treatment assignment schemes, especially stratified block randomization, see \cite{bugni2018inference, bugni2019inference}. For further consideration of tests of weak null hypotheses in randomized experiments, see \cite{wu2021randomization}, \cite{zhao2021covariate}, and \cite{heckman2023dealing}.

Under weak assumptions, \cite{bai2019inference} establish that $T_n \stackrel{d}{\rightarrow} N(0,V)$, where $$V =  E[\text{Var}[Y_i(1)|Z_i]] + E[\text{Var}[Y_i(0)|Z_i]] +  \frac{1}{2} E\left [\left (E[Y_i(1)|Z_i] - E[Y_i(0)|Z_i] \right )^2\right ]~.$$  The main requirements underlying this result are that pairs are constructed so that units within a pair are suitably close in terms of their observed, baseline covariates, specifically 
\begin{equation} \label{eq:pairsclose}
\frac{1}{n} \sum_{1 \leq j \leq p} |Z_{\pi(2j)} - Z_{\pi(2j-1)}| \stackrel{P}{\rightarrow} 0~,
\end{equation}
and that $E[Y_i(d) | Z_i]$ is sufficiently well behaved.  \cite{bai2019inference} provide algorithms that ensure \eqref{eq:pairsclose} satisfied; in particular, it suffices to choose $\pi$ to minimize the lefthand-side of \eqref{eq:pairsclose} under mild moment restrictions.  \cite{bai2019inference} assume that $E[Y_i(d)|Z_i]$ is Lipschitz, but it is enough to assume it is simply integrable by suitably approximating integrable functions with Lipschitz functions; see \cite{cytrynbaum2023optimal} for such an argument assuming $E[Y_i(d)|Z_i]$ is square integrable.

By contrast, the randomization distribution of $T_n(X^{(n)})$ is simply the distribution
$$\frac{1}{\sqrt n} \sum_{1 \leq j \leq p} \epsilon_j (Y_{\pi(2j)} - Y_{\pi(2j-1)})~,$$
conditional on $X^{(n)}$, where $\epsilon_j, j = 1, \ldots, n$ are i.i.d.\ Rademacher random variables, i.e., taking on values $\pm 1$ with equal probability each.  \cite{bai2019inference} show that $$\hat R_n(t) \stackrel{P}{\rightarrow} \Phi(t/\tau)~,$$ where $$\tau^2 = E[\text{Var}[Y_i(1)|Z_i]] + E[\text{Var}[Y_i(0)|Z_i]] + E\left [\left (E[Y_i(1)|Z_i] - E[Y_i(0)|Z_i] \right )^2\right ]~.$$  Since $\tau^2 \geq V$, it follows that the randomization test of the null hypothesis in \eqref{eq:weaknull} based on $t_n(X^{(n)})$ is generally conservative.  \cite{bai2019inference} show that qualitatively similar results hold for the usual two-sample $t$-test and the paired $t$-test.  

As in the preceding sections, however, it is possible to restore asymptotic exactness of the randomization test by applying it to an appropriately Studentized version of $T_n(X^{(n)})$.  As a first step towards this, \cite{bai2019inference} develop a consistent estimator of $V$.  A key challenge in doing so is the estimation of quantities like $E[E[Y_i(1)|Z_i]^2]$ because the natural estimator of this quantity would involve two independent observations of $Y_i(1)$ conditional on $Z_i$ and, by construction, only one such observation is available.  \cite{bai2019inference} show, however, that it is possible to estimate such quantities under a suitable strengthening of \eqref{eq:pairsclose} that ensures that units in adjacent pairs are also close in terms of their observed, baseline covariates.  If we denote by $\hat V_n$ the resulting estimator of $V$, then \cite{bai2019inference} show that the randomization test of the ``weak'' null hypothesis in \eqref{eq:weaknull} with $S_n(X^{(n)}) = |T_n(X^{(n)})|/\sqrt{\hat V_n}$ is asymptotically exact, while remaining level $\alpha$ in finite samples for the ``strong'' null hypothesis in \eqref{eq:sharp}.

\begin{remark}
In the preceding discussion, we have focused on testing \eqref{eq:weaknull}, but it is straightforward to modify the procedure so as to test the null hypothesis that specifies instead 
\begin{equation} \label{eq:weaknullnonzero}
E[Y_i(1) - Y_i(0)] = \theta_0
\end{equation}
for some pre-specified value $\theta_0$ that need not equal zero.  To do so, we can simply ``pre-process'' the data by replacing $Y_i$ with $Y_i - \theta_0 D_i$.  The data transformed in this way now satisfies \eqref{eq:weaknull} whenever the null hypothesis of interest \eqref{eq:weaknullnonzero} holds.  In this way, using test inversion, it is possible to construct confidence intervals for $E[Y_i(1) - Y_i(0)]$.
\end{remark}

\begin{remark}
As mentioned above, our discussion has focused on a ``superpopulation'' sampling framework.  An alternative sampling framework is sampling from a ``finite population.''  In such a framework, $n$ units are sampled without replacement from a finite population of $N$ units defined by $\{(y_i(1), y_i(0), z_i) : 1 \leq i \leq N \}$. The situation in which $n = N$ is sometimes referred to as a ``design-based'' setting because the only remaining source of uncertainty is from the design, by which this literature means the way in which treatment status was assigned, but the framework permits $n < N$ as well.   In such settings, the randomization test described in Section \ref{sec:strong} is level $\alpha$ in finite samples for testing the following counterpart to the ``strong'' null hypothesis in \eqref{eq:sharp}: $$H_0: y_i(1) = y_i(0) \text{ for all } 1 \leq i \leq N~.$$  This null hypothesis is sometimes referred to as ``sharp'' because it permits one to impute the observed data for any possible value of $D^{(n)}$.  In a finite population, the corresponding counterpart to the ``weak'' null hypothesis in \eqref{eq:weaknull} is 
\begin{equation} \label{eq:weakfinite}
H_0 : \frac{1}{N} \sum_{1 \leq i \leq N} \left ( y_i(1) - y_i(0) \right )  = 0~.
\end{equation}
Randomization tests may again be used to test the null hypothesis in \eqref{eq:weakfinite} as well, but there are some subtle differences in the analysis.  A key feature is that generally these tests will not be exact even asymptotically, though generally are at least asymptotically conservative; see \cite{ding2017paradox} and \cite{wu2021randomization}.  This phenomenon stems from the limiting behavior of quantities like $T_n(X^{(n)})$ when sampling from a finite population.   In particular, their limits in distribution may involve features of $\{y_i(1) - y_i(0) : 1 \leq i \leq N\}$ that are not consistently estimable using the data from the experiment.  This feature typically remains present unless $n/N \rightarrow 0$.  See, e.g., \cite{lehmann:romano:tsh:2022} for an analysis under the assumption that treatment is assigned using complete randomization.  For further discussion of the relationship between these two sampling schemes, we additionally refer the reader to \cite{imbens2015causal}, \cite{abadie2020sampling}, \cite{bai2024survey}, and the references therein.
\end{remark}

\begin{remark}
Throughout this section, we have restricted attention to experiments with a single treatment and a control.  Many of the ideas generalize naturally to settings in which there are multiple treatments. See \cite{bugni2019inference},  \cite{bai2024inference}, and \cite{bai2024survey}.  The latter reference surveys a number of topics in the broader literature on the analysis of experiments that are omitted from our discussion, including the broader benefits of stratification in terms of reducing {\it ex post} bias of estimators, cluster-level randomized experiments, and regression adjustment in experiments.  
\end{remark}

\section{Correlation and Regression}\label{section:corr}

We now consider tests concerning correlation and regression. Here, randomization tests will be exact for null hypotheses involving independence restrictions. Asymptotic correctness for tests of correlation, or for tests involving regression coefficients, can be obtained with appropriate choices of test statistic. 

\subsection{Correlation} Assume $\left(X_1, Y_1 \right),\ldots, \left(X_n, Y_n \right)$ are i.i.d.\ according to a joint distribution $P$ with (non-degenerate) marginal distributions $P_X$ and $P_Y$.  Define $X^{(n)} = (X_1,...,X_n)$ and $Y^{(n)} = (Y_1,...,Y_n)$. Consider the problem of testing the null hypothesis of independence, given by
\begin{equation}\label{eq: independence}
H_0: P = P_X \times P_Y.
\end{equation}
Define $\bf{G}_n$ to be the set of all permutations $\pi$ of $\left\{ 1,...,n \right\}$. As before, the permutation distribution of any given test statistic $T_n\left(X^{(n)}, Y^{(n)} \right)$ is given by
\[
\hat R_n (t) = \frac{1}{n!} \sum_{\pi \in \bf{G}_n} I \left\{ T_n(X^{(n)}, Y^{(n)}_{\pi})  \leq t \right\}
\] 
Since the randomization hypothesis holds for testing $H_0$, an exact permutation test can be constructed.

On the other hand, consider instead the null hypothesis 
\begin{equation}\label{eq: correlation}
H_0: \rho(P) = 0~,
\end{equation}
where $\rho = \rho(P)= \text{corr} (X_1, Y_1)$ denotes the correlation between $X_1$ and $Y_1$. The normalized sample correlation
\[
\sqrt{n} \hat \rho_n( X^{(n)}, Y^{(n)}) =  \frac{\sqrt{n} \sum_{i=1}^n X_{i} Y_{i} - n \bar X_n \bar Y_n}{\sqrt{\sum_{i=1}^n (X_{i} - \bar X_n)^2 \sum_{i=1}^n (Y_{i} - \bar Y_n)^2}}~.
\]
is a natural choice of test statistic. If rejection of the null hypothesis $H_0$ is accompanied by the claim that $\rho$ is positive when $\hat \rho_n$ is large and positive, then such claims can have large Type 3, or directional, error rates. To see this, note that under $H_0$, the observations $X_i$ and $Y_i$ may be dependent, and the distribution of the test statistic may not be the same under all permutations of the data. That is, the randomization hypothesis fails and the test is not guaranteed to be level $\alpha$, even asymptotically.  

Under the null hypothesis $H_0$,  if $E (X_1)^2 < \infty$, $E (Y_1)^2 < \infty$ and $E (X_1 Y_1)^2 < \infty$, then the sampling distribution of $\sqrt{n} \hat \rho_n(X^n, Y^n)$ is $N(0, \tau^2 (P))$, where
\begin{equation} \label{eq:1}
\tau^2 = \tau^2(P) = \frac{\mu_{2,2}}{\mu_{2,0}\mu_{0,2}}~,\quad\text{for}\quad\mu_{r,s} = E \left[ (X_1 - \mu_X)^r (Y_1 - \mu_Y)^s \right]~,
\end{equation}
and $\mu_X$ and $\mu_Y$ are the means of the $X_i$ and $Y_i$, respectively. 
However, the permutation distribution is asymptotically $N(0,1)$ (with probability one).  As in the two-sample problems considered in Section \ref{section:two}, there is a problem of mismatched asymptotic variances. 

Before considering a fix, it is instructive to consider why the approximation permutation distribution is standard normal.  The intuition is as follows.
Applying a permutation to one coordinate of the data effectively destroys the dependence between pairs.
Therefore,  the behavior of the permutation distribution under a general joint distribution $P$ should not be too different from when observations
are sampled from the joint distribution $\tilde P$ where $X$ and $Y$ are independent, but with the same marginal distributions as $P$; that is,
$\tilde P = P_X \times P_Y$.
In such case, the randomization hypothesis holds, and using the intuition from Remark \ref{remark:intuition}, the permutation distribution should be approximately the limiting  distribution of the unconditional distribution under $\tilde P$, in which case the limiting approximation is, by (\ref{eq:1}), $V ( \tilde P ) = 1$. On the other hand, if $X_i$ and $Y_i$ are uncorrelated with joint distribution $P$ (that is not necessarily independent), then
\[
0 \leq  V (P) \equiv \frac{E \left[ (X_i - \mu_{X})^2 (Y_i - \mu_{Y})^2 \right]}{ \sigma^{2}_{X} \sigma^{2}_{Y} } \leq \infty
\]
and these bounds can be attained in the sense that there exists a joint distribution of $(X_1, Y_1)$ where $\text{cov}(X_1, Y_1) = 0$, but this ratio is 0, and likewise for which it is $\infty$.
Thus, the probability of a Type 1 error can be near 1. Even more troubling, this discrepancy can lead to large Type 3 (directional) error rate if one is interested in deciding the sign of the correlation based on the sample correlation. 

To remedy these problems, the same intuition applies, and one should use a test statistic that is asymptotically pivotal.  In this case, the test statistic can be studentized by 
\[
\hat V_n = \sqrt{\frac{\hat \mu_{2,2}}{\hat \mu_{2,0} \hat \mu_{0,2}}} \quad\text{where}\quad\hat \mu_{r,s} = \frac{1}{n} \sum^{n}_{i=1} (X_i - \bar X_n)^r(Y_i - \bar Y_n)^s
\]  
are the sample central moments.  The studentized correlation statistic defined by $S_n = \sqrt{n} \hat \rho_n / \hat V_n$ is asymptotically pivotal in the sense that it is asymptotically distribution free whenever the underlying distribution satisfies $H_0$. Because $S_n$ is a pivotal statistic, the true sampling distribution of $S_n$ under $P$ has the same asymptotic behavior as the true sampling distribution of $S_n$ under $P_X \times P_Y$.  When the randomization hypothesis is satisfied, the permutation test using the statistic $S_n$ is exact under $P_X \times P_Y$, and therefore, the permutation distribution should asymptotically approximate the true sampling distribution under $P_X \times P_Y$. Formally, the following result from \cite{diciccio2017robust} holds.

\begin{theorem} \label{theorem:2}
Assume $\left(X_1, Y_1 \right), ..., \left(X_n, Y_n \right)$ are i.i.d. according to $P$ such that $X_1$ and $Y_1$ are uncorrelated but not necessarily independent.  Also assume that $E(X_1^4) < \infty$ and $E(Y_1^4) < \infty$.  The permutation distribution $\hat R^{S_n}_n (t)$ of $S_n = \sqrt{n} \hat \rho_n / \hat \tau_n$ satisfies
\[
\lim_{n \rightarrow \infty} \sup_{t \in \mathbb{R}} \left| \hat R^{S_n}_n (t) - \Phi(t) \right| = 0
\]
almost surely.
\end{theorem}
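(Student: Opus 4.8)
The plan is to verify an almost-sure, self-normalized analogue of Hoeffding's condition, after first simplifying the studentized statistic under permutation. Write $U_i = X_i - \bar X_n$ and $V_i = Y_i - \bar Y_n$. Since permuting only the $Y$-coordinate leaves $\bar Y_n$, $\hat\mu_{2,0}$, and $\hat\mu_{0,2}$ unchanged, the variance factors appearing in $\sqrt n\,\hat\rho_n$ and in $\hat V_n$ cancel exactly. The first step is thus to establish the algebraic identity
\[
S_n(X^{(n)}, Y^{(n)}_\pi) = \frac{\sum_{i=1}^n U_i V_{\pi(i)}}{\sqrt{\sum_{i=1}^n U_i^2 V_{\pi(i)}^2}}~,
\]
which exhibits the permutation distribution as that of a self-normalized sum over a uniformly random matching of the $U_i$ to the $V_j$. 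I would then condition on the data, treating $(U_i, V_i)$ as fixed constants for a fixed realization of the infinite sequence, and analyze the numerator and denominator separately as functions of the random permutation $\pi$.

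For the numerator $n^{-1/2}\sum_i U_i V_{\pi(i)}$, the conditional mean is zero and the conditional variance equals $\tfrac{n}{n-1}\big(\tfrac1n\sum_i U_i^2\big)\big(\tfrac1n\sum_j V_j^2\big)$, which tends to $\sigma_X^2\sigma_Y^2$ almost surely by the strong law. Hoeffding's combinatorial central limit theorem then gives conditional asymptotic normality $N(0,\sigma_X^2\sigma_Y^2)$ once the Lindeberg-type negligibility condition is verified; for the bilinear array $d_{ij}=U_iV_j$ this amounts to $\max_{i\le n}U_i^2\,\max_{j\le n}V_j^2 = o(n)$ relative to a variance scale of order $n$, and the fourth-moment hypothesis secures it, since $E X_1^4<\infty$ and $E Y_1^4<\infty$ yield $\max_{i\le n}U_i^2 = o(n^{1/2})$ and $\max_{j\le n}V_j^2 = o(n^{1/2})$ almost surely. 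For the denominator $\tfrac1n\sum_i U_i^2 V_{\pi(i)}^2$, I would show it concentrates at its conditional mean $\big(\tfrac1n\sum_i U_i^2\big)\big(\tfrac1n\sum_j V_j^2\big)\to\sigma_X^2\sigma_Y^2$: the permutation-variance formula bounds its conditional variance by a constant multiple of $n^{-1}$, using the empirical fourth moments $\tfrac1n\sum_i U_i^4$ and $\tfrac1n\sum_j V_j^4$, which converge by the strong law under the fourth-moment assumption.

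Combining these by Slutsky's theorem, applied conditionally, the ratio converges in distribution to $N(0,\sigma_X^2\sigma_Y^2)/\sqrt{\sigma_X^2\sigma_Y^2} = N(0,1)$; the cancellation of the common factor $\sigma_X^2\sigma_Y^2$ is precisely the asymptotic pivotality that renders the limit free of $P$. This yields $\hat R^{S_n}_n(t)\to\Phi(t)$ for every fixed $t$, for almost every data sequence. Because the limit $\Phi$ is continuous and each $\hat R^{S_n}_n$ is a conditional c.d.f., pointwise convergence upgrades to uniform convergence by P\'olya's theorem, giving $\sup_t|\hat R^{S_n}_n(t)-\Phi(t)|\to 0$ almost surely, as required.

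The main obstacle is making the combinatorial central limit theorem rigorous on a single probability-one event and correctly handling the self-normalization. Two points deserve care. First, the hypotheses of the combinatorial CLT must hold simultaneously with all the moment convergences on one almost-sure event; this is where the fourth-moment assumption does its real work, both to guarantee the negligibility condition through the sharpened bound $\max_{i\le n}U_i^2 = o(n^{1/2})$ and to control the denominator's fluctuations. Second, the numerator and denominator share the common permutation $\pi$ and are therefore dependent, so Slutsky's theorem cannot be invoked naively; the resolution is to prove that the denominator converges to the nonrandom constant $\sigma_X^2\sigma_Y^2$ in conditional probability, after which conditional weak convergence of the numerator together with conditional convergence in probability of the denominator jointly justify the limit of the ratio.
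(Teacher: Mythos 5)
Your proof is correct and is essentially the standard argument for this result. The paper itself supplies no proof of Theorem \ref{theorem:2} --- it gives only the heuristic that the permutation distribution mimics sampling from $P_X \times P_Y$, under which the studentized statistic is asymptotically pivotal, and defers to \cite{diciccio2017robust} --- and your proposal reconstructs that reference's argument faithfully: the algebraic reduction of $S_n(X^{(n)},Y^{(n)}_{\pi})$ to the self-normalized bilinear statistic, Hoeffding's combinatorial central limit theorem for the numerator under the product-of-maxima negligibility condition (which the fourth moments secure almost surely via $\max_{i\le n}U_i^2=o(n^{1/2})$), Chebyshev concentration of the denominator at $\sigma_X^2\sigma_Y^2$ using the permutation variance formula, a conditional Slutsky step, and P\'olya's theorem to upgrade to uniform convergence.
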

\noindent Consequently, if $\sqrt{n} \hat \rho_n$ is studentized by $\hat V_n$, then the quantiles of the permutation distribution and the true sampling distribution converge almsot surely to the corresponding quantiles of the standard normal distribution.  The permutation test using the studentized statistic is appealing because it retains the exactness property under $P_X \times P_Y$ but is also asymptotically level $\alpha$ under $P$.  

\subsection{Regression}
Many of the same ideas generalize to consideration of a coefficient in a linear regression. Consider i.i.d.\ $(X_{i}, Y_{i})$ pairs following a simple univariate linear regression model
\[
Y_i = \alpha + \beta X_i + \epsilon_i,\quad i=1,...,n~,
\]
where $X_i \in \mathbb{R}$ and $\epsilon_i$ are errors with mean zero and variance $\sigma^2$.  (For the moment, assumptions on the joint distribution of $X_i$ and $\epsilon_i$ are not specified, but will be described below).  To test the hypothesis 
\[
H_0: \beta = 0,
\]
it is natural to base a test on the sample correlation  $\hat \rho_n$  
or the least squares estimator $\hat \beta_n$.  If the $X_i$'s are independent of the $\epsilon_i$'s (and therefore independent of the $Y_i$'s under $H_0:\beta = 0$), then an exact permutation test can be performed by permuting the $X_i$'s.  A permutation test may not be exact if the predictors and errors are uncorrelated (but dependent), however, following the previous results, studentizing the correlation coefficient leads to an asymptotically level $\alpha$ test.  

\cite{hartigan1970exact} applied sign changes to residuals under the assumption of symmetrically distributed errors; also see \cite{theil1950rank}, \cite{brown1951median} and \cite{daniels1954distribution}.
Permutation tests have been considered in more complex regression problems. Of course, multiple regression can be considered, as well as sub-vector inference \citep{d2024robust} and accommodation of heteroskedasticity. \cite{freedman:lane:1980} permute residuals after model fitting. This approach need not provide error control, but \cite{diciccio2017robust} provide a modification that does. One may also base a randomization test based on sign changes of residuals (under symmetry of errors), which is closely related to wild bootstrap. An empirical investigation of such techniques, in the high-dimensional case, is given in \cite{hemerik2020robust}. \cite{lei2021assumption} propose a method for exact inference under exchangeable errors and give excellent review of related literature. 
For a nice review of subvector inference in regression based on randomization,
see \cite{anderson2001permutation}, who review \cite{freedman:lane:1980},
\cite{ter1992permutation} and \cite{kennedy1996randomization}.
See also \cite{guo2023invariance}, \cite{young2023consistency}, \cite{young2024asymptotically}, and \cite{pouliot2024exact} for several recent contributions to this subject.

\section{Permutation Tests in Time Series}
 
Assume  $X_1, \, \dots, \, X_n$ are observed from a time series. Of course, many parametric testing methods exist for testing the structure of the underlying stochastic process, but the assumptions for these tests are restrictive and the null hypothesis is often incorrectly specified.  However, nonparametric inferential tools for certain structural features  can be based on randomization tests. Such features include the presence (or not) of autocorrelation and trend.

\subsection{Testing Time Series Structure}
    
Assume data  $X_1, \, \dots, \, X_n$ comes  from a strictly stationary (weakly dependent)  time series.  A basic question is:
Are the observations independent?  (In a time series setting, this may seem clearly not true; however, tests of fit of many time series models are often based on residuals that are hopefully independent, at least approximately.) A frequently-used analogue for testing independence is the related question: Do the observations have zero autocorrelation? Let $H_r$ be the null hypothesis $H_r: \rho(1) = \dots = \rho (r) = 0$, where $\rho(k)$ is the $k$-th order autocorrelation. Such a problem may arise, for example, in testing the efficient market hypothesis in finance; see e.g., \cite{fama1970efficient}.

Examples of tests of $H_r$ include the Ljung-Box test  and the Box-Pierce test. These tests often make parametric assumptions on the sequence $X_1, \, \dots, \, X_n$. As such, the tests are, in general, not  exact for finite samples and may not be even asymptotically valid for controlling Type 1 error.
 
Under the more stringent null hypothesis $\bar{H}$ that $X_1, \, \dots, X_n$ are i.i.d., the randomization hypothesis holds with respect to the permutation group. Therefore, one can construct exact finite-sample permutation tests. We have encountered this setting in Section \ref{sec:hothand}. Here, the null hypotheses $\bar{H}$ and $H_r$ are quite different.  Permutation tests based on sample autocorrelations accompanied by claims about the population autocorrelation parameters can lead to issues with Type 1 error and Type 3 error control. This is the same phenomenon discussed in Sections \ref{section:two} and \ref{section:corr}. 

Despite this, permutations need not be abandoned. Our goal is to construct valid permutation tests of the null hypothesis $H^{(k)}: \rho(k) = 0$ that are
exact under the i.i.d.\ assumption, and  asymptotically valid for a large class of weakly dependent sequences. To this end, let 
\begin{align}
    \hat{\rho}_n(k) = \hat{\rho}_n \left(X_1, \, \dots, \, X_n ; k \right) =\frac{\frac{1}{n} \sum_{i=1} ^{n-k} (X_i-\bar{X}_n)( X_{i+k}- \bar{X}_n)}{\hat{\sigma}_n ^2  } \, \, ,
\end{align}
where $\bar{X}_n$ is the sample mean, and $$\hat{\sigma}_n^2 = \frac{1}{n} \sum_{i=1} ^n (X_i - \bar{X}_n)^2$$ is the sample variance.  For simplicity, let $\hat \rho_n = \hat \rho_n (1)$ and $\rho = \rho (1)$.

Before describing the behavior of the permutation distribution  based on the test statistics $\sqrt{n} \hat \rho_n$, we first consider its usual limiting distribution.
Assume that $\{X_n \}$ is a stationary, $\alpha$-mixing sequence, with mixing coefficients $\alpha_X ( \cdot )$ that satisfy, for some $\delta > 0$, $\sum_{n \geq 1} \alpha_X (n) ^{\frac{\delta}{2 + \delta}} < \infty$.
Also, assume $E \left[ \left|X_1\right| ^{8 + 4 \delta} \right] < \infty$. By \citeauthor{ibragimov1962theory}'s \citeyearpar{ibragimov1962theory} Central Limit Theorem,
$$\sqrt{n}(\hat{\rho}_n - \rho) \overset{d} \to N\left(0, \, \gamma_1^2 \right)~~,$$
for some variance $\gamma_1 ^2$. To describe $\gamma_1^2$, let
\begin{align}
\kappa^2 &= \text{Var} \left(X_1 ^2 \right) + 2\sum_{k \geq 2} \text{Cov} \left( X_1 ^2, \, X_k ^2 \right)~,\nonumber\\
\tau_1 ^2 &= \text{Var} \left(X_1X_2\right) + 2\sum_{k \geq 2} \text{Cov} (X_1X_2, \, X_k X_{k+1} )~,\nonumber\\
\nu_1 &= \text{Cov} \left( X_1 X_2, \, X_1 ^2 \right) + \sum_{k \geq 2} \text{Cov} \left( X_1^2, \, X_{k} X_{k+1} \right) + \sum_{k \geq 2} Cov \left(X_1 X_2,\, X_k ^2 \right)~,\quad\text{and}\nonumber\\
\gamma_1^2 &= \frac{1}{\sigma^4} \left( \tau_1^2  - 2 \rho_1 \nu_1 + \rho_1 ^2 \kappa^2 \right)~.\nonumber
\end{align}
Note that, while the expressions are messy,  $\gamma_1 = 1$ if the data are i.i.d.

How should we understand the permutation distribution in this setting?  Since permuting randomly changes the order of the observations, 
the behavior of the permutation distribution should be similar to the situation when all observations are i.i.d. according to a distribution $P_1$, where $P_1$ is the marginal distribution of each observation in the underlying stationary time series.  But, in the i.i.d.\ situation,
by Remark \ref{remark:intuition},
the permutation distribution should behave like the true unconditional distribution, which is $N( 0, \gamma_1^2 )$ with $\gamma_1 = 1$.
Indeed, the permutation distribution of $\sqrt{n} \hat{\rho}_n$  satisfies
\begin{equation*}
    \sup_{t \in {\bf R}} | \hat{R}_n (t) - \Phi(t) |  \overset{p} \to 0 \, \, .
\end{equation*}
Therefore, permutation tests may be asymptotically invalid, due
to the problem of mismatched variances: the true limiting variance is $\gamma_1^2$ while the permutation distribution has a variance approximately equal to one.
The solution again is to use an asymptotically pivotal statistic,
which can be obtained by studentizing.
An appropriate studentized  test statistic is presented 
in \cite{romano2022permutation}, accompanied by the limiting behavior of the permutation test.
Permutation tests have been extended to regression problems with correlated errors in \cite{romano2024least}.
   
\subsection{Testing Trend}
The definition of trend in a time series is somewhat nebulous and is discussed in detail in \cite{romano2024permutation}.  Here, we illustrate a particular classical test
of trend that uses 
the Mann-Kendall trend statistic  given by 
	$$
	U_n = \frac{3}{n^{3/2}}\sum_{i < j} \left( I\{X_j > X_i\} - I\{X_j < X_i\} \right) 
	$$
 For  i.i.d. data, $U_n \overset{d} \to N(0, \, 1)$ as $n \to \infty$, and
one can construct exact and asymptotically valid level $\alpha$ tests. 
However, Type 1 error may not be controlled   for stationary sequences
that exhibit no trend.  Once again, studentization can be used to construct asymptotically valid tests, while remaining exact if the underlying process is i.i.d.

\section{Prediction and Conformal Inference}

Permutation tests may, of course,  be used to test a null hypothesis that implies exchangeability of the observations.  Such tests may then be used to construct prediction intervals (or  prediction sets) for future observations. Foundational work for conformal prediction can be found in \cite{vovk2005algorithmic}. See, e.g., \cite{lei2018distribution}, \cite{lei2021conformal}, \cite{barber2021limits}, \cite{barber2021predictive}, and \cite{barber2023conformal} for further developments and applications. \cite{cattaneo2021prediction}, \cite{chernozhukov2021distributional}, \cite{chernozhukov2021exact} give applications of conformal inference to synthetic control (see also \cite{abadie2010synthetic} and \cite{lei2024inference} for alternative approaches closely related to randomization inference). We refer the reader to \cite{angelopoulos2023conformal} for a very nice review of the state of the art of conformal inference.

\subsection{Prediction of an Exchangeable Sequence}

In the simple case, assume $(X_1 , \ldots , X_n , X_{n+1})$ is exchangeable, where the $X_i$ take values in a space ${\cal X}$. It is desired to predict $X_{n+1}$ on the basis
of having observed $X_1 , \ldots , X_n$.
A $100(1- \alpha ) \%$ prediction set $\hat S_n  = \hat S_n ( 1- \alpha, X_1 , \ldots , X_n )$ is a random subset of ${\cal X}$ that satisfies
\begin{equation}\label{equation:predict}
P \{ X_{n+1} \in \hat S_n \} \ge 1- \alpha 
\end{equation}
under all data generating mechanisms $P$ (in an assumed model).
A very general way to construct $\hat S_n ( 1- \alpha )$ is to first
construct a set $E_{n+1} = E_{n+1} (X_1 , \ldots, X_{n+1} )$ in ${\cal X}^{n+1}$ that satisfies
\begin{equation}\label{equation:predict2}
P \{ ( X_1 , \ldots , X_{n+1} ) \in E_{n+1} \} \ge 1- \alpha~~.
\end{equation}
Then, one may construct $\hat S_n$ from $E_{n+1}$ by taking
$$\hat S_n = \{ x : (X_1 , \ldots , X_n , x ) \in E_{n+1} \}~~.$$
Since  $$\{ X_{n+1} \in \hat S_n  \} ~~~{\rm iff}~~~\{ (X_1 , \ldots , X_{n+1} )  \in E_{n+1} \}~,$$
the coverage claim follows.

How can this prescription be applied by using a permutation test?
In words, one may construct $E_{n+1}$ as the acceptance region of a level $\alpha$ test of the null hypothesis that $(X_1 , \ldots , X_{n+1})$
is exchangeable.  Since the randomization hypothesis holds for this null hypothesis using the group of permutations, many such tests are readily available.  

To be concrete, for real-valued observations,  one might base a permutation test
on the test statistic $T_{n+1} =  X_{n+1} - n^{-1} \sum_{i=1}^n X_i $.
(Or, one may wish to consider $|T_{n+1} |$.)
One may view the test statistic as a measure of conformity of $X_{n+1}$ with the remaining data.
Let $\bar X_{n+1} = (n+1)^{-1} \sum_{i=1}^{n+1} X_i$.
Then,  $$T_{n+1} = (1 + \frac{1}{n}) X_{n+1} - \frac{n+1}{n} \bar X_{n+1}~~,$$ so that rejecting for large $T_{n+1}$ is equivalent to rejecting for large $X_{n+1}$ (since $\bar X_{n+1}$ is permutation invariant). 
The acceptance region of (the possibly conservative) nonrandomized permutation test  takes the form (compare with (\ref{equation:accept}))
\begin{equation}\label{equation:accept again}
\{ X_{n+1} : X_{n+1} \le  T^{(k)} (X_1 , \ldots , X_{n+1} ) \}~,
\end{equation}
where $T^{(k)} ( X_1 , \ldots , X_{n+1})$ is the $k$th order statistic among the $n+1$ observations, where $k = \lceil (n+1)(1- \alpha ) \rceil$ and $\lceil c \rceil$ is the smallest integer greater than or equal to $c$. Equivalently, and more in line with the current literature on conformal inference,  the upper bound may be represented as $X_{n, (k)},$ which is the $k$th largest among just $X_1 , \ldots , X_n $. A lower prediction bound can be obtained in the same way by considering $- T_{n+1}$. Alternatively, one can get a two-sided interval based on the test statistic $|X_{n+1} - {\rm med}(X_1 , \ldots , X_n )|$, where ${\rm med}(X_1 , \ldots , X_n )$ is a  median of $X_1 , \ldots , X_n$.  Still another choice replaces the median by the sample mean.

\begin{remark} \rm Note three things. First, the above construction can be viewed as a special case of a two-sample permutation test previously considered based on the two samples $X_1 , \ldots , X_n$ and $X_{n+1}$.
Second, the number of permutations, $(n+1)!$ is reduced  dramatically because one only needs to consider the ${n+1} \choose n$ combinations of distinct values for $T_{n+1}$.  Finally, in the case that all observations are distinct (with probability one), then the region
(\ref{equation:accept again}) has exact coverage $1- \alpha$ if $(n+1) ( 1- \alpha )$ is an integer.
\end{remark}

\subsection{Conformal Prediction Intervals}
More generally, one may wish to predict $Y_{n+1}$ having observed
$(X_1, Y_1 ) , \ldots , (X_n, Y_n )$ and $X_{n+1}$.  The above generalizes almost verbatim.  The goal is to construct a prediction 
region $$\hat S_n = \hat S_n  \left ( 1- \alpha, (X_1 , Y_1 ) , \ldots , (X_n, Y_n ), X_{n+1} \right )$$ that satisfies
\begin{equation}\label{equation:predict3}
P \{ Y_{n+1} \in \hat S_n \} \ge 1- \alpha ~~.
\end{equation}
As in (\ref{equation:predict2}), first  construct $E_{n+1}$  to satisfy
\begin{equation}\label{equationi:predict4}
P \{ ( (X_1, Y_1) , \ldots , ( X_{n+1} , Y_{n+1}  ) \in E_{n+1} \} \ge 1- \alpha~~.
\end{equation}
Then, one may construct $\hat S_n$ from $E_{n+1}$ by taking
$$\hat S_n = \{ y : (X_1 , Y_1 ) , \ldots , (X_n, Y_n ) , (X_{n+1} , y ) \in E_{n+1} \}~~.$$  Then, (\ref{equation:predict3}) is satisfied as before.

First, to construct $E_{n+1}$, as before, simply let $E_{n+1}$ be the
acceptance region of a level $\alpha$ test of the null  hypothesis
that $( (X_1 ,Y_1 ) , \ldots , (X_{n+1} , Y_{n+1} ))$ is exchangeable.
Such tests may be constructed using permutation tests.
For example, suppose $\hat f_{n+1} ( x)  = \hat f_{n+1} ( (X_1 ,Y_1 ) , \ldots , (X_{n+1} , Y_{n+1} ))$ is a generic predictor for an unobserved $Y$ having observed $X = x$, assumed symmetric in its $n+1$  arguments. (Of course, this includes the case the $f_{n+1}$ is specified or obtained from an auxiliary sample.)  Then, one possible test statistic is the fitted residual
$$T_{n+1} = | Y_{n+1} -  \hat f_{n+1} ( X_{n+1} ) |~~.$$
Now one applies a permutation test by recomputing this test statistic,
or what is called a {\it score function} in the conformal prediction literature, over permutations of the data and then finding the $k$th largest value.    

As in the case of just testing exchangeability of $X_1 , \ldots, X_{n+1}$, there are really only $n+1$ permutations or combinations required.  However, the procedure can be computationally quite intensive.  By the duality with the testing problem previously mentioned, in order to determine whether a particular value of $Y_{n+1}$, say $y$,  belongs in the prediction set, one must apply
a permutation test based on the data
$$(X_1 , Y_1 ) , \ldots , (X_n , Y_n ), (X_{n+1} , y)~~.$$
Note that the predictor may change with $y$, and so we write 
$\hat f_{n+1, y}$ for the predictor function, whose dependence on the fixed value $y$ is made explicit.
Let $k = \lceil (n+1) (1- \alpha ) \rceil$ as before.
Define $$t_i = |Y_i - \hat f_{n+1,y} (X_i ) |~.$$
Let $T_y^{(k)}$ be the $k$th largest among $t_1 , \ldots t_n$.
Then, $y$ is included in the prediction region if $$t_{n+1, y  } = |y - \hat f_{n+1,y} (X_{n+1}) | \le T_y^{(k)}~.$$
We emphasize that the construction results in a valid $1- \alpha$ prediction region regardless of whether or not the fitted
$\hat f_{n+1}$ is a reasonable predictor function.  
The key point is that, if  $(X_1, Y_1) , \ldots , (X_{n+1}, Y_{n+1} )$ are assumed exchangeable, then so are    the $n+1 $ random  values  $|Y_i - \hat f_{n+1} ( (X_1, Y_1 ) , \ldots , (X_{n+1} , Y_{n+1} )) |~.$

In order to determine the prediction region, the above computation must be carried out for every $y$. If the range of $y$ values is continuous, then this may require discretization of $y$ values as well. One simple modification is to use a holdout or auxiliary sample to construct the predictor $\hat f$, in which case the computational burden is dramatically reduced.  The usual approach then requires splitting the data, so that $\hat f$ is constructed on one part of the data and the prediction region on the remaining part. See \cite{ritzwoller2023reproducible} for a method that ensures that the residual randomness induced by data splitting is small.

Conformal inference has been generalized from permutation inversion to other types of randomization inference; see \cite{andrews2022transfer}.

\section{Approximate Randomization Tests}

Our analysis so far has considered settings in which the randomization hypothesis holds exactly (often for only a subset of the null hypothesis of interest). We then studied the large-sample behavior of certain tests when the randomization hypothesis does not hold.  In this section, we consider instead settings in which the randomization hypothesis holds only approximately, for large samples, in a sense to be made precise below.  As we will see, such a theory can be fruitfully applied for inference in a variety of complicated settings.  Below, to conserve space, we focus our discussion on regression with clustered data when, in particular, there are a ``small'' number of clusters.  Here, ``small'' means that there are a fixed number of clusters that do not diverge with the sample size. For example, we may measure outcomes for individuals residing in a small number of villages or attending a small number of schools.

This setting is sufficiently general to accommodate a variety of applications -- time series regression, differences-in-differences with a ``small'' number of clusters -- but the underlying idea is considerably more general and can be applied to many other contexts.  See, in particular, \cite{canay2018approximate} for an elegant application to regression discontinuity design.\footnote{\cite{cattaneo2015randomization} also develops a method based on randomization for inference on a regression discontinuity.}

\subsection{Approximate Symmetry}

In order to describe the framework, we assume as before that we observed $X^{(n)} \sim P_n \in \mathbf P_n$ taking values in a sample space $\mathcal X_n$.  We are interested in testing $$H_0 : P_n \in \mathbf P_{n,0}~,$$ where $\mathbf P_{n,0} \subseteq \mathbf P_n$.  The following assumption formalizes the sense in which the randomization hypothesis holds approximately.
\begin{assumption} \label{assumption:art}
Let $S_n : \mathcal X_n \rightarrow \mathcal S$ and $\mathbf G$ be a finite group of transformations $g$ of $\mathcal S$ onto itself.  Suppose $S_n = S_n(X^{(n)}) \stackrel{d}{\rightarrow} S$ under $\{P_n \in \mathbf P_{n,0} : n \geq 1\}$, where $g S \stackrel{d}{=} S$ for all $g \in \mathbf G$.
\end{assumption}
The proposed test mirrors the construction described in Section \ref{sec:generalconstruct}; the key difference is that the role of $X$ there is now played by $S_n$.  Concretely, using the notation described in Section \ref{sec:generalconstruct}, the proposed test is given by  
\begin{equation*}
\phi(S_n) =
\begin{cases}
1 & \text{ if } T(S_n) > T^{(k)}(S_n) \\
a(S_n) & \text{ if } T(S_n) = T^{(k)}(S_n) \\
0 & \text{ if } T(S_n) < T^{(k)}(S_n)
\end{cases}~,
\end{equation*}
where $$a(S_n) = \frac{M\alpha - M^+(S_n)}{M^0(S_n)}~.$$
\cite{canay2017randomization} show that the above test is approximately level $\alpha$ in the sense described by the following theorem:
\begin{theorem} \label{thm:art}
Suppose $\{P_n \in \mathbf P_{n,0} : n \geq 1\}$ satisfies Assumption \ref{assumption:art}; $T : \mathcal S \rightarrow \mathbf R$ is continuous; $g : \mathcal S \rightarrow \mathcal S$ is continuous for all $g \in \mathbf G$; and, for any two distinct $g \in \mathbf G$ and $g' \in \mathbf G$, either $T(gs) = T(g's)$ for all $s \in \mathcal S$ or $P\{T(gS) \neq T(g'S)\} = 1$.  Then, $$E_{P_n}[\phi(S_n)] \rightarrow \alpha$$ as $n \rightarrow \infty$ under $\{P_n \in \mathbf P_{n,0} : n \geq 1\}$.
\end{theorem}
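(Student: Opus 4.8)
The plan is to reduce the statement to the exact finite-sample result already proved in Theorem \ref{theorem:1}, applied not to $X^{(n)}$ but to the limit variable $S$, and then to transfer the conclusion back to $S_n$ by a weak-convergence argument. The first observation is that Assumption \ref{assumption:art} asserts precisely that $S$ satisfies the randomization hypothesis with respect to the finite group $\mathbf{G}$ acting on $\mathcal{S}$: we have $gS \dequal S$ for every $g \in \mathbf{G}$. Consequently the randomization test function $\phi(\cdot)$ of Section \ref{sec:generalconstruct}, read with $S$ in the role of the data, has exact level by Theorem \ref{theorem:1}; that is, $E[\phi(S)] = \alpha$. The entire task is therefore to show $E_{P_n}[\phi(S_n)] \to E[\phi(S)]$.

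Since $\phi$ takes values in $[0,1]$, the natural tool is the extended continuous mapping theorem: if $S_n \dto S$ and $\phi$ is continuous at $S$ almost surely, i.e.\ $P\{S \in D_\phi\} = 0$ where $D_\phi$ denotes the set of discontinuity points of $\phi$, then $\phi(S_n) \dto \phi(S)$, and boundedness of $\phi$ upgrades this to $E_{P_n}[\phi(S_n)] \to E[\phi(S)]$. The convergence $S_n \dto S$ is given by Assumption \ref{assumption:art}, so the whole content of the proof lies in controlling $D_\phi$, and this is the step I expect to be the main obstacle: $\phi$ is assembled from indicator functions and from the order statistics $T^{(1)}(\cdot) \le \cdots \le T^{(M)}(\cdot)$ of the recomputed values $\{T(gs) : g \in \mathbf{G}\}$, and such objects are generically discontinuous wherever ties occur or the ordering changes.

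Here is where the tie-breaking hypothesis of the theorem is used. I would partition $\mathbf{G}$ into equivalence classes under the relation $g \sim g'$ iff $T(gs) = T(g's)$ for all $s \in \mathcal{S}$ (the ``structural'' ties). For any pair $g \not\sim g'$ the hypothesis gives $P\{T(gS) = T(g'S)\} = 0$, and since $\mathbf{G}$ is finite there are only finitely many such pairs, so with probability one the values $\{T(gS) : g \in \mathbf{G}\}$ display no ties beyond the structural ones. On the event that $s$ has this generic tie pattern, the continuity of $T$ and of each $g \in \mathbf{G}$ ensures that a small perturbation of $s$ leaves both the grouping of the $T(gs)$ into their structural classes and the strict ordering across classes unchanged; hence $M^+(\cdot)$, $M^0(\cdot)$, $T^{(k)}(\cdot)$, and $a(\cdot)$ are locally constant, and therefore continuous, at such $s$. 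The only remaining candidate discontinuity of $\phi$ is the boundary $\{T(s) = T^{(k)}(s)\}$: if this equality is structural it persists under perturbation, so $\phi \equiv a$ is continuous there, while if it is non-structural it forces a cross-class tie and thus lies in the probability-zero set already identified. I would conclude that $D_\phi$ is contained in the finite union of the sets $\{s : T(gs) = T(g's)\}$ over pairs $g \not\sim g'$, so that $P\{S \in D_\phi\} = 0$.

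Combining these pieces, the extended continuous mapping theorem together with the boundedness of $\phi$ yields $E_{P_n}[\phi(S_n)] \to E[\phi(S)] = \alpha$ under $\{P_n \in \mathbf{P}_{n,0} : n \ge 1\}$, which is the claim. The delicate point throughout is the verification that the discontinuities of the randomization test function survive only on a $P_S$-null set; the dichotomy built into the hypothesis (structural equality of $T(gs)$ and $T(g's)$ versus almost-sure strict inequality), paired with the continuity of $T$ and of the group elements, is exactly what makes this accounting go through.
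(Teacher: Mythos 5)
Your proposal is correct and follows essentially the same route as the paper's (sketched) argument: reduce to exactness of the randomization test for the limit $S$ via Theorem \ref{theorem:1}, then transfer to $S_n$ by showing that the tie/ordering structure of $\{T(gS):g\in\mathbf G\}$ is locally stable off the $P_S$-null union of cross-class tie sets, so that $\phi$ is $P_S$-a.s.\ continuous. Your packaging via the extended continuous mapping theorem is just the standard formalization of the almost-sure-representation argument the paper alludes to, and your accounting of the discontinuity set correctly isolates where the dichotomy hypothesis on ties is needed.
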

In order to help explain the need for the conditions concerning ties, it is useful to provide some intuition for the theorem.  The key insight is that the randomization test $\phi(S_n)$ only depends on $\{T(gS_n) : g \in \mathbf G\}$ through its ordered values.  To prove the theorem, it therefore suffices to show that when $S_n$ converges to $S$ almost surely (by appealing, e.g., to a suitable almost sure representation theorem) that these ordered values are preserved asymptotically as well.  Other ways of ensuring that this property is satisfied are available; see, e.g., \cite{canay2018approximate} for some such conditions.

\subsection{Clustered Data}

A large class of applications of the preceding machinery share the following structure.  Suppose that 
\begin{equation*} \label{eq:nulltheta0}
\mathbf P_{n,0} = \{P_n \in \mathbf P_n : \theta(P_n) = \theta_0\}~,
\end{equation*}
where $\theta(P_n) \in \mathbf R^d$ is some parameter of interest and $\theta_0$ is some pre-specified value.  Suppose further that the  the data $X^{(n)}$ can be grouped into $q$ clusters $X^{(n)}_1, \ldots, X^{(n)}_q$ and, for $1 \leq j \leq q$, there are estimators $\hat \theta_{n,j} = \hat \theta_{n,j}(X^{(n)}_j)$ using only the data in the $j$th cluster such that $$S_n(X^{(n)}) = \sqrt n (\hat \theta_{n,1} - \theta_0, \ldots, \hat \theta_{n,q} - \theta_0) \stackrel{d}{\rightarrow} N(0,\Sigma)$$ under $\{P_n \in \mathbf P_{n,0} : n \geq 1\}$, where $\Sigma = \text{diag}(\Sigma_1, \ldots, \Sigma_q)$.  When this is the case, it is easy to see that Assumption \ref{assumption:art} is satisfied with $\mathbf G = \{\pm 1\}^q$ and $gs = (g_1, \ldots, g_q)(s_1, \ldots, s_q) = (g_1s_1, \ldots, g_q s_q)$.  Furthermore, \cite{canay2017randomization} show that the remaining requirements of Theorem  \ref{thm:art} are satisfied with $$T_n(S_n) = q \bar S_n ' \hat \Sigma_n^{-1} \bar S_n~,$$ where $\bar S_n = \frac{1}{q} \sum_{1 \leq j \leq q} S_{n,j}$ and $\hat \Sigma_n = \frac{1}{q}\sum_{1 \leq j \leq q} \bar S_{n,j} \bar S_{n,j}'$.  When $d = 1$, \cite{canay2017randomization} show that these conditions are also satisfied for 
\begin{equation} \label{eq:tstatart}
T_n(S_n) = \frac{|\bar S_n|}{\sqrt{\frac{1}{q-1} \sum_{1 \leq j \leq q} (S_{n,j} - \bar S_n)^2}}~.
\end{equation}

We now informally discuss several instances in which the structure described above arises naturally in applications; see \cite{canay2017randomization} for a more formal discussion of these examples.  Of course, the most obvious such example is one in which the grouping of the data is given by the nature of the observations, as in the case of data consisting of observations from villages or schools.  Suppose that the parameter of interest is one or more of coefficients of an ordinary least squares estimand, and that this quantity is invariant with respect to the distribution of data across clusters.  In this case, under assumptions that ensure that dependence within a cluster is sufficiently weak to permit application of suitable central limit theorems and laws of large numbers, and dependence across clusters is sufficiently weak to ensure the diagonal structure of $\Sigma$, the preceding structure is satisfied with $\hat \theta_{n,j}$ being the ordinary least squares estimator using the $j$th cluster of data only.  

In the preceding example, a caveat is that the ordinary least squares estimand must be well-defined within each cluster.  This property may not hold when, e.g., there are variables that are constant within a cluster, as is the case when treatment status is assigned at the level of the cluster.  In this case, it may be necessary to pool one or more clusters together to form even larger clusters, and then apply the above approach to this coarsened clustering of the data.  \cite{canay2017randomization} employ such an approach in the context of differences-in-differences designs.  

Finally, this structure arises naturally in time series settings.  In that case, the clusters can be defined to be ``blocks'' of consecutive observations.  The quantity $q$ must therefore be specified by the user and plays the role of a tuning parameter. Suppose that that the parameter of interest is again one or more coefficients of an ordinary least squares estimand, and that this quantity is invariant with respect to the distribution of data across clusters.  This property would certainly hold under a suitable stationarity assumption.  Under assumptions that again ensure that dependence is sufficiently weak across time, so as to permit application of appropriate central limit theorems and laws of large numbers, the preceding structure is satisfied with $\hat \theta_{n,j}$ being the ordinary least squares estimator using the $j$th cluster of data only.  In particular, the diagonal structure of $\Sigma$ is itself ensured when the dependence across time is sufficiently weak. 

\begin{remark}
Confidence regions may, of course, be constructed using test inversion, as usual.  \cite{cai2023implementation} show that when $d = 1$ and the test statistic in \eqref{eq:tstatart} is employed, the resulting confidence regions are in fact convex, i.e., they are intervals.
\end{remark}

\begin{remark}
In the special case where $d = 1$, the idea of grouping the data in this way and constructing estimators satisfying (\ref{equation:sign1}) has been previously proposed by \cite{ibragimov2010t}.  They propose to test the null hypothesis considered in this subsection by rejecting when $T_n(S_n)$ defined in \eqref{eq:tstatart} exceeds the $1 - \alpha/2$ quantile of the $t$-distribution with $q-1$ degrees of freedom.  By exploiting a remarkable result by \cite{bakirov2006student}, they show that this test has limiting rejection probability no greater than the nominal level under the null hypothesis whenever $\alpha \leq .083$ and $q \geq 2$ or when $\alpha \leq .10$ and $2 \leq q \leq 14$.  Such results, however, are only available for $d = 1$.  Furthermore, the proposed test can be quite conservative under the null hypothesis in the sense that its limiting rejection probability may be far below the nominal level when the $\Sigma_j$ are not all equal.  \cite{canay2017randomization} show that this feature leads the test to be considerably less powerful than the randomization-based test described above in simulations, and further develop some theoretical optimality properties of the randomization-based test in the limiting normal model.
\end{remark}

\begin{remark}
A popular alternative to inference in settings with clustered data is the cluster wild bootstrap proposed by \cite{cameron2008bootstrap}. This approach is especially popular in settings with a ``small'' number of clusters, but to the best of our knowledge the only analysis of its behavior in such settings is contained in  \cite{canay2021wild}, who provide conditions under which the limiting rejection probability under the null hypothesis is no greater than the nominal level in an asymptotic framework like the one described above in which the number of clusters remained fixed with the sample size.  These results are derived by linking its behavior in a fashion similar to the tests described in this section to the behavior of a suitable randomization test based off of the group of sign changes.  Their analysis shows, in particular, that rather restrictive homogeneity conditions on the distribution of covariates across clusters appear to be required for the validity of the cluster wild bootstrap in such settings.
\end{remark}

\newpage
\bibliography{wolf_new}

\begin{thebibliography}{138}
\expandafter\ifx\csname natexlab\endcsname\relax\def\natexlab#1{#1}\fi
\expandafter\ifx\csname url\endcsname\relax
  \def\url#1{\texttt{#1}}\fi
\expandafter\ifx\csname urlprefix\endcsname\relax\def\urlprefix{URL }\fi
\providecommand{\eprint}[2][]{\url{#2}}

\bibitem[{Abadie et~al.(2020)Abadie, Athey, Imbens and
  Wooldridge}]{abadie2020sampling}
\textsc{Abadie, A.}, \textsc{Athey, S.}, \textsc{Imbens, G.~W.} and
  \textsc{Wooldridge, J.~M.} (2020).
\newblock Sampling-based versus design-based uncertainty in regression
  analysis.
\newblock \textit{Econometrica}, \textbf{88} 265--296.

\bibitem[{Abadie et~al.(2010)Abadie, Diamond and
  Hainmueller}]{abadie2010synthetic}
\textsc{Abadie, A.}, \textsc{Diamond, A.} and \textsc{Hainmueller, J.} (2010).
\newblock Synthetic control methods for comparative case studies: Estimating
  the effect of california’s tobacco control program.
\newblock \textit{Journal of the American Statistical Association},
  \textbf{105} 493--505.

\bibitem[{Albers et~al.(1976)Albers, Bickel and van
  Zwet}]{albers1976asymptotic}
\textsc{Albers, W.}, \textsc{Bickel, P.~J.} and \textsc{van Zwet, W.~R.}
  (1976).
\newblock Asymptotic expansions for the power of distribution free tests in the
  one-sample problem.
\newblock \textit{The Annals of Statistics} 108--156.

\bibitem[{Anderson and Robinson(2001)}]{anderson2001permutation}
\textsc{Anderson, M.~J.} and \textsc{Robinson, J.} (2001).
\newblock Permutation tests for linear models.
\newblock \textit{Australian \& New Zealand Journal of Statistics}, \textbf{43}
  75--88.

\bibitem[{Andrews et~al.(2022)Andrews, Fudenberg, Lei, Liang and
  Wu}]{andrews2022transfer}
\textsc{Andrews, I.}, \textsc{Fudenberg, D.}, \textsc{Lei, L.}, \textsc{Liang,
  A.} and \textsc{Wu, C.} (2022).
\newblock The transfer performance of economic models.
\newblock \textit{arXiv preprint arXiv:2202.04796}.

\bibitem[{Angelopoulos and Bates(2023)}]{angelopoulos2023conformal}
\textsc{Angelopoulos, A.~N.} and \textsc{Bates, S.} (2023).
\newblock Conformal prediction: A gentle introduction.
\newblock \textit{Foundations and Trends\raisebox{1ex}{\textregistered} in
  Machine Learning}, \textbf{16} 494--591.

\bibitem[{Athey et~al.(2018)Athey, Eckles and Imbens}]{athey2018exact}
\textsc{Athey, S.}, \textsc{Eckles, D.} and \textsc{Imbens, G.~W.} (2018).
\newblock Exact p-values for network interference.
\newblock \textit{Journal of the American Statistical Association},
  \textbf{113} 230--240.

\bibitem[{Bai et~al.(2024{\natexlab{a}})Bai, Liu and
  Tabord-Meehan}]{bai2024inference}
\textsc{Bai, Y.}, \textsc{Liu, J.} and \textsc{Tabord-Meehan, M.}
  (2024{\natexlab{a}}).
\newblock Inference for matched tuples and fully blocked factorial designs.
\newblock \textit{Quantitative Economics}, \textbf{15} 279--330.

\bibitem[{Bai et~al.(2022)Bai, Romano and Shaikh}]{bai2019inference}
\textsc{Bai, Y.}, \textsc{Romano, J.~P.} and \textsc{Shaikh, A.~M.} (2022).
\newblock Inference in experiments with matched pairs.
\newblock \textit{Journal of the American Statistical Association},
  \textbf{117} 1726--1737.

\bibitem[{Bai et~al.(2024{\natexlab{b}})Bai, Shaikh and
  Tabord-Meehan}]{bai2024survey}
\textsc{Bai, Y.}, \textsc{Shaikh, A.} and \textsc{Tabord-Meehan, M.}
  (2024{\natexlab{b}}).
\newblock A primer on the analysis of randomized experiments and a survey of
  some recent advances.
\newblock Working paper.
\newblock Available at {\tt \url{https://arxiv.org/abs/2405.03910}}.

\bibitem[{Bakirov and Szekely(2006)}]{bakirov2006student}
\textsc{Bakirov, N.~K.} and \textsc{Szekely, G.~J.} (2006).
\newblock Student’s t-test for gaussian scale mixtures.
\newblock \textit{Journal of Mathematical Sciences}, \textbf{139} 6497--6505.

\bibitem[{Barber et~al.(2021{\natexlab{a}})Barber, Candes, Ramdas and
  Tibshirani}]{barber2021predictive}
\textsc{Barber, R.~F.}, \textsc{Candes, E.~J.}, \textsc{Ramdas, A.} and
  \textsc{Tibshirani, R.~J.} (2021{\natexlab{a}}).
\newblock Predictive inference with the jackknife+.
\newblock \textit{The Annals of Statistics}, \textbf{49} 486--507.

\bibitem[{Barber et~al.(2021{\natexlab{b}})Barber, Candès, Ramdas and
  Tibshirani}]{barber2021limits}
\textsc{Barber, R.~F.}, \textsc{Candès, E.~J.}, \textsc{Ramdas, A.} and
  \textsc{Tibshirani, R.~J.} (2021{\natexlab{b}}).
\newblock The limits of distribution-free conditional predictive inference.
\newblock \textit{Information and Inference: A Journal of the IMA}, \textbf{10}
  455--482.

\bibitem[{Barber et~al.(2023)Barber, Candès, Ramdas and
  Tibshirani}]{barber2023conformal}
\textsc{Barber, R.~F.}, \textsc{Candès, E.~J.}, \textsc{Ramdas, A.} and
  \textsc{Tibshirani, R.~J.} (2023).
\newblock Conformal prediction beyond exchangeability.
\newblock \textit{The Annals of Statistics}, \textbf{51} 816--845.

\bibitem[{Barberis and Thaler(2003)}]{barberis2003survey}
\textsc{Barberis, N.} and \textsc{Thaler, R.} (2003).
\newblock A survey of behavioral finance.
\newblock \textit{Handbook of the Economics of Finance}, \textbf{1} 1053--1128.

\bibitem[{Basse et~al.(2024)Basse, Ding, Feller and
  Toulis}]{basse2024randomization}
\textsc{Basse, G.}, \textsc{Ding, P.}, \textsc{Feller, A.} and \textsc{Toulis,
  P.} (2024).
\newblock Randomization tests for peer effects in group formation experiments.
\newblock \textit{Econometrica}, \textbf{92} 567--590.

\bibitem[{Basse et~al.(2019)Basse, Feller and Toulis}]{basse2019randomization}
\textsc{Basse, G.~W.}, \textsc{Feller, A.} and \textsc{Toulis, P.} (2019).
\newblock Randomization tests of causal effects under interference.
\newblock \textit{Biometrika}, \textbf{106} 487--494.

\bibitem[{Benjamini and Yekutieli(2001)}]{benjamini:yekutieli:2001}
\textsc{Benjamini, Y.} and \textsc{Yekutieli, D.} (2001).
\newblock The control of the false discovery rate in multiple testing under
  dependency.
\newblock \textit{Annals of Statistics}, \textbf{29} 1165--1188.

\bibitem[{Berrett et~al.(2021)Berrett, Kontoyiannis and
  Samworth}]{berrett2021optimal}
\textsc{Berrett, T.~B.}, \textsc{Kontoyiannis, I.} and \textsc{Samworth, R.~J.}
  (2021).
\newblock Optimal rates for independence testing via u-statistic permutation
  tests.
\newblock \textit{The Annals of Statistics}, \textbf{49} 2457--2490.

\bibitem[{Berrett et~al.(2020)Berrett, Wang, Barber and
  Samworth}]{berrett2020conditional}
\textsc{Berrett, T.~B.}, \textsc{Wang, Y.}, \textsc{Barber, R.~F.} and
  \textsc{Samworth, R.~J.} (2020).
\newblock The conditional permutation test for independence while controlling
  for confounders.
\newblock \textit{Journal of the Royal Statistical Society Series B:
  Statistical Methodology}, \textbf{82} 175--197.

\bibitem[{Berry et~al.(2019)Berry, Johnston and Mielke}]{berry2019primer}
\textsc{Berry, K.~J.}, \textsc{Johnston, J.~E.} and \textsc{Mielke, P.~W.}
  (2019).
\newblock \textit{A Primer of Permutation Statistical Methods}.
\newblock 1st ed. Springer, Cham.

\bibitem[{Bertanha and Chung(2023)}]{bertanha2023permutation}
\textsc{Bertanha, M.} and \textsc{Chung, E.} (2023).
\newblock Permutation tests at nonparametric rates.
\newblock \textit{Journal of the American Statistical Association},
  \textbf{118} 2833--2846.

\bibitem[{Bickel and Van~Zwet(2011)}]{bickel2011asymptotic}
\textsc{Bickel, P.~J.} and \textsc{Van~Zwet, W.} (2011).
\newblock Asymptotic expansions for the power of distributionfree tests in the
  two-sample problem.
\newblock In \textit{Selected Works of Willem van Zwet}. Springer, 117--184.

\bibitem[{Biswas and Ghosh(2014)}]{biswas2014nonparametric}
\textsc{Biswas, M.} and \textsc{Ghosh, A.~K.} (2014).
\newblock A nonparametric two-sample test applicable to high dimensional data.
\newblock \textit{Journal of Multivariate Analysis}, \textbf{123} 160--171.

\bibitem[{Blackford et~al.(2009)Blackford, Salomon and
  Waller}]{blackford2009detecting}
\textsc{Blackford, J.~U.}, \textsc{Salomon, R.~M.} and \textsc{Waller, N.~G.}
  (2009).
\newblock Detecting change in biological rhythms: A multivariate permutation
  test approach to fourier-transformed data.
\newblock \textit{Chronobiology international}, \textbf{26} 258--281.

\bibitem[{Borusyak and Hull(2023)}]{borusyak2023nonrandom}
\textsc{Borusyak, K.} and \textsc{Hull, P.} (2023).
\newblock Nonrandom exposure to exogenous shocks.
\newblock \textit{Econometrica}, \textbf{91} 2155--2185.

\bibitem[{Box(1978)}]{box1980ra}
\textsc{Box, J.~F.} (1978).
\newblock \textit{{R.A.} {F}isher, the Life of a Scientist}.
\newblock John Wiley \& Sons, Inc., New York.

\bibitem[{Brown and Mood(1951)}]{brown1951median}
\textsc{Brown, G.~W.} and \textsc{Mood, A.~M.} (1951).
\newblock On median tests for linear hypotheses.
\newblock In \textit{Proceedings of the Second Berkeley Symposium on
  Mathematical Statistics and Probability}, vol.~2. University of California
  Press, 159--167.

\bibitem[{Bugni et~al.(2018)Bugni, Canay and Shaikh}]{bugni2018inference}
\textsc{Bugni, F.~A.}, \textsc{Canay, I.~A.} and \textsc{Shaikh, A.~M.} (2018).
\newblock Inference under covariate-adaptive randomization.
\newblock \textit{Journal of the American Statistical Association},
  \textbf{113} 1784--1796.

\bibitem[{Bugni et~al.(2019)Bugni, Canay and Shaikh}]{bugni2019inference}
\textsc{Bugni, F.~A.}, \textsc{Canay, I.~A.} and \textsc{Shaikh, A.~M.} (2019).
\newblock Inference under covariate‐adaptive randomization with multiple
  treatments.
\newblock \textit{Quantitative Economics}, \textbf{10} 1747--1785.

\bibitem[{Cai et~al.(2023)Cai, Canay, Kim and Shaikh}]{cai2023implementation}
\textsc{Cai, Y.}, \textsc{Canay, I.~A.}, \textsc{Kim, D.} and \textsc{Shaikh,
  A.~M.} (2023).
\newblock On the implementation of approximate randomization tests in linear
  models with a small number of clusters.
\newblock \textit{Journal of Econometric Methods}, \textbf{12} 85--103.

\bibitem[{Cameron et~al.(2008)Cameron, Gelbach and
  Miller}]{cameron2008bootstrap}
\textsc{Cameron, A.~C.}, \textsc{Gelbach, J.~B.} and \textsc{Miller, D.~L.}
  (2008).
\newblock Bootstrap-based improvements for inference with clustered errors.
\newblock \textit{The review of economics and statistics}, \textbf{90}
  414--427.

\bibitem[{Canay and Kamat(2018)}]{canay2018approximate}
\textsc{Canay, I.~A.} and \textsc{Kamat, V.} (2018).
\newblock Approximate permutation tests and induced order statistics in the
  regression discontinuity design.
\newblock \textit{The Review of Economic Studies}, \textbf{85} 1577--1608.

\bibitem[{Canay et~al.(2017)Canay, Romano and Shaikh}]{canay2017randomization}
\textsc{Canay, I.~A.}, \textsc{Romano, J.~P.} and \textsc{Shaikh, A.~M.}
  (2017).
\newblock Randomization tests under an approximate symmetry assumption.
\newblock \textit{Econometrica}, \textbf{85} 1013--1030.

\bibitem[{Canay et~al.(2021)Canay, Santos and Shaikh}]{canay2021wild}
\textsc{Canay, I.~A.}, \textsc{Santos, A.} and \textsc{Shaikh, A.~M.} (2021).
\newblock The wild bootstrap with a “small” number of “large” clusters.
\newblock \textit{The Review of Economics and Statistics}, \textbf{103}
  346--363.

\bibitem[{Candes et~al.(2018)Candes, Fan, Janson and Lv}]{candes2018panning}
\textsc{Candes, E.}, \textsc{Fan, Y.}, \textsc{Janson, L.} and \textsc{Lv, J.}
  (2018).
\newblock Panning for gold:‘model-x’knockoffs for high dimensional
  controlled variable selection.
\newblock \textit{Journal of the Royal Statistical Society Series B:
  Statistical Methodology}, \textbf{80} 551--577.

\bibitem[{Cattaneo et~al.(2021)Cattaneo, Feng and
  Titiunik}]{cattaneo2021prediction}
\textsc{Cattaneo, M.~D.}, \textsc{Feng, Y.} and \textsc{Titiunik, R.} (2021).
\newblock Prediction intervals for synthetic control methods.
\newblock \textit{Journal of the American Statistical Association},
  \textbf{116} 1865--1880.

\bibitem[{Cattaneo et~al.(2015)Cattaneo, Frandsen and
  Titiunik}]{cattaneo2015randomization}
\textsc{Cattaneo, M.~D.}, \textsc{Frandsen, B.~R.} and \textsc{Titiunik, R.}
  (2015).
\newblock Randomization inference in the regression discontinuity design: An
  application to party advantages in the u.s. senate.
\newblock \textit{Journal of Causal Inference}, \textbf{3} 1--24.

\bibitem[{Charness and Gneezy(2009)}]{charness}
\textsc{Charness, G.} and \textsc{Gneezy, U.} (2009).
\newblock Incentives to exercise.
\newblock \textit{Econometrica}, \textbf{77} 909--931.

\bibitem[{Chernozhukov et~al.(2021{\natexlab{a}})Chernozhukov, Wüthrich and
  Zhu}]{chernozhukov2021distributional}
\textsc{Chernozhukov, V.}, \textsc{Wüthrich, K.} and \textsc{Zhu, Y.}
  (2021{\natexlab{a}}).
\newblock Distributional conformal prediction.
\newblock \textit{Proceedings of the National Academy of Sciences},
  \textbf{118}.

\bibitem[{Chernozhukov et~al.(2021{\natexlab{b}})Chernozhukov, Wüthrich and
  Zhu}]{chernozhukov2021exact}
\textsc{Chernozhukov, V.}, \textsc{Wüthrich, K.} and \textsc{Zhu, Y.}
  (2021{\natexlab{b}}).
\newblock An exact and robust conformal inference method for counterfactual and
  synthetic controls.
\newblock \textit{Journal of the American Statistical Association},
  \textbf{116} 1849--1864.

\bibitem[{Chung(2017)}]{chung2017randomization}
\textsc{Chung, E.} (2017).
\newblock Randomization-based tests for “no treatment effects”.
\newblock \textit{Statistical Science}, \textbf{32} 349--351.

\bibitem[{Chung and Olivares(2021)}]{chung2021permutation}
\textsc{Chung, E.} and \textsc{Olivares, M.} (2021).
\newblock Permutation test for heterogeneous treatment effects with a nuisance
  parameter.
\newblock \textit{Journal of Econometrics}, \textbf{225} 148--174.

\bibitem[{Chung and Romano(2016{\natexlab{a}})}]{chung2016permutation}
\textsc{Chung, E.} and \textsc{Romano, J.} (2016{\natexlab{a}}).
\newblock Multivariate and multiple permutation tests.
\newblock \textit{Journal of Econometrics}, \textbf{193} 76--91.

\bibitem[{Chung and Romano(2013)}]{chung2013exact}
\textsc{Chung, E.} and \textsc{Romano, J.~P.} (2013).
\newblock Exact and asymptotically robust permutation tests.
\newblock \textit{The Annals of Statistics}, \textbf{41} 484--507.

\bibitem[{Chung and Romano(2016{\natexlab{b}})}]{chung2016asymptotically}
\textsc{Chung, E.} and \textsc{Romano, J.~P.} (2016{\natexlab{b}}).
\newblock Asymptotically valid and exact permutation tests based on two-sample
  {$U$}-statistics.
\newblock \textit{Journal of Statistical Planning and Inference}, \textbf{168}
  97--105.

\bibitem[{Cousido-Rocha et~al.(2019)Cousido-Rocha,
  de~U\~{n}a\textendash\'{A}lvarez and Hart}]{cousido2019two}
\textsc{Cousido-Rocha, M.}, \textsc{de~U\~{n}a\textendash\'{A}lvarez, J.} and
  \textsc{Hart, J.~D.} (2019).
\newblock A two-sample test for the equality of univariate marginal
  distributions for high-dimensional data.
\newblock \textit{Journal of Multivariate Analysis}, \textbf{174} 104537.

\bibitem[{Cytrynbaum(2023)}]{cytrynbaum2023optimal}
\textsc{Cytrynbaum, M.} (2023).
\newblock Optimal stratification of survey experiments.
\newblock Working paper, Yale University.
\newblock Available at {\tt \url{https://arxiv.org/abs/2111.08157}}.

\bibitem[{Daniels(1954)}]{daniels1954distribution}
\textsc{Daniels, H.} (1954).
\newblock A distribution-free test for regression parameters.
\newblock \textit{The Annals of Mathematical Statistics} 499--513.

\bibitem[{David(2008)}]{david2008beginnings}
\textsc{David, H.~A.} (2008).
\newblock The beginnings of randomization tests.
\newblock \textit{The American Statistician}, \textbf{62} 70--72.

\bibitem[{D'Haultf{\oe}uille and Tuvaandorj(2024)}]{d2024robust}
\textsc{D'Haultf{\oe}uille, X.} and \textsc{Tuvaandorj, P.} (2024).
\newblock A robust permutation test for subvector inference in linear
  regressions.
\newblock \textit{Quantitative Economics}, \textbf{15} 27--87.

\bibitem[{DiCiccio and Romano(2017)}]{diciccio2017robust}
\textsc{DiCiccio, C.~J.} and \textsc{Romano, J.~P.} (2017).
\newblock Robust permutation tests for correlation and regression coefficients.
\newblock \textit{Journal of the American Statistical Association},
  \textbf{112} 1211--1220.

\bibitem[{Ding(2017)}]{ding2017paradox}
\textsc{Ding, P.} (2017).
\newblock A paradox from randomization-based causal inference.
\newblock \textit{Statistical science} 331--345.

\bibitem[{Dobriban(2022)}]{dobriban2022consistency}
\textsc{Dobriban, E.} (2022).
\newblock Consistency of invariance-based randomization tests.
\newblock \textit{The Annals of Statistics}, \textbf{50} 2443--2466.

\bibitem[{Doran et~al.(2014)Doran, Muandet, Zhang and
  Sch\"olkopf}]{doran2014permutation}
\textsc{Doran, G.}, \textsc{Muandet, K.}, \textsc{Zhang, K.} and
  \textsc{Sch\"olkopf, B.} (2014).
\newblock A permutation-based kernel conditional independence test.
\newblock In \textit{Proceedings of the 30th Conference on Uncertainty in
  Artificial Intelligence}. AUAI Press, 132--141.

\bibitem[{Edgington(1995)}]{edgington1995randomization}
\textsc{Edgington, E.} (1995).
\newblock \textit{Randomization Tests}.
\newblock 3rd ed. Marcel Dekker, New York.

\bibitem[{Fama(1970)}]{fama1970efficient}
\textsc{Fama, E.~F.} (1970).
\newblock Efficient capital markets: A review of theory and empirical work.
\newblock \textit{The Journal of Finance}, \textbf{25} 383--417.

\bibitem[{Fisher(1935)}]{fisher1935design}
\textsc{Fisher, R.~A.} (1935).
\newblock \textit{The Design of Experiments}.
\newblock 1st ed. Oliver and Boyd, Edinburgh.

\bibitem[{Freedman and Lane(1980)}]{freedman:lane:1980}
\textsc{Freedman, D.} and \textsc{Lane, D.} (1980).
\newblock The empirical distirbution of {F}ourier coefficients.
\newblock \textit{The Annals of Statistics}, \textbf{8} 1244--1251.

\bibitem[{Ganong and Jäger(2018)}]{ganong2018permutation}
\textsc{Ganong, P.} and \textsc{Jäger, S.} (2018).
\newblock A permutation test for the regression kink design.
\newblock \textit{Journal of the American Statistical Association},
  \textbf{113} 494--504.

\bibitem[{Gilovich et~al.(1985)Gilovich, Vallone and Tversky}]{gilovich1985hot}
\textsc{Gilovich, T.}, \textsc{Vallone, R.} and \textsc{Tversky, A.} (1985).
\newblock The hot hand in basketball: On the misperception of random sequences.
\newblock \textit{Cognitive Psychology}, \textbf{17} 295--314.

\bibitem[{Good(2005)}]{good2005permutation}
\textsc{Good, P.} (2005).
\newblock \textit{Permutation, Parametric, and Bootstrap Tests of Hypotheses}.
\newblock 3rd ed. Springer, New York.

\bibitem[{Guo and Toulis(2023)}]{guo2023invariance}
\textsc{Guo, W.} and \textsc{Toulis, P.} (2023).
\newblock Invariance-based inference in high-dimensional regression with
  finite-sample guarantees.
\newblock Working paper, Booth School of Business, University of Chicago.
\newblock Available at {\tt \url{https://arxiv.org/abs/2312.15079}}.

\bibitem[{Hartigan(1970)}]{hartigan1970exact}
\textsc{Hartigan, J.} (1970).
\newblock Exact confidence intervals in regression problems with independent
  symmetric errors.
\newblock \textit{The Annals of Mathematical Statistics} 1992--1998.

\bibitem[{Hartigan(1969)}]{hartigan1969using}
\textsc{Hartigan, J.~A.} (1969).
\newblock Using subsample values as typical values.
\newblock \textit{Journal of the American Statistical Association}, \textbf{64}
  1303--1317.

\bibitem[{Heckman et~al.(2024)Heckman, Pinto and Shaikh}]{heckman2023dealing}
\textsc{Heckman, J.}, \textsc{Pinto, R.} and \textsc{Shaikh, A.~M.} (2024).
\newblock Dealing with imperfect randomization: Inference for the highscope
  perry preschool program.
\newblock \textit{Journal of Econometrics} 105683.

\bibitem[{Hemerik and Goeman(2018)}]{hemerik2018exact}
\textsc{Hemerik, J.} and \textsc{Goeman, J.} (2018).
\newblock Exact testing with random permutations.
\newblock \textit{Test}, \textbf{27} 811--825.

\bibitem[{Hemerik et~al.(2020)Hemerik, Goeman and Finos}]{hemerik2020robust}
\textsc{Hemerik, J.}, \textsc{Goeman, J.~J.} and \textsc{Finos, L.} (2020).
\newblock Robust testing in generalized linear models by sign flipping score
  contributions.
\newblock \textit{Journal of the Royal Statistical Society Series B:
  Statistical Methodology}, \textbf{82} 841--864.

\bibitem[{Hemerik et~al.(2019)Hemerik, Solari and
  Goeman}]{hemerik2019permutation}
\textsc{Hemerik, J.}, \textsc{Solari, A.} and \textsc{Goeman, J.~J.} (2019).
\newblock Permutation-based simultaneous confidence bounds for the false
  discovery proportion.
\newblock \textit{Biometrika}, \textbf{106} 635--649.

\bibitem[{Hoeffding(1952)}]{hoeffding:1952}
\textsc{Hoeffding, W.} (1952).
\newblock The large-sample power of tests based on permutations of
  observations.
\newblock \textit{Annals of Mathematical Statistics}, \textbf{23} 169--192.

\bibitem[{Holm(1979)}]{holm:1979}
\textsc{Holm, S.} (1979).
\newblock A simple sequentially rejective multiple test procedure.
\newblock \textit{Scandinavian Journal of Statistics}, \textbf{6} 65--70.

\bibitem[{Huang et~al.(2022)Huang, Li, Li and Yang}]{huang2022overview}
\textsc{Huang, Y.}, \textsc{Li, C.}, \textsc{Li, R.} and \textsc{Yang, S.}
  (2022).
\newblock An overview of tests on high-dimensional means.
\newblock \textit{Journal of Multivariate Analysis}, \textbf{188} 104813.

\bibitem[{Ibragimov(1962)}]{ibragimov1962theory}
\textsc{Ibragimov, I.~A.} (1962).
\newblock Some limit theorems for stationary processes.
\newblock \textit{Theory of Probability \& Its Applications}, \textbf{7}
  349--382.

\bibitem[{Ibragimov and M{\"u}ller(2010)}]{ibragimov2010t}
\textsc{Ibragimov, R.} and \textsc{M{\"u}ller, U.~K.} (2010).
\newblock t-statistic based correlation and heterogeneity robust inference.
\newblock \textit{Journal of Business \& Economic Statistics}, \textbf{28}
  453--468.

\bibitem[{Imbens and Rubin(2015)}]{imbens2015causal}
\textsc{Imbens, G.~W.} and \textsc{Rubin, D.~B.} (2015).
\newblock \textit{Causal {Inference} in {Statistics}, {Social}, and
  {Biomedical} {Sciences}}.
\newblock Cambridge University Press.

\bibitem[{Janssen(1997)}]{janssen1997studentized}
\textsc{Janssen, A.} (1997).
\newblock Studentized permutation tests for non-iid hypotheses and the
  generalized behrens-fisher problem.
\newblock \textit{Statistics \& Probability Letters}, \textbf{36} 9--21.

\bibitem[{Kennedy(1995)}]{kennedy1995randomization}
\textsc{Kennedy, P.~E.} (1995).
\newblock Randomization tests in econometrics.
\newblock \textit{Journal of Business \& Economic Statistics}, \textbf{13}
  85--94.

\bibitem[{Kennedy and Cade(1996)}]{kennedy1996randomization}
\textsc{Kennedy, P.~E.} and \textsc{Cade, B.~S.} (1996).
\newblock Randomization tests for multiple regression.
\newblock \textit{Communications in Statistics-Simulation and Computation},
  \textbf{25} 923--936.

\bibitem[{Kim et~al.(2022)Kim, Balakrishnan and Wasserman}]{kim2022minimax}
\textsc{Kim, I.}, \textsc{Balakrishnan, S.} and \textsc{Wasserman, L.} (2022).
\newblock Minimax optimality of permutation tests.
\newblock \textit{The Annals of Statistics}, \textbf{50} 225--251.

\bibitem[{Kim et~al.(2023)Kim, Neykov, Balakrishnan and
  Wasserman}]{kim2023conditional}
\textsc{Kim, I.}, \textsc{Neykov, M.}, \textsc{Balakrishnan, S.} and
  \textsc{Wasserman, L.} (2023).
\newblock Conditional independence testing for discrete distributions: Beyond
  $\chi^2$- and $g$-tests.
\newblock Working paper.
\newblock Available at {\tt \url{https://arxiv.org/abs/2308.05373}}.

\bibitem[{Lehmann(1949)}]{lehmann1949comments}
\textsc{Lehmann, E.~L.} (1949).
\newblock Some comments on large sample tests.
\newblock In \textit{Proceedings of the Berkeley Symposium on Mathematical
  Statistics and Probability}. University of CA Press, 451--457.

\bibitem[{Lehmann and Romano(2022)}]{lehmann:romano:tsh:2022}
\textsc{Lehmann, E.~L.} and \textsc{Romano, J.~P.} (2022).
\newblock \textit{Testing Statistical Hypotheses}.
\newblock 4th ed. Springer, New York.

\bibitem[{Lehmann and Stein(1949)}]{lehmann1949theory}
\textsc{Lehmann, E.~L.} and \textsc{Stein, C.} (1949).
\newblock On the theory of some non-parametric hypotheses.
\newblock \textit{The Annals of Mathematical Statistics}, \textbf{20} 28--45.

\bibitem[{Lei et~al.(2018)Lei, G’Sell, Rinaldo, Tibshirani and
  Wasserman}]{lei2018distribution}
\textsc{Lei, J.}, \textsc{G’Sell, M.}, \textsc{Rinaldo, A.},
  \textsc{Tibshirani, R.~J.} and \textsc{Wasserman, L.} (2018).
\newblock Distribution-free predictive inference for regression.
\newblock \textit{Journal of the American Statistical Association},
  \textbf{113} 1094--1111.

\bibitem[{Lei and Bickel(2021)}]{lei2021assumption}
\textsc{Lei, L.} and \textsc{Bickel, P.~J.} (2021).
\newblock An assumption-free exact test for fixed-design linear models with
  exchangeable errors.
\newblock \textit{Biometrika}, \textbf{108} 397--412.

\bibitem[{Lei and Cand{\`e}s(2021)}]{lei2021conformal}
\textsc{Lei, L.} and \textsc{Cand{\`e}s, E.~J.} (2021).
\newblock Conformal inference of counterfactuals and individual treatment
  effects.
\newblock \textit{Journal of the Royal Statistical Society Series B:
  Statistical Methodology}, \textbf{83} 911--938.

\bibitem[{Lei and Sudijono(2024)}]{lei2024inference}
\textsc{Lei, L.} and \textsc{Sudijono, T.} (2024).
\newblock Inference for synthetic controls via refined placebo tests.
\newblock Working paper, Stanford University.
\newblock Available at {\tt \url{https://arxiv.org/abs/2401.07152}}.

\bibitem[{Li et~al.(2019)Li, Ding, Lin, Yang and Liu}]{li2019randomization}
\textsc{Li, X.}, \textsc{Ding, P.}, \textsc{Lin, Q.}, \textsc{Yang, D.} and
  \textsc{Liu, J.~S.} (2019).
\newblock Randomization inference for peer effects.
\newblock \textit{Journal of the American Statistical Association},
  \textbf{114} 1651--1664.

\bibitem[{Liu(1988)}]{liu1988bootstrap}
\textsc{Liu, R.~Y.} (1988).
\newblock Bootstrap procedures under some non-iid models.
\newblock \textit{The annals of statistics}, \textbf{16} 1696--1708.

\bibitem[{Maris and Oostenveld(2007)}]{maris2007nonparametric}
\textsc{Maris, E.} and \textsc{Oostenveld, R.} (2007).
\newblock Nonparametric statistical testing of eeg-and meg-data.
\newblock \textit{Journal of neuroscience methods}, \textbf{164} 177--190.

\bibitem[{Miller and Sanjurjo(2018)}]{miller2018surprised}
\textsc{Miller, J.~B.} and \textsc{Sanjurjo, A.} (2018).
\newblock Surprised by the hot hand fallacy? a truth in the law of small
  numbers.
\newblock \textit{Econometrica}, \textbf{86} 2019--2047.

\bibitem[{Mosteller(1948)}]{mosteller1948k}
\textsc{Mosteller, F.} (1948).
\newblock A k-sample slippage test for an extreme population.
\newblock \textit{The Annals of Mathematical Statistics} 58--65.

\bibitem[{Nickell(1981)}]{nickell1981biases}
\textsc{Nickell, S.} (1981).
\newblock Biases in dynamic models with fixed effects.
\newblock \textit{Econometrica: Journal of the econometric society} 1417--1426.

\bibitem[{Okeh(2009)}]{okeh2009statistical}
\textsc{Okeh, U.~M.} (2009).
\newblock Statistical analysis of the application of {W}ilcoxon and
  {M}ann-{W}hitney {U} test in medical research studies.
\newblock \textit{Biotechnology and Molecular Biology Reviews}, \textbf{4}
  128--131.

\bibitem[{Pesarin(2001)}]{pesarin2001multivariate}
\textsc{Pesarin, F.} (2001).
\newblock \textit{Multivariate Permutation Tests: With Applications in
  Biostatistics}.
\newblock 1st ed. John Wiley \& Sons, Ltd, New York.

\bibitem[{Pesarin and Salmaso(2010)}]{pesarin2010permutation}
\textsc{Pesarin, F.} and \textsc{Salmaso, L.} (2010).
\newblock \textit{Permutation Tests for Complex Data: Theory, Applications and
  Software}.
\newblock 1st ed. John Wiley \& Sons, Ltd, New York.

\bibitem[{Pitman(1937{\natexlab{a}})}]{pitman1937significance_1}
\textsc{Pitman, E. J.~G.} (1937{\natexlab{a}}).
\newblock Significance tests which may be applied to samples from any
  populations.
\newblock \textit{Supplement to the Journal of the Royal Statistical Society},
  \textbf{4} 119--130.

\bibitem[{Pitman(1937{\natexlab{b}})}]{pitman1937significance_2}
\textsc{Pitman, E. J.~G.} (1937{\natexlab{b}}).
\newblock Significance tests which may be applied to samples from any
  populations. ii. the correlation coefficient test.
\newblock \textit{Supplement to the Journal of the Royal Statistical Society},
  \textbf{4} 225--232.

\bibitem[{Pitman(1938)}]{pitman1938significance}
\textsc{Pitman, E. J.~G.} (1938).
\newblock Significance tests which may be applied to samples from any
  populations: Iii. the analysis of variance test.
\newblock \textit{Biometrika}, \textbf{29} 322--335.

\bibitem[{Pouliot(2024)}]{pouliot2024exact}
\textsc{Pouliot, G.~A.} (2024).
\newblock An exact t-test.
\newblock Working paper, University of Chicago.

\bibitem[{Puelz et~al.(2022)Puelz, Basse, Feller and Toulis}]{puelz2022graph}
\textsc{Puelz, D.}, \textsc{Basse, G.}, \textsc{Feller, A.} and \textsc{Toulis,
  P.} (2022).
\newblock A graph-theoretic approach to randomization tests of causal effects
  under general interference.
\newblock \textit{Journal of the Royal Statistical Society Series B:
  Statistical Methodology}, \textbf{84} 174--204.

\bibitem[{Rambachan and Roth(2020)}]{rambachan2020design}
\textsc{Rambachan, A.} and \textsc{Roth, J.} (2020).
\newblock Design-based uncertainty for quasi-experiments.
\newblock \textit{arXiv preprint arXiv:2008.00602}.

\bibitem[{Ramdas et~al.(2023)Ramdas, Barber, Candès and
  Tibshirani}]{ramdas2023permutation}
\textsc{Ramdas, A.}, \textsc{Barber, R.~F.}, \textsc{Candès, E.~J.} and
  \textsc{Tibshirani, R.~J.} (2023).
\newblock Permutation tests using arbitrary permutation distributions.
\newblock \textit{Sankhya A: The Indian Journal of Statistics}, \textbf{85}
  1156--1177.

\bibitem[{Ritzwoller and Romano(2022)}]{ritzwoller2022uncertainty}
\textsc{Ritzwoller, D.~M.} and \textsc{Romano, J.~P.} (2022).
\newblock Uncertainty in the hot hand fallacy: Detecting streaky alternatives
  to random bernoulli sequences.
\newblock \textit{The Review of Economic Studies}, \textbf{89} 976--1007.

\bibitem[{Ritzwoller and Romano(2023)}]{ritzwoller2023reproducible}
\textsc{Ritzwoller, D.~M.} and \textsc{Romano, J.~P.} (2023).
\newblock Reproducible aggregation of sample-split statistics.
\newblock \textit{arXiv preprint arXiv:2311.14204}.

\bibitem[{Ritzwoller and Syrgkanis(2024)}]{ritzwoller2024uniform}
\textsc{Ritzwoller, D.~M.} and \textsc{Syrgkanis, V.} (2024).
\newblock Uniform inference for subsampled moment regression.
\newblock \textit{arXiv preprint arXiv:2405.07860}.

\bibitem[{Romano and Wolf(2005)}]{romano:wolf:2005jasa}
\textsc{Romano, J.} and \textsc{Wolf, M.} (2005).
\newblock Exact and approximate stepdown methods for multiple testing.
\newblock \textit{Journal of the American Statistical Association},
  \textbf{100} 94--108.

\bibitem[{Romano(1989)}]{romano:1989}
\textsc{Romano, J.~P.} (1989).
\newblock Bootstrap and randomization tests of some nonparametric hypotheses.
\newblock \textit{The Annals of Statistics}, \textbf{17} 141--159.

\bibitem[{Romano(1990)}]{romano:1990}
\textsc{Romano, J.~P.} (1990).
\newblock On the behavior of randomization tests without a group invariance
  assumption.
\newblock \textit{Journal of the American Statistical Association}, \textbf{85}
  686--692.

\bibitem[{Romano et~al.(2011)Romano, Shaikh and Wolf}]{romano2011consonance}
\textsc{Romano, J.~P.}, \textsc{Shaikh, A.} and \textsc{Wolf, M.} (2011).
\newblock Consonance and the closure method in multiple testing.
\newblock \textit{The International Journal of Biostatistics}, \textbf{7(1)}
  12.

\bibitem[{Romano and Tirlea(2022)}]{romano2022permutation}
\textsc{Romano, J.~P.} and \textsc{Tirlea, M.~A.} (2022).
\newblock Permutation testing for dependence in time series.
\newblock \textit{Journal of Time Series Analysis}, \textbf{43} 781--807.

\bibitem[{Romano and Tirlea(2024{\natexlab{a}})}]{romano2024least}
\textsc{Romano, J.~P.} and \textsc{Tirlea, M.~A.} (2024{\natexlab{a}}).
\newblock Least squares-based permutation tests in time series.
\newblock Working paper, Stanford University.
\newblock Available at {\tt \url{https://arxiv.org/abs/2404.06238}}.

\bibitem[{Romano and Tirlea(2024{\natexlab{b}})}]{romano2024permutation}
\textsc{Romano, J.~P.} and \textsc{Tirlea, M.~A.} (2024{\natexlab{b}}).
\newblock Permutation testing for monotone trend.
\newblock Working paper, Stanford University.
\newblock Available at {\tt \url{https://arxiv.org/abs/2404.06239}}.

\bibitem[{Rosenbaum(1984)}]{rosenbaum1984conditional}
\textsc{Rosenbaum, P.~R.} (1984).
\newblock Conditional permutation tests and the propensity score in
  observational studies.
\newblock \textit{Journal of the American Statistical Association}, \textbf{79}
  565--574.

\bibitem[{Salmaso et~al.(2011)Salmaso, Arboretti, Corain and
  Mazzaro}]{salmaso2011permutation}
\textsc{Salmaso, L.}, \textsc{Arboretti, R.}, \textsc{Corain, L.} and
  \textsc{Mazzaro, D.} (2011).
\newblock \textit{Permutation Testing for Isotonic Inference on Association
  Studies in Genetics}.
\newblock 1st ed. Springer, Heidelberg.

\bibitem[{Salsburg(2001)}]{salsburg}
\textsc{Salsburg, D.} (2001).
\newblock \textit{The Lady Tasting Tea}.
\newblock W. H. Freeman and Company, New York.

\bibitem[{Shah and Peters(2020)}]{shah2020hardness}
\textsc{Shah, R.~D.} and \textsc{Peters, J.} (2020).
\newblock The hardness of conditional independence testing and the generalised
  covariance measure.
\newblock \textit{The Annals of Statistics}, \textbf{48} 1514--1538.

\bibitem[{Solari et~al.(2009)Solari, Salmaso, Pesarin and
  Basso}]{solari2009permutation}
\textsc{Solari, A.}, \textsc{Salmaso, L.}, \textsc{Pesarin, F.} and
  \textsc{Basso, D.} (2009).
\newblock \textit{Permutation Tests for Stochastic Ordering and ANOVA: Theory
  and Applications with R}, vol. 194.
\newblock 1st ed. Springer, New York.

\bibitem[{Stranger et~al.(2007)Stranger, Nica, Forrest, Dimas, Bird, Beazley,
  Ingle, Dunning, Flicek, Koller et~al.}]{stranger2007population}
\textsc{Stranger, B.~E.}, \textsc{Nica, A.~C.}, \textsc{Forrest, M.~S.},
  \textsc{Dimas, A.}, \textsc{Bird, C.~P.}, \textsc{Beazley, C.},
  \textsc{Ingle, C.~E.}, \textsc{Dunning, M.}, \textsc{Flicek, P.},
  \textsc{Koller, D.} \textsc{et~al.} (2007).
\newblock Population genomics of human gene expression.
\newblock \textit{Nature genetics}, \textbf{39} 1217--1224.

\bibitem[{Ter~Braak(1992)}]{ter1992permutation}
\textsc{Ter~Braak, C.~J.} (1992).
\newblock Permutation versus bootstrap significance tests in multiple
  regression and anova.
\newblock In \textit{Bootstrapping and Related Techniques: Proceedings of an
  International Conference, Held in Trier, FRG, June 4--8, 1990}. Springer,
  79--85.

\bibitem[{Thaler and Sunstein(2009)}]{thaler2009nudge}
\textsc{Thaler, R.~H.} and \textsc{Sunstein, C.~R.} (2009).
\newblock \textit{Nudge: Improving decisions about health, wealth, and
  happiness}.
\newblock Yale University Press.

\bibitem[{Theil(1950)}]{theil1950rank}
\textsc{Theil, H.} (1950).
\newblock A rank-invariant method of linear and polynomial regression analysis.
\newblock \textit{Indagationes mathematicae}, \textbf{12} 173.

\bibitem[{Tukey(1991)}]{tukey1991philosophy}
\textsc{Tukey, J.~W.} (1991).
\newblock The philosophy of multiple comparisons.
\newblock \textit{Statistical Science}, \textbf{6} 100--116.

\bibitem[{Tversky and Kahneman(1971)}]{tversky1971belief}
\textsc{Tversky, A.} and \textsc{Kahneman, D.} (1971).
\newblock Belief in the law of small numbers.
\newblock \textit{Psychological bulletin}, \textbf{76} 105.

\bibitem[{Vesely et~al.(2023)Vesely, Finos and Goeman}]{vesely2023permutation}
\textsc{Vesely, A.}, \textsc{Finos, L.} and \textsc{Goeman, J.~J.} (2023).
\newblock Permutation-based true discovery guarantee by sum tests.
\newblock \textit{Journal of the Royal Statistical Society Series B:
  Statistical Methodology}, \textbf{85} 664--683.

\bibitem[{Vovk et~al.(2005)Vovk, Gammerman and Shafer}]{vovk2005algorithmic}
\textsc{Vovk, V.}, \textsc{Gammerman, A.} and \textsc{Shafer, G.} (2005).
\newblock \textit{Algorithmic Learning in a Random World}.
\newblock Springer-Verlag.

\bibitem[{Wang et~al.(2015)Wang, Peng and Li}]{wang2015high}
\textsc{Wang, L.}, \textsc{Peng, B.} and \textsc{Li, R.} (2015).
\newblock A high-dimensional nonparametric multivariate test for mean vector.
\newblock \textit{Journal of the American Statistical Association},
  \textbf{110} 1658--1669.

\bibitem[{Wang and Xu(2019)}]{wang2019feasible}
\textsc{Wang, R.} and \textsc{Xu, X.} (2019).
\newblock A feasible high dimensional randomization test for the mean vector.
\newblock \textit{Journal of Statistical Planning and Inference}, \textbf{199}
  160--178.

\bibitem[{Westfall and Young(1993)}]{westfall:young:1993}
\textsc{Westfall, P.~H.} and \textsc{Young, S.~S.} (1993).
\newblock \textit{Resampling-Based Multiple Testing: Examples and Methods for
  {P}-Value Adjustment}.
\newblock John Wiley, New~York.

\bibitem[{Wu(1986)}]{wu1986jackknife}
\textsc{Wu, C.-F.~J.} (1986).
\newblock Jackknife, bootstrap and other resampling methods in regression
  analysis.
\newblock \textit{the Annals of Statistics}, \textbf{14} 1261--1295.

\bibitem[{Wu and Ding(2021)}]{wu2021randomization}
\textsc{Wu, J.} and \textsc{Ding, P.} (2021).
\newblock Randomization tests for weak null hypotheses in randomized
  experiments.
\newblock \textit{Journal of the American Statistical Association},
  \textbf{116} 1898--1913.

\bibitem[{Xu and Basse(2021)}]{xu2021randomization}
\textsc{Xu, H.} and \textsc{Basse, G.} (2021).
\newblock Randomization inference for composite experiments with spillovers and
  peer effects.
\newblock Working paper.
\newblock Available at {\tt \url{https://arxiv.org/abs/2103.00567}}.

\bibitem[{Yadlowsky et~al.(2021)Yadlowsky, Fleming, Shah, Brunskill and
  Wager}]{yadlowsky2021evaluating}
\textsc{Yadlowsky, S.}, \textsc{Fleming, S.}, \textsc{Shah, N.},
  \textsc{Brunskill, E.} and \textsc{Wager, S.} (2021).
\newblock Evaluating treatment prioritization rules via rank-weighted average
  treatment effects.
\newblock Working paper.
\newblock Available at {\tt \url{https://arxiv.org/abs/2111.07966}}.

\bibitem[{Young(2019)}]{young2019channeling}
\textsc{Young, A.} (2019).
\newblock Channeling fisher: Randomization tests and the statistical
  insignificance of seemingly significant experimental results.
\newblock \textit{The Quarterly Journal of Economics}, \textbf{134} 557--598.

\bibitem[{Young(2023)}]{young2023consistency}
\textsc{Young, A.} (2023).
\newblock Consistency of the {OLS} bootstrap for independently but
  not-identically distributed data: A permutation perspective.
\newblock Working paper, London School of Economics.
\newblock Available at {\tt
  \url{https://personal.lse.ac.uk/YoungA/INIDBootstrap.pdf}}.

\bibitem[{Young(2024)}]{young2024asymptotically}
\textsc{Young, A.} (2024).
\newblock Asymptotically robust permutation-based randomization confidence
  intervals for parametric ols regression.
\newblock \textit{European Economic Review}, \textbf{163} 104644.

\bibitem[{Zhang and Zhao(2023)}]{zhang2023randomization}
\textsc{Zhang, Y.} and \textsc{Zhao, Q.} (2023).
\newblock What is a randomization test?
\newblock \textit{Journal of the American Statistical Association},
  \textbf{118} 2928--2942.

\bibitem[{Zhao and Ding(2021)}]{zhao2021covariate}
\textsc{Zhao, A.} and \textsc{Ding, P.} (2021).
\newblock Covariate-adjusted fisher randomization tests for the average
  treatment effect.
\newblock \textit{Journal of Econometrics}, \textbf{225} 278--294.

\end{thebibliography}

\end{document}